\title[Maximally recoverable LRCs from subspace direct sum systems]{Maximally recoverable local reconstruction codes from {subspace direct sum systems}}
\author{Shu Liu}\address{National Key Laboratory of Science and Technology on Communications, University of Electronic Science and Technology of China, Chengdu, China} \email{shuliu@uestc.edu.cn}
\author{Chaoping Xing} \address{School of Electronic Information and Electric Engineering, Shanghai Jiao Tong University, Shanghai, China} \email{xingcp@sjtu.edu.cn}
\date{}
\date{}
\newtheorem{lemma}{Lemma}[section]
\newtheorem{theorem}[lemma]{Theorem}
\newtheorem{cor}[lemma]{Corollary}
\newtheorem{defn}{Definition}
\theoremstyle{remark}
\newtheorem{rmk}{Remark}
\renewcommand{\epsilon}{\varepsilon}
\renewcommand{\le}{\leqslant}
\renewcommand{\ge}{\geqslant}
\newcommand{\vnote}[1]{}
\def\F{\mathbb{F}}
\def \mC {\mathcal{C}}
\def \mB {\mathcal{B}}
\def \mC {\mathcal{C}}
\def \mL {\mathcal{L}}
\def \mP {\mathcal{P}}
\def \mX {\mathcal{X}}
\def\Pin{{P_{\infty}}}
\def \Xi {{X^{[i]}}}
\newcommand{\Ga}{\alpha}
\newcommand{\Gb}{\beta}
\newcommand{\Gd}{\delta}
\newcommand{\Gl}{\lambda}
\def \bc {{\bf c}}
\def \bu {{\bf u}}
\def \bv {{\bf v}}
\def \bo {{\bf 0}}
\def\SP{{\mathbb{S}}}
\def\wt{{\rm wt}}
\begin{document}

\maketitle

\begin{abstract} Maximally recoverable local reconstruction codes (MR LRCs for short) have received great attention in the last few years. Various constructions have been proposed in literatures. The main focus of this topic is to construct MR LRCs over small fields. An $(N=nr,r,h,\Gd)$-MR LRC is a linear code over finite field $\F_\ell$ of length $N$, whose
codeword symbols are partitioned into $n$ local groups each of size $r$. Each local group can repair any $\Gd$ erasure errors and there are further
$h$ global parity checks to provide fault tolerance from more global erasure patterns.

MR LRCs deployed in practice have a small number of global parities such as $h=O(1)$. In this parameter setting, all previous constructions require the field size $\ell =\Omega_h (N^{h-1-o(1)})$. It remains challenging to improve this  bound. In this paper, via subspace direct sum systems, we present a construction of MR LRC with  the field size $\ell= O(N^{h-2+\frac1{h-1}-o(1)})$. In particular, for the most interesting cases where $h=2,3$, we improve previous constructions by either reducing field size or removing constraints. In addition, we also offer some constructions of MR LRCs for larger global parity $h$ that have field size incomparable with known upper bounds. The main techniques used in this paper is through subspace direct sum systems that we introduce. Interestingly, subspace direct sum systems are actually equivalent to $\F_q$-linear codes over extension fields. Based on various constructions of  subspace direct sum systems, we are able to construct several classes of MR LRCs.
\end{abstract}

\section{Introduction}
For a distributed storage system, data is partitioned and stored in different  servers as each serve has only a small storage capacity of a few terabyte. A server may  crash or become temporarily unavailable due to system updates/network bottlenecks. Thus, we have to design well-structured architectures to overcome these two problems. In the case of crash, we have to repair the lost data stored in the crashed servers. In the second case of temporary unavailability, we have to  service user requests with low latency despite some servers becoming temporarily
unavailable. Instead of just replicating data which is wasteful, distributed storage systems use
erasure codes. In view of this demand, Local Reconstruction Codes (LRCs for short)
were invented \cite{HCL07,CHL07} precisely for achieving the objectives of local computations while still maintaining storage efficiency
and have been implemented in several large scale systems such as Microsoft Azure \cite{HSX12} and
Hadoop \cite{SAP13}. These codes  can recover quickly from a
small number of erasures by reading only a small number of available servers; and at the same
time, they can also recover from the unlikely event of a large number of erasures (but can do so
less efficiently). Locality in distributed storage was first introduced in \cite{HCL07,CHL07}, but locally reconstruction codes
were first formally defined and studied in \cite{GHSY12} and \cite{PD14}.

The current paper concerns a much stronger requirement on global fault-tolerance, called {\em Maximal Recoverability} (MR for short). This requires that the code should simultaneously correct every erasure pattern that is information-theoretically possible to correct, given the locality conditions imposed on the codeword symbols. Let us describe it more formally. Define an $(N,r,h,\Gd)_\ell$-LRC to be a linear code over $\F_\ell$ of length $N$
whose $N$ codeword symbols are partitioned into $n$ disjoint groups each of which include $\Gd$ local parity checks capable of locally correcting $\Gd$ erasures. The codeword symbols further obey $h$ heavy (global) parity checks. With this structure of parity checks, it is not hard to see that the erasure patterns one can hope to correct are precisely those which consist of up to $\Gd$ erasures per local group plus up to $h$ additional erasures anywhere in the codeword. An MR LRC is a \emph{single} code that is capable of simultaneously correcting \emph{all} such patterns.

Since encoding a linear code and decoding it from erasures involve performing numerous finite field arithmetic operations, it is highly  desirable to have codes over small fields (preferably of characteristic 2). Obtaining MR LRCs over finite fields of minimal size has therefore emerged as a central problem in the area of codes for distributed storage.

\subsection{Known results}
The  maximally recoverable LRCs were introduced in \cite{BHH13} and motivated by applications to storage on solid-state devices, where they were called partial MDS codes. The terminology ``maximally recoverable codes" was coined in \cite{GHJY14}, and the concept was more systematically studied in \cite{GHJY14,GHK17}.

First of all,  a lower bound on the field size was presented in \cite{GGY20}. Stating the bound when $h \le \frac{N}{r}$ for simplicity, they show that the field size $\ell$ of an $(N=nr,r,h,\Gd)_\ell$ MR LRC must obey
\begin{equation}\label{eq:1}
\ell=\Omega_{\Gd,h}\left(N\cdot r^{\min\{\Gd,h-2\}}\right).
\end{equation}
 The lower bound \eqref{eq:1} is still quite far from the currently best-known upper bounds. In particular, the exponent of $\Gd$ or $h$ is to the base growing with $n$ in the known constructions, but only to the base $r$ in the above lower bound. Thus, one can conjecture that there is still room to improve  the lower bounds.

There are several known constructions of MR LRCs which are incomparable to
each other in terms of the field size \cite{GG21,CMST21,GHJY14,GYBS19,GJX20,MK19,GGY20,Bla13,TPD16,
HY16,BPSY16}. Some constructions are better than others based on the range
of parameters. As it is difficult to list all known constructions one by one, let us summarize the parameters of the MR LRCs that have been constructed so far  in Table I in the next subsection.

MR LRCs deployed in practice have a small number of global parities, typically $h = 2, 3$ \cite{HSX12}. For $h=2$, MR LRCs with linear field size were constructed for fields of odd characteristic. If a field has the even characteristic, to get a linear   field size, one requires that $N=\Theta(r^2)$ \cite{HSX12}. For $h=3$, the smallest field size is $\ell=O(N^{3/2})$ if $r$ is constant and $\ell$ is even \cite{GHJY14}. For other cases, the best known result is that the field size is $O(n^3)$ \cite{CMST21}.

Let us summarize some of recent constructions in \cite{GG21,CMST21,Mar20}.

Via skew polynomials,  Gopi and Guruswami \cite{GG21} presented a construction with the field size
\begin{equation}\label{eq:2}
\ell=O\left(\max\{r,N/r\}^{\min\{h,r-\Gd\}}\right).
\end{equation}
Prior to the work \cite{GG21}, {Cai et. al.} \cite{CMST21} used the similar idea to construct MR LRCs with field size $\ell=O\left(\max\{r,N/r\}^h\right)$.

Soon after the work of {\cite{CMST21}}, there are two constructions given in \cite{Mar20} with field sizes
\begin{equation}\label{eq:3}
\ell=O\left(\max\{(2r)^{r-\Gd},N/r^2\}^{\min\{h,\lfloor N/r^2\rfloor\}}\right)
\end{equation}
and
\begin{equation}\label{eq:4}
\ell=O\left((2r)^{r-\Gd}(\lfloor N/r^2\rfloor+1)^{h-1}\right).
\end{equation}

\subsection{Our results and comparison} A particularly interesting setting of parameters is $h=O(1)$ and $r=N^{o(1)}$. This setting was specifically mentioned in \cite{GG21}. Let us quote a sentence about this setting of parameters from \cite{GG21}.
 ``Despite all these constructions, a particularly interesting setting of parameters, which remains
challenging is the case when $h = O(1)$ and $r = N^{o(1)}$. The lower bound \eqref{eq:1} only shows that the field size
$\ell = \Omega_h(N^{1+o(1)})$ whereas all the existing constructions require $\ell =\Omega_h (N^{h-1-o(1)})$." One of the main results of this paper is improvement of this upper bound to $\ell =O( N^{h-2+{1}/{(h-1)}+o(1)})$ (see Table I below). Thus, we beat all known constructions for this parameter setting. In particular, for $h=2$, we reduce the field size to $\ell=O(N)$  and $\ell=O(N^{1+o(1)})$ for constant $r$ and $r=o{(\log N/\log\log N)}$, respectively. Note that for even $\ell$, the construction given in \cite{GGY20}  gives linear size $\ell=\Theta(N)$ subject to the constraint $N=\Theta(r^2)$. For odd $\ell$, the field size is $\ell=\Theta(N)$ in \cite{GGY20}.

 For $h=3$, we obtain the field size $\ell=O(N^{3/2})$ and $\ell=O(N^{3/2+o(1)})$ for constant $r$ and $r=o({\log N/\log\log N})$, respectively. Note that in \cite{GHJY14}, one requires $\ell$ to be even in order to
have the field size $\ell=O(N^{3/2})$. In addition, this result improves all known constructions when $r=o(\log N/\log\log N)$. One particular case that we want to mention is $h=5$. From the upper bound mentioned above, we have $\ell =O( N^{13/4+o(1)})$ for $h=5$. However, from the cyclic codes constructed in \cite{D95}, we obtain the field size $\ell =O( N^{3+o(1)})$ for $h=5$. This gives the best known field size for $h=5$ as far as we know.

For large $h$, we have a few upper bounds that are incomparable to  known ones in terms of the field size. For instance, for our bound 
$\ell=O((2r)^{hr})$ with $r=\Omega(\log n/\log \log n)$, it is better than the one given in {\cite[Section IV.A]{MK19}} under the regime $n>(2r)^{\frac{rh}{r-\delta}}$ (note that in general, $\frac{r}{r-\delta}$ is a constant, it implies that the regime $n>(2r)^{\frac{rh}{r-\delta}}$ is the same as $h<O({\log n}/{\log r})$); while it outperforms all other bounds (for arbitrary $h$) given in Table I. 
The second bound that we derive in this paper is $\ell=O\left((2r)^{h+\frac{n}{r^{r/2}-1}}\right)$ when $r=O(\log n/\log\log n)$. This bound is reduced to $\ell=O\left((2r)^{h(1+o(1))}\right)$ when $n/r^{r/2}=o(h)$, i.e., $n=o(hr^{r/2})$.
Compared with the bound given {\cite[Section IV.A]{MK19}} (see Table I below), our bound is better for the regime $h<r\cdot\frac{\log n}{\log r}$. Moreover, this second bound
 beats all other previously known bounds when $n=o(hr^{r/2})$. We also have some other bounds that are incomparable with previous known bounds given in Table I. We list them in Table I as well.

{\footnotesize

\begin{center}~\label{table:1}

Table I\\  $(N=nr, r, h, \delta)$-MR codes \\ \smallskip

{\rm

\begin{tabular}{|c|c|c|cl}\hline\hline

$h$ & Field size $\ell$ & Restrictions & \multicolumn{1}{c|}{References}\\ \hline


{\multirow{6}{*}{$2$}}  & $\Theta(N\delta)$ & -- &  \multicolumn{1}{c|}{\cite[Theorem 7]{BPSY16}} \\\cline{2-4}

&$\Theta(N)$ & $\ell$ is odd &\multicolumn{1}{c|}{\cite[Theorem IV.4]{GGY20}}  \\ \cline{2-4}

 &$N\cdot\exp(O(\sqrt{\log_q N}))$ &  $\ell$ is even & \multicolumn{1}{c|}{\cite[Theorem IV.4]{GGY20}}\\\cline{2-4}

& $\Theta(N)$& $\ell$ is even, $N=\Theta(r^2)$ & \multicolumn{1}{c|}{\cite[Theorem IV.4]{GGY20}}\\ \cline{2-4}

& $O(N)$& $r$ is a constant & \multicolumn{1}{c|}{Theorem 4.3(i)}\\  \cline{2-4}

& $O(N^{1+o(1)})$& $r=o(\log n/\log \log n)$  & \multicolumn{1}{c|}{Theorem 4.3(ii)}\\  \hline


\multirow{4}{*}{$3$}  & $O(N^{3/2})$ & $r$ is a constant, $\ell$ is even &  \multicolumn{1}{c|}{\cite[Corollary 23]{GHJY14}} \\ \cline{2-4}

   & {$O((N/r)^3)$} & {$r=O(\sqrt{N})$} &  \multicolumn{1}{c|}{\cite[Construction A]{CMST21}}\\\cline{2-4}

  & {$(O(N/r))^3$} & {$n+1$ and $r-1$ are prime powers} &  \multicolumn{1}{c|}{\cite[Theorem 1.3]{GG21}}\\\cline{2-4}

 & $O(N^{3/2})$ &$r$ is a constant  &  \multicolumn{1}{c|}{Theorem 4.3 (i)} \\ \cline{2-4}

 & $O(N^{3/2+o(1)})$& $r=o({\log n/\log\log n})$  &  \multicolumn{1}{c|}{Theorem 4.3 (ii)} \\   \hline


\multirow{2}{*}{$5$}  & $O(N^{3})$ &$r$ is a constant  &  \multicolumn{1}{c|}{Theorem 4.3 (iii)} \\ \cline{2-4}

 & $O(N^{3+o(1)})$& $r=o({\log n/\log\log n})$  &  \multicolumn{1}{c|}{Theorem 4.3 (iii)} \\   \hline


\multirow{2}{*}{$O(1)$}  & $O(N^{h-2+{\frac{1}{h-1}}})$ &$r$ is a constant  &  \multicolumn{1}{c|}{Theorem 4.3 (i)} \\\cline{2-4}

 & $O(N^{h-2+{\frac{1}{h-1}}+o(1)})$& $r=o({\log n/\log\log n})$  &  \multicolumn{1}{c|}{Theorem 4.3 (ii)} \\   \hline


 \multirow{16}{*}{Arbitrary}  & $\Theta(r^{N(r-\delta)/r})$ &--  &  \multicolumn{1}{c|}{\cite[Corollary 11]{CK16}} \\ \cline{2-4}

&  \multirow{2}{*}{$\Theta(rN^{h(\delta+1)-1})$} & {$r$ is a prime power,} &  \multicolumn{1}{c|}{\multirow{2}{*}{\cite[Lemma 7]{GYBS19}}} \\

& & $2n$ is a power of $r$&\multicolumn{1}{c|}{} \\ \cline{2-4}

 & \multirow{2}{*}{$\Theta(\max(N/r,r^{h+\delta})^h)$} &{$r$ is a prime power,}&  \multicolumn{1}{c|}{\multirow{2}{*}{\cite[Corollary 10]{GYBS19}}} \\

&&${N}/{r}+{1}$ is a power of $r$&\multicolumn{1}{c|}{}\\\cline{2-4}

& $\Theta(\max(N/r,(2r)^{h+\delta})^{\min(N/r,h)})$ &$r=o({\log n/\log\log n})$&  \multicolumn{1}{c|}{\cite[Theorem 17]{GJX20}} \\   \cline{2-4}

& $\Theta(\max(N/r,(2r)^{r})^{\min(N/r,h)})$ &$r=o({\log n/\log\log n})$ &  \multicolumn{1}{c|}{\cite[Theorem 19]{GJX20}} \\ \cline{2-4}

& $\Theta(\max(r,N/r)^{r-\delta})$ &-- &  \multicolumn{1}{c|}{\cite[Section IV.A]{MK19}} \\\cline{2-4}

& {$\Theta(2^h(\max(r,N/r))^{h})$} & $\delta\ge\lfloor{\frac{h}{r-\delta}}\rfloor+1$ &  \multicolumn{1}{c|}{\cite[Construction A]{CMST21}} \\ \cline{2-4}




& $O((2\max\{r,n\})^{\min\{h,r-\delta\}})$ & $h<n(r-\delta)$& \multicolumn{1}{c|}{\cite[Theorem 1.3]{GG21}} \\\cline{2-4}

& $(\max\{(2r)^{r-\delta},N/r^2\})^{\min\{h,N/r^2\}}$ & $r=\Theta(\log n/\log\log n)$& \multicolumn{1}{c|}{\cite[Corollary 40]{Mar20}} \\\cline{2-4}

& $(2r)^{r-\delta}(\lfloor{N}/{r^2}\rfloor+1)^{h-1}$ & $r=\Theta(\log n/\log\log n)$& \multicolumn{1}{c|}{\cite[Corollary 43]{Mar20}} \\\cline{2-4}


& $O((2r)^{hr})({N}/{r})^{h-1}$ & --& \multicolumn{1}{c|}{Theorem 4.4 (i)} \\\cline{2-4}

& $O((2r)^{hr})$ & $r=\Omega(\log n/\log\log n)$& \multicolumn{1}{c|}{Theorem 4.4 (iii)} \\\cline{2-4}

& $O\left((2r)^{h+\frac{n}{r^{r/2}-1}}\right)$ & $r=O(\log n/\log\log n)$& \multicolumn{1}{c|}{Theorem 4.4 (v)} \\\cline{2-4}

& $O\left((2r)^{h(h+\Gd)}\right)$ & $h+\Gd<r$ and $h+\Gd=O(\log n/\log r)$ & \multicolumn{1}{c|}{Theorem 4.7 (i)}
\\ \hline\hline

\end{tabular}

}

\end{center}

}

``--" in the above table means that there is no restriction.

\subsection{Our idea and techniques}
The idea is to construct a proper parity-check matrix over $\F_\ell$. To have a required parity-check matrix, we divide columns of the bottom matrix of a parity-check into $n$ blocks $D_1,D_2,\dots,D_n$ (see \eqref{eq:8}). Let $\F_q$ be a subfield of $\F_\ell$ with $\ell=q^m$. We define $D_i$ to be a $(q,\ell)$-Moore matrix. Let $V_i$ be the $\F_q$-subspace spanned by the first row of the $i$th block $D_i$. To obtain the desired MR LRCs, we require that any $h$ subspaces out of $V_1,V_2,\dots,V_n$ form a direct sum. To have small field size $\ell$, we want that the elements of $D_i$ belong to a small finite field $\F_\ell$, i.e., small $m$. Thus, we define a subspace direct sum system to be a set $\{V_i\}_{i=1}^n$ with each $V_i\subseteq \F_{q^m}$ such that any $h$ subspaces  form a direct sum. Hence, to have an MR LRC over small field $\F_\ell$, we require a subspace direct sum system with small $m$.

To get constructions and some bounds on subspace direct sum systems, we convert subspace direct sum systems to $\F_q$-linear block codes over extension fields with large block Hamming weight. We then explore various constructions of such $\F_q$-linear block codes over extension fields to obtain subspace direct sum systems with reasonable parameters. Finally, we present a construction of MR LRCs via subspace direct sum systems. Based on those subspace direct sum systems that we obtain in this paper, we give a few classes of MR LRCs.

\subsection{Organization of the paper}
The paper is organized as follows. In Section 2, we present some preliminaries including definition of MR LRCs, their generator and parity-check matrices, Moore matrices, algebraic geometry codes, etc. In Section 3, we introduce subspace direct sum system and then show that such a system is equivalent to a linear block code. We then derive some upper and lower bounds on subspace direct sum systems. We provide some constructions of subspace direct sum systems in Section 3 as well. In the last section, we show how to make use of subspace direct sum systems to construct MR LRCs and then construct MR LRCs over small field through subspace direct sum systems provided in Section 3.

\section{Preliminary}
In this section, we introduce some notations and basic results that will be used in this paper. For a prime power $q$, denote by $\F_q$ the finite field of $q$ elements. For a positive integer $n$, we denote by $[n]$ the set $\{1,2,\dots,n\}$. We denote by $\log$ the logarithm with base $2$.
\subsection{Maximally recoverable codes} Let $\ell$ be a prime power.
An $\ell$-ary linear code $\mC$ of length $n$ is an $\F_\ell$-subspace of $\F_\ell^n$. We denote by $[N,k,d]_q$ a $q$-ary linear code with length $N$, dimension $k$ and minimum distance $d$. Then the classical Singleton bound says that an $\ell$-ary $[N,k,d]$-linear code must obey
\begin{equation}\label{eq:5}
k+d\le N+1.
\end{equation}
Now we show that linear independence of columns of a  matrix remains unchanged regardless of row operations.
\begin{lemma}\label{lem:2.1}
Let $H=(\bu_1,\bu_2,\dots,\bu_N)$ be a matrix in $\F_\ell^{m\times N}$. Let $H'=(\bu'_1,\bu'_2,\dots,\bu'_N)$ be a matrix obtained from $H$ by elementary row operations. Then for a subset $I\subseteq [N]$, the vectors $\{\bu_i\}_{i\in I}$ are linearly independent if and only if $\{\bu'_i\}_{i\in I}$ are linearly independent.
\end{lemma}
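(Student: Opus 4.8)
The plan is to reduce the statement to the basic fact that elementary row operations correspond to left-multiplication by an invertible matrix, and that invertible linear maps preserve linear independence of any fixed collection of vectors. First I would record the standard observation that if $H'$ is obtained from $H$ by a finite sequence of elementary row operations, then there exists an invertible matrix $P\in\GL_m(\F_\ell)$ with $H'=PH$; consequently, writing $H=(\bu_1,\dots,\bu_N)$ and $H'=(\bu'_1,\dots,\bu'_N)$ columnwise, we have $\bu'_i=P\bu_i$ for every $i\in[N]$. This is the only structural input needed, and it is completely routine.

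Next, fix a subset $I\subseteq[N]$ and suppose $\{\bu_i\}_{i\in I}$ are linearly independent. Take any $\F_\ell$-linear relation $\sum_{i\in I}c_i\bu'_i=\bo$ with $c_i\in\F_\ell$. Substituting $\bu'_i=P\bu_i$ gives $P\big(\sum_{i\in I}c_i\bu_i\big)=\bo$, and since $P$ is invertible this forces $\sum_{i\in I}c_i\bu_i=\bo$, whence $c_i=0$ for all $i\in I$ by the assumed independence. Thus $\{\bu'_i\}_{i\in I}$ are linearly independent. The reverse implication is symmetric: $H$ is obtained from $H'$ by the inverse sequence of elementary row operations, equivalently $\bu_i=P^{-1}\bu'_i$, and $P^{-1}$ is again invertible, so the same argument applies. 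This establishes the ``if and only if''.

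There is no real obstacle here; the only point requiring a word of care is the claim that a sequence of elementary row operations is realized by a single invertible matrix $P$. This follows because each elementary row operation (swapping two rows, scaling a row by a nonzero scalar, adding a multiple of one row to another) is itself left-multiplication by an elementary matrix, each such elementary matrix is invertible, and a product of invertible matrices is invertible. One could alternatively avoid naming $P$ altogether and argue operation-by-operation, but the one-line invertible-matrix formulation is cleaner. I would present the proof in the two short paragraphs above: first the $H'=PH$ reduction, then the symmetric independence argument.
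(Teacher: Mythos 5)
Your proof is correct and follows essentially the same route as the paper's: identify an invertible matrix $P$ with $H'=PH$, so $\bu'_i=P\bu_i$, and conclude that invertibility of $P$ preserves linear independence of any subset of columns. You simply spell out the linear-relation argument in slightly more detail than the paper does.
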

\begin{proof} Assume that $\{\bu_i\}_{i\in I}$ are linearly independent.
As $H'$ is obtained from $H$ by elementary row operations, there exists an invertible matrix $A\in\F_\ell^{m\times m}$ such that $AH=H'$, i.e., $\bu'_i=A\bu_i$. Hence, $\{\bu_i\}_{i\in I}$ are linearly independent if and only if $\{\bu'_i=A\bu_i\}_{i\in I}$ are linearly independent. Conversely, it is also true.
\end{proof}

Now we start to introduce definition of locally repairable codes. In this paper, we consider disjoint recovery sets only. Hence, we divide $[N]$ into $n$ groups $R_1,\dots,R_n$ with each group $R_i$ having $r$ elements. Thus, we assume that $N=nr$. For an integer $\Gd$ with $1\le \Gd\le r-2$, we say that  an $\ell$-ary $[N,k,d]$-linear code $\mC$ has $(r-\Gd, \Gd+1)$-locality if the projection $\mC|_{R_i}$ of $\mC$ at each $R_i$ is a code of distance at least $\Gd+1$, {i.e., it can be locally repaired by connecting to $r-\delta$ other nodes}. The Singleton-type  $[N,k,d]$-locally repairable  code with locality  $(r-\Gd, \Gd+1)$ says that
\begin{equation}\label{eq:6}
d\le n-k+1-\left(\left\lceil\frac{k}{r-\Gd}\right\rceil-1\right)\Gd.
\end{equation}
An $\ell$-ary $[N,k,d]$-locally repairable  code with locality  $(r-\Gd, \Gd+1)$ is said Singleton-optimal if the bound \eqref{eq:6} is achieved.

Informally, an $(N,r,h,{\Gd})_\ell$ maximally recoverable code is an $\ell$-ary linear code such that erasure errors at any ${\Gd}$ positions in each group and other $h$ position in arbitrary positions can be corrected.

Now we give a formal definition of maximally recoverable codes.
\begin{defn}\label{def:1} Let $\ell$ be a prime power. Let $R_i$ be a partition of $[N]$ with each $R_i$ having $r$ elements.
An $(N=nr,r,h,{\Gd})_\ell$ maximally recoverable code (or $(N=nr,r,h,{\Gd})_\ell$-MR code for short) $\mC$ is an $[N,k]$-linear code over $\F_\ell$ with $k=N-n\Gd-h$ satisfying
\begin{itemize}
\item[(i)] the projection $\mC|_{R_i}$ of $\mC$ at each $R_i$  is an $[r,r-{\Gd},{\Gd}+1]_\ell$-MDS code; and
\item[(ii)] Puncturing any ${\Gd}$ positions in each $R_i$  gives an $[N-n\Gd,k,h+1]$-MDS code.
\end{itemize}
\end{defn}
From Definition \ref{def:1}, it is well known  that generator and parity-check matrices of an $(N,r,h,{\Gd})_\ell$-MR code must have the following forms (after re-arrangement of columns). For the sake of completeness, we provide a proof below.
\begin{lemma} ~\label{lem:2.2} $\mC$ is an $(N=nr,r,h, {\Gd})_\ell$-MR code if and only if
\begin{itemize}
\item[{\rm (i)}]  it has a generator matrix of the form
\begin{equation}\label{eq:7}
G=(B_1|B_2|\cdots|B_n)\in\F_q^{k\times N},
\end{equation}
where $B_i$ is a $k\times r$ matrix and generates an $[r,r-{\Gd},{\Gd}+1]_\ell$-MDS code. Furthermore, deleting any ${\Gd}$ columns in each $B_i$ from $G$ gives a generator matrix of an $[N-n\Gd,k,h+1]$-MDS code with $k=N-n\Gd-h$; and
\item[{\rm (ii)}]  it has a parity-check matrix of the form
\begin{equation}\label{eq:8} H=\left(\begin{array}{c|c|c|c}
A_1&O&\cdots&O\\ \hline
O&A_2&\cdots&O \\ \hline
\vdots&\vdots&\ddots&\vdots \\ \hline
O&O&\cdots&A_n \\ \hline
D_1&D_2&\cdots&D_n
\end{array}
\right)\in\F_{\ell}^{(N-k)\times N},\end{equation}
 where each $A_i$ has size ${\Gd}\times r$ and each $D_i$ has size $h\times r$ and they satisfy:
(a) each $A_i$  is a generator matrix of an $[r,{\Gd},r-{\Gd}+1]_\ell$-MDS code for $1\le i\le n$; and
(b) every $n\Gd+h$  columns consisting of any ${\Gd}$ columns in each $R_i$ and other arbitrary $h$ columns are $\F_\ell$-linearly independent.
\end{itemize}
\end{lemma}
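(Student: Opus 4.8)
The plan is to rely on the standard dictionary between erasure correction and column independence of parity-check matrices: for an $\ell$-ary linear code $\mD$ with parity-check matrix $H$ and a coordinate set $T$, the columns of $H$ indexed by $T$ are $\F_\ell$-linearly independent if and only if $\mD$ has no nonzero codeword supported inside $T$. Combined with the facts that the dual of an MDS code is MDS and that the restriction of an $[N',k']$-MDS code to any $j\le k'$ coordinates has dimension $j$, the lemma splits into three routine implications and one that carries the content. Lemma~\ref{lem:2.1} is convenient but not essential here, since the parity-check matrix will be produced directly in the form \eqref{eq:8}.

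First, MR $\Rightarrow$ (i). Take any generator matrix $G$ of the MR code $\mC$ and partition its columns as $G=(B_1|\cdots|B_n)$ along $R_1,\dots,R_n$. The rows of $B_i$ span $\mC|_{R_i}$, which is $[r,r-\Gd,\Gd+1]_\ell$-MDS by Definition~\ref{def:1}(i), so $B_i$ generates it. Deleting $\Gd$ columns from each $B_i$ realizes the puncturing of Definition~\ref{def:1}(ii), so by that clause the punctured code is $[N-n\Gd,k,h+1]$-MDS with $k=N-n\Gd-h$, and it is generated by the remaining $k\times(N-n\Gd)$ matrix (which has rank $k$). Conversely, if $\mC$ has a generator matrix as in (i), then reading $k$ off the MDS parameters and unwinding the two clauses of Definition~\ref{def:1} is immediate; so the reverse implication of the lemma already follows from clause (i) alone.

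Next, MR $\Rightarrow$ (ii), the main step. For each $i$ let $A_i$ be a generator matrix of $(\mC|_{R_i})^\perp$; since $\mC|_{R_i}$ is $[r,r-\Gd,\Gd+1]$-MDS, $(\mC|_{R_i})^\perp$ is $[r,\Gd,r-\Gd+1]$-MDS, which is (a). Padding $A_i$ with zeros outside $R_i$ gives $\Gd$ parity checks of $\mC$, and over all $i$ these form $n\Gd$ linearly independent vectors (each $A_i$ has full row rank, and the blocks have pairwise disjoint supports); since $\dim\mC^\perp=n\Gd+h$, extend to a basis of $\mC^\perp$ by $h$ further rows, which written blockwise are $(D_1|\cdots|D_n)$. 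The resulting parity-check matrix has the shape \eqref{eq:8}, and it remains to verify (b). Fix sets $S_i\subseteq R_i$ with $|S_i|=\Gd$ and an $h$-subset $E$, put $T=\bigcup_i S_i\cup E$, and suppose for contradiction that some nonzero $c\in\mC$ has $\supp(c)\subseteq T$. For every group with $T\cap R_i=S_i$ we get $\supp(c|_{R_i})\subseteq S_i$, of size $\Gd$, which is below the distance $\Gd+1$ of $\mC|_{R_i}$; hence $c|_{R_i}=0$, so $c$ is supported only on the groups meeting $E$. Puncturing $\mC$ at $\bigcup_i S_i$ sends $c$ to a word $c'$ with $\supp(c')\subseteq E$, of weight $\le h$. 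Moreover $c'\ne 0$: choosing $j$ with $c|_{R_j}\ne 0$, we have $|\supp(c|_{R_j})|\ge\Gd+1>|S_j|$, so $c$ is nonzero at some coordinate of $R_j\setminus S_j$, which survives the puncturing. But by Definition~\ref{def:1}(ii) the punctured code is $[N-n\Gd,k,h+1]$-MDS and has no nonzero word of weight $\le h$ — a contradiction. So (b) holds, and by the dictionary $H$ is a parity-check matrix of $\mC$ in the stated form.

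I expect the verification of (b) to be the only genuinely non-mechanical step — pinning down that $c$ vanishes on every ``untouched'' group and then checking the puncturing does not annihilate it. One subtlety worth a remark, since the later constructions are specified via parity-check matrices: to pass from (ii) back to being an MR code, note that (a) and (b) force $\dim\mC=k$ and, by the same weight argument, Definition~\ref{def:1}(ii), and they give $\dim(\mC|_{R_i})\le r-\Gd$; the reverse inequality, hence Definition~\ref{def:1}(i), follows by projecting the $[N-n\Gd,k,h+1]$-MDS code onto the $r-\Gd$ surviving coordinates inside one $R_i$, which has dimension $r-\Gd$ as soon as $r-\Gd\le k$, i.e. $(n-1)(r-\Gd)\ge h$ — an inequality that necessarily holds whenever an MR code with these parameters exists. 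Since the lemma is phrased as ``MR $\iff$ (i) and (ii),'' this is only needed if one wants (ii) alone to be equivalent to being MR; as stated, the reverse implication is cleanest to take from (i).
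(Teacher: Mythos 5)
Your proof is correct, and it takes a genuinely different route through the key step (MR $\Rightarrow$ (ii)(b)). The paper proceeds by matrix bookkeeping: it rearranges the chosen $n\Gd+h$ columns, performs Gaussian elimination against the invertible $\Gd\times\Gd$ blocks $C_i$ to produce the Schur-complement matrix $H_3=(K_i - L_iC_i^{-1}E_i)_i$, recognises $H_3$ as a parity-check matrix of the punctured code, and then invokes Lemma~\ref{lem:2.1} to transfer independence back from $H_2$ to $H$. You avoid all of that by using the standard dictionary ``columns of $H$ indexed by $T$ are independent $\iff$ no nonzero codeword of $\mC$ is supported in $T$,'' and then running a direct support argument: a putative bad codeword $c$ must vanish on every group that meets only $S_i$ (by the local distance $\Gd+1$), and what remains after puncturing at $\bigcup_iS_i$ is a nonzero word of weight $\le h$ in the $[N-n\Gd,k,h+1]$-MDS code, a contradiction. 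This is shorter and more conceptual; the trade-off is that the paper's explicit $H_3$ computation does double duty, since essentially the same Schur-complement calculation reappears in the proof of Theorem~4.1, where one actually needs to recognise $H_3$ as a Moore matrix with independent first-row entries. Your handling of the converse --- observing that clause (i) alone already yields both parts of Definition~\ref{def:1}, so the backward implication is immediate --- matches the paper's ``Part (i) is clear'' and is a clean way to close the equivalence, and your closing remark correctly identifies the extra (routine but nonvacuous) work needed if one wanted (ii) on its own to imply MR.
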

\begin{proof} Part (i) is clear.

Let us prove (ii). Assume that $\mC$ is an $[N-n\Gd,k,h+1]_\ell$-MDS code with $k=N-n\Gd-h$. Let $G=(B_1|B_2|\cdots|B_n)\in\F_q^{k\times N}$ be a generated matrix of $\mC$.

Let $A_i$ be the parity-check matrix of the code spanned by $B_i$. Then $A_i$ is a generator  matrix of an $[r,{\Gd},r-{\Gd}+1]_\ell$-MDS code. Hence, the matrix
\begin{equation}\label{eq:9} H_1=\left(\begin{array}{c|c|c|c}
A_1&O&\cdots&O\\ \hline
O&A_2&\cdots&O \\ \hline
\vdots&\vdots&\ddots&\vdots \\ \hline
O&O&\cdots&A_n
\end{array}
\right)\in\F_{\ell}^{(N-k-h)\times N},\end{equation}
has rank $n\Gd=N-k-h$. Moreover, every row of the matrix $H_1$ in \eqref{eq:9} is a codeword in $\mC^\perp$. Thus, we can extend the rows of $H_1$ into a basis of $\mC^\perp$ by adding $h$ codewords. Now by putting these $h$ codewords at the bottom of $H_1$, we get a parity-check matrix $H$ of $\mC$ in the form \eqref{eq:8}. Next we shall prove that $H$ satisfies both (a) and (b) in (ii). By choices of $A_i$, it is clear that (a) is fulfilled. Now we take any ${\Gd}$ columns in each $R_i$ and re-arrange  columns of $H$. Then $H$ can be written into the following form
\[ H=\left(\begin{array}{c|c|c|c}
C_1,E_1&O&\cdots&O\\ \hline
O&C_2,E_2&\cdots&O \\ \hline
\vdots&\vdots&\ddots&\vdots \\ \hline
O&O&\cdots&C_n,E_n\\ \hline
L_1,K_1&L_2,K_2&\dots&L_n,K_n
\end{array}
\right)\in\F_{\ell}^{(N-k)\times N},\]
where $C_i, E_i, L_i$ and $ K_i$ are ${\Gd}\times {\Gd}$, ${\Gd}\times (r-{\Gd})$, $h\times {\Gd}$ and $h\times (r-{\Gd})$ matrices, respectively, with the first ${\Gd}$ columns in each block were previously chosen. Since $A_i$ is a generator matrix of an MDS code, $A_i$ is a ${\Gd}\times {\Gd}$ invertible matrix for all $1\le i\le n$. By multiplying an invertible matrix from the left of $H$, we obtain
\[
H_2=\left(\begin{array}{c|c|c|c}
C_1,E_1&O&\cdots&O\\ \hline
O&C_2,E_2&\cdots&O \\ \hline
\vdots&\vdots&\ddots&\vdots \\ \hline
O&O&\cdots&C_n,E_n\\ \hline
O,K_1-L_1C_1^{-1}E_1&O,K_2-L_2C_2^{-1}E_2&\dots&O,K_n-L_nC_n^{-1}E_n
\end{array}
\right),\]
As $H_2$ is still a parity-check matrix of $\mC$, the matrix $(O,K_1-L_1C_1^{-1}E_1|O,K_2-L_2C_2^{-1}E_2|\dots|O,K_n-L_nC_n^{-1}E_n)$ has rank $h$, i.e.,
\[H_3:=(K_1-L_1C_1^{-1}E_1|K_2-L_2C_2^{-1}E_2|\dots|K_n-L_nC_n^{-1}E_n)\]
has rank $h$. We can see that $H_3$ is in fact a parity-check matrix of the code obtained from $\mC$ by puncturing  ${\Gd}$ positions in each $R_i$. This implies that any $h$ columns of $H_3$ are linearly independent.  Hence, the first ${\Gd}$ columns in $H_2$ together with any other $h$ columns of $H_2$ are linearly independent. As $H_2$ is obtained from $H$ by elementary row operations, by Lemma \ref{lem:2.1}, we conclude that (b) in part (ii) holds.

If $\mC$ has a parity-check matrix of the form \eqref{eq:8} satisfying (a) and (b) in (ii), by using a similar argument given above, we can show that puncturing any ${\Gd}$ positions in each $R_i$  gives an $[N-n\Gd,k,h+1]$-MDS code.
This completes the proof.
\end{proof}

  \subsection{Moore matrix}
  \label{subsec:2.2}
  Let $\ell$ be a power of $q$. For elements $\Ga_1,\dots,\Ga_n\in\F_\ell$, a $(q,\ell)$-Moore matrix of size $h\times n$ is defined by
  \[M(\Ga_1,\dots,\Ga_n)=\left(\begin{array}{cccc}
\Ga_1&\Ga_2&\cdots&\Ga_n\\
\Ga_1^q&\Ga_2^q&\cdots&\Ga_n^q\\
\vdots&\vdots&\ddots&\vdots \\
\Ga_1^{q^{h-1}} &\Ga_2^{q^{h-1}} &\cdots&\Ga_n^{q^{h-1}}
\end{array}
\right)\in\F_{\ell}^{h\times n}.\]
If $n=h$, then $M(\Ga_1,\dots,\Ga_h)$ is a square matrix. In this case,
  the determinant $\det(M(\Ga_1,\dots,\Ga_h))$ is given by the following formula
  \[\det(M(\Ga_1,\dots,\Ga_h))=\prod_{(c_1,\dots,c_h)}(c_1\Ga_1+\cdots+c_h\Ga_h),\]
  where $(c_1,\dots,c_h)$ runs over a complete set of direction vectors in $\F_q^h$, made specific by having the last non-zero entry equal to 1, i.e.
    \[\det(M(\Ga_1,\dots,\Ga_h))=\prod_{1\le i\le h}\prod_{(c_1,\dots,c_{i-1})}(c_1\Ga_1+\cdots+c_{i-1}\Ga_{i-1}+\Ga_i).\]
  Thus, $\det(M(\Ga_1,\dots,$ $\Ga_h))\neq 0$ if and  only if  $\Ga_1,\dots,\Ga_h$ are $\F_q$-linearly independent.

  \begin{lemma}~\label{lem:2.3} Let $K,L$ be two $(q,\ell)$-Moore matrices of size $h\times n$. Let $A$ be a matrix in $\F_q^{n\times m}$. Then we have
  \begin{itemize}
  \item[{\rm (i)}] $K-L$ is a $(q,\ell)$-Moore matrix of size $h\times n$.
   \item[{\rm (ii)}] $KA$ is a $(q,\ell)$-Moore matrix of size $h\times m$. Furthermore, every element in the first row of $KA$ is an $\F_q$-linear combination of all the elements in the first row of $K$.
  \end{itemize}
  \end{lemma}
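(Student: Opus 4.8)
The plan is to verify both parts of Lemma~\ref{lem:2.3} directly from the definition of a $(q,\ell)$-Moore matrix, since each claim reduces to an elementary observation about how the Frobenius map $x\mapsto x^q$ interacts with $\F_q$-linear operations. Throughout, the key facts are that Frobenius is additive, that it fixes $\F_q$ pointwise, and that raising to the power $q^j$ is the $j$-fold iterate of Frobenius; consequently, if the $i$th row of a $(q,\ell)$-Moore matrix equals $(\Ga_1^{q^{i-1}},\dots,\Ga_n^{q^{i-1}})$, then applying Frobenius entrywise turns row $i$ into row $i+1$. This ``Frobenius-shift'' property is the structural characterization of Moore matrices that both parts exploit.

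For part (i), I would write $K=M(\Ga_1,\dots,\Ga_n)$ and $L=M(\Gb_1,\dots,\Gb_n)$. The $(i,j)$ entry of $K-L$ is $\Ga_j^{q^{i-1}}-\Gb_j^{q^{i-1}}=(\Ga_j-\Gb_j)^{q^{i-1}}$, where the equality uses additivity of the $(i-1)$-fold Frobenius over $\F_\ell$. Hence $K-L=M(\Ga_1-\Gb_1,\dots,\Ga_n-\Gb_n)$, which is exactly a $(q,\ell)$-Moore matrix of size $h\times n$, with first row $(\Ga_1-\Gb_1,\dots,\Ga_n-\Gb_n)$.

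For part (ii), write $K=M(\Ga_1,\dots,\Ga_n)$ and $A=(a_{jt})\in\F_q^{n\times m}$. The $(i,t)$ entry of $KA$ is $\sum_{j=1}^n \Ga_j^{q^{i-1}}a_{jt}$. Since each $a_{jt}\in\F_q$, we have $a_{jt}=a_{jt}^{q^{i-1}}$, so this sum equals $\sum_{j=1}^n (\Ga_j a_{jt})^{q^{i-1}}=\bigl(\sum_{j=1}^n \Ga_j a_{jt}\bigr)^{q^{i-1}}$, again by additivity of Frobenius. Setting $\Gg_t=\sum_{j=1}^n \Ga_j a_{jt}$ for $1\le t\le m$, we conclude $KA=M(\Gg_1,\dots,\Gg_m)$, a $(q,\ell)$-Moore matrix of size $h\times m$. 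Moreover the first row of $KA$ is $(\Gg_1,\dots,\Gg_m)$, and by construction each $\Gg_t=\sum_{j=1}^n a_{jt}\Ga_j$ is an $\F_q$-linear combination of the entries $\Ga_1,\dots,\Ga_n$ of the first row of $K$, as claimed.

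Honestly, there is no real obstacle here: the only thing to be careful about is invoking the Frobenius identities with the correct exponent $q^{i-1}$ on the $i$th row (an off-by-one in the indexing convention), and noting explicitly that membership of the $a_{jt}$ in $\F_q$ is precisely what lets them commute through the power map. I would present the two parts in the order (i) then (ii), keeping each to a couple of lines, and flag at the start that everything rests on the additivity of iterated Frobenius together with $\F_q$ being its fixed field.
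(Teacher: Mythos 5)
Your proof is correct and takes essentially the same route as the paper: both reduce each part to the entrywise identities $\Ga_j^{q^{i-1}}-\Gb_j^{q^{i-1}}=(\Ga_j-\Gb_j)^{q^{i-1}}$ and $\sum_j a_{jt}\Ga_j^{q^{i-1}}=\bigl(\sum_j a_{jt}\Ga_j\bigr)^{q^{i-1}}$, via additivity of iterated Frobenius and the fact that $a_{jt}\in\F_q$ is fixed by it. The paper merely states these as a ``straightforward'' and ``easy'' computation, while you spell out the Frobenius bookkeeping; the substance is identical.
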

  \begin{proof} Let $K,L$ be Moore matrices $M(\Ga_1,\dots,\Ga_n)$ and $M(\Gb_1,\dots,\Gb_n)$ for some $\Ga_i,\Gb_j\in\F_\ell$, respectively. Then it is straightforward to verify that $K-L$ is equal to the Moore matrix $M(\Ga_1-\Gb_1,\dots,\Ga_n-\Gb_n)$. This completes the proof for Part (i).

  Now let $A=(a_{ij})_{1\le i\le n,1\le j\le m}\in \F_q^{n\times m}$. Then an easy computation shows that \[KA=M\left(\sum_{i=1}^na_{i1}\Ga_i,\dots,\sum_{i=1}^na_{im}\Ga_i\right).\]
  Note that we use the fact that $a_{ij}=a_{ij}^q$ in the above identity.
  \end{proof}

  \subsection{Subfield subcodes}
  Let $\ell=q^r$ with $r\ge 1$. For an $\ell$-ary linear code $\mC$ of length $n$, we define a subfield subcode by $\mC|_{\F_q}:=\mC\cap \F_q^n$.
  \begin{lemma}\label{lem:2.4} Let $\ell=q^r$. Then there is a $q$-ary $[n, n-(n-k)r,d]$-linear code provided there exists an $\ell$-ary $[n,k,d]$-linear code. In particular, we have $[n, n-(d-1)r,d]$-linear code for any $n\le q^r+1$.
  \end{lemma}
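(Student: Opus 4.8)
The plan is to prove the two assertions of Lemma~\ref{lem:2.4} separately, with the first following from a standard subfield-subcode dimension count and the second from applying it to Reed--Solomon codes.

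First I would handle the general statement. Let $\mC$ be an $\ell$-ary $[n,k,d]$-linear code, and let $H\in\F_\ell^{(n-k)\times n}$ be a parity-check matrix of $\mC$. Fix an $\F_q$-basis of $\F_\ell=\F_{q^r}$; expanding each entry of $H$ with respect to this basis replaces every row of $H$ by $r$ rows over $\F_q$, producing a matrix $H'\in\F_q^{(n-k)r\times n}$ with the property that for $\bx\in\F_q^n$ we have $H\bx^{\top}=\bo$ if and only if $H'\bx^{\top}=\bo$. Hence the $\F_q$-kernel of $H'$ is exactly $\mC|_{\F_q}=\mC\cap\F_q^n$, so $\mC|_{\F_q}$ is a $q$-ary linear code of length $n$ and dimension at least $n-(n-k)r$ (the rank of $H'$ can only drop). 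Its minimum distance is at least $d$, since $\mC|_{\F_q}\subseteq\mC$ and the nonzero codewords of $\mC|_{\F_q}$ are nonzero codewords of $\mC$. If the dimension of $\mC|_{\F_q}$ happens to exceed $n-(n-k)r$, one can puncture or restrict to a subcode to get exactly this dimension without decreasing the distance below $d$; this gives the claimed $q$-ary $[n,\,n-(n-k)r,\,d]$-linear code.

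For the ``in particular'' clause, I would invoke the existence of MDS codes of the required length. For any $n\le q^r+1$ there is an $\ell$-ary $[n,n-d+1,d]$-MDS code (for $n\le q^r$, a Reed--Solomon code evaluated at $n$ distinct points of $\F_\ell$; for $n=q^r+1$, a doubly-extended Reed--Solomon code), so $n-k=d-1$. Plugging $n-k=d-1$ into the first part yields a $q$-ary $[n,\,n-(d-1)r,\,d]$-linear code, which is exactly the stated conclusion.

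The only mild subtlety — and the point I would be most careful about — is the dimension bookkeeping: the subfield subcode can a priori have dimension strictly larger than the generic value $n-(n-k)r$ (indeed it is always at least this), so to claim a code of \emph{exactly} that dimension one passes to a subcode, and one must note this does not hurt the minimum distance. Everything else is routine: the basis-expansion argument is the classical fact that the subfield subcode of a code with parity-check matrix $H$ is the $\F_q$-solution space of the expanded matrix, and the distance bound is immediate from the inclusion $\mC|_{\F_q}\subseteq\mC$.
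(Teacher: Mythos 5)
Your proof is correct and matches the paper's approach: both reduce the lemma to the standard subfield-subcode dimension bound $\dim_{\F_q}(\mC|_{\F_q})\ge n-(n-k)r$ and then apply it to an (extended) Reed--Solomon code of length $n\le q^r+1$. The only difference is in how that bound is derived---you expand a parity-check matrix of $\mC$ over an $\F_q$-basis of $\F_\ell$ and invoke rank--nullity, while the paper uses the $\F_q$-dimension identity $\dim(\mC\cap\F_q^n)=\dim\mC+\dim\F_q^n-\dim(\mC+\F_q^n)\ge kr+n-nr$---and you also flag the need to pass to a subcode to hit the dimension exactly, a minor bookkeeping point that the paper leaves implicit.
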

  \begin{proof}
  Let $\mC$ be an $\ell$-ary $[n,k,d]$-linear code. Then it is clear that  $\mC|_{\F_q}$ is a $q$-ary linear code of length $n$ and minimum distance at least $d$. Furthermore, we have
  \[\dim_{\F_q}(\mC|_{\F_q})= \dim_{\F_q}(\mC)+\dim_{\F_q}(\F_q^n)-\dim_{\F_q}(\mC+\F_q^n)\ge kr+n-nr.\]
  Thus, we proved the first part.

  Now let us consider an $\ell$-ary $[n,n-d+1,d]$-Reed-Solomon code for $n\le q^r+1$. Then, by the first part, we get a $q$-ary $[n, n-(d-1)r,d]$-linear code.
  \end{proof}

  If we consider BCH codes which are also subfield subcodes of Reed-Solomon codes for $q=2$, we can improve the code parameters given in Lemma \ref{lem:2.4}.

    \begin{lemma}{\cite[Proposition 8.1.14]{CT}}\label{lem:2.5} There is a binary $\left[n=2^r-1, n- \Gd r,2\Gd+1\right]$-linear code.
  \end{lemma}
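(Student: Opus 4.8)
The final statement to prove is Lemma~\ref{lem:2.5}: the existence of a binary $[n=2^r-1,\, n-\delta r,\, 2\delta+1]$-linear code, which is just the standard fact that the primitive narrow-sense binary BCH code of designed distance $2\delta+1$ has these parameters. The plan is to realize this code as a subfield subcode of a Reed--Solomon code over $\F_{2^r}$, exactly in the spirit of Lemma~\ref{lem:2.4}, but sharpening the dimension bound by exploiting the Frobenius action on the defining set of zeros. Concretely, let $\alpha$ be a primitive element of $\F_{2^r}$ and consider the cyclic code $\mC$ of length $n=2^r-1$ whose zeros are exactly the powers $\alpha^i$ for $i$ ranging over the cyclotomic cosets of $1,2,\dots,2\delta$ modulo $n$.

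First I would invoke the BCH bound: since $\mC$ has $2\delta$ consecutive zeros $\alpha,\alpha^2,\dots,\alpha^{2\delta}$, every nonzero codeword has Hamming weight at least $2\delta+1$, so $d(\mC)\ge 2\delta+1$. Next I would bound the dimension from below by bounding the number of zeros, i.e. the size of the union $\bigcup_{j=1}^{2\delta} C_j$ of the relevant $2$-cyclotomic cosets $C_j=\{j,2j,4j,\dots\}\bmod n$. Each coset has size at most $r$ (it is an orbit under multiplication by $2$, hence its size divides $r$), and there are at most $\delta$ distinct cosets among $C_1,\dots,C_{2\delta}$ because $C_{2j}=C_j$ for all $j$ (doubling the representative lands in the same coset). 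Hence the total number of zeros is at most $\delta r$, so $\dim_{\F_2}\mC = n - |\bigcup_j C_j| \ge n - \delta r$. This gives a binary code with parameters $[n,\, \ge n-\delta r,\, \ge 2\delta+1]$; shortening or taking a subcode if the dimension exceeds $n-\delta r$ yields exactly the claimed parameters. Alternatively, and perhaps more cleanly for this paper's framework, I would present $\mC$ as $\rs_{2^r}[n, n-2\delta, 2\delta+1]|_{\F_2}$ — the subfield subcode of the $[n,n-2\delta,2\delta+1]$ Reed--Solomon code with evaluation points the nonzero elements of $\F_{2^r}$ and zero set $\{\alpha,\dots,\alpha^{2\delta}\}$ — and then note that the trace/Delsarte description of the subfield subcode has $\F_2$-dimension $n$ minus the number of Frobenius-orbits-with-multiplicity in the zero set, which is the same count $\le \delta r$ as above.

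The main obstacle, and the only place any real care is needed, is the dimension count: one must verify that the union of cyclotomic cosets $C_1\cup C_2\cup\cdots\cup C_{2\delta}$ has size at most $\delta r$, rather than $2\delta r$ as the naive bound from Lemma~\ref{lem:2.4} would suggest. The key observation making this work is the pairing $j\leftrightarrow 2j$: since multiplication by $2$ permutes each coset, the even index $2j$ and the odd index $j$ generate the same coset, so $\{C_1,\dots,C_{2\delta}\}$ already collapses to at most $\delta$ cosets (for instance, $C_1,C_3,C_5,\dots,C_{2\delta-1}$ suffice, since every even $j\le 2\delta$ has $C_j=C_{j/2^{v_2(j)}}$ with $j/2^{v_2(j)}$ odd and $\le 2\delta-1$). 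Combined with $|C_j|\mid r$ so $|C_j|\le r$, this yields the bound $\delta r$. Everything else — the BCH bound, the conversion between the cyclic-code and subfield-subcode viewpoints, and the final shortening to hit the dimension exactly — is routine, so I would keep it brief and cite \cite[Proposition 8.1.14]{CT} for the details.
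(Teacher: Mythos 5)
Your proof is correct and is exactly the standard argument behind \cite[Proposition 8.1.14]{CT}: the paper itself offers no proof but simply cites Ling--Xing, and what you reconstruct is precisely the primitive narrow-sense binary BCH code of designed distance $2\delta+1$, with the dimension bound obtained by noting $C_{2j}=C_j$ collapses the defining set to at most $\delta$ cyclotomic cosets each of size dividing $r$. The aside about passing to a subcode to hit dimension exactly $n-\delta r$ is a reasonable way to match the stated parameters literally, and the rest is the BCH bound; nothing is missing.
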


\subsection{Extreme codes}~\label{subsection:2.4}
In this subsection, we consider linear codes of fixed minimum distance with length tending to infinity. In this parameter regime, we wonder how large the dimension of such a code could be. In view of this, we define the following quantity. For a prime power $q$ and an integer $d\ge 2$, define
\begin{equation}\label{eq:10}
c(q,d):=\liminf_{n(\mC)\rightarrow\infty}\frac{n(\mC)-k(\mC)}{\log_qn(\mC)},
\end{equation}
where $n(\mC),k(\mC)$ and $d(\mC)$ stand for length, dimension and minimum distance of $\mC$ and the code $\mC$ runs through all $q$-ary linear codes. By the Hamming bound, we have that, for a $q$-ary $[n,k,d]$-linear code with $d\ge 3$,
\[q^k\le\frac{q^n}{\sum_{i=0}^1{n\choose i}(q-1)^i}\le\frac{q^n}{n(q-1)}.\]
This gives $n-k\ge \log_q n+\log_q(q-1)$. This implies that $c(q,d)\ge 1$ for $d\ge 3$.

One has the following results on $c(q,d)$.
\begin{lemma}\label{lem:2.6} One has
\begin{itemize}
\item[{\rm (i)}] $c(2,d)=\left\lfloor\frac{d-1}2\right\rfloor$ for all $d\ge 3$  and $c(q,6)\le 3$.
\item[{\rm (ii)}] For $d\ge 3$ and ${\rm Char}(\F_q)>d-3$, $c(q,d)\le d-3+\frac1{d-2}$. In particular, we have $c(q,3)=1$, $c(q,4)\le 1.5$ and $c(q,5)\le \frac73$ if $q$ is odd.
\end{itemize}
\end{lemma}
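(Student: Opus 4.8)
The plan is to obtain each bound by pairing a sphere--packing lower bound with an explicit algebraic construction, and to organize the proof according to whether a given estimate on $c(q,d)$ is a lower or an upper bound.

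\emph{Lower bounds.} I would first record the Hamming (sphere--packing) bound, which supplies the ``$\ge$'' half of (i). Set $t=\lfloor (d-1)/2\rfloor$, the packing radius of a code of minimum distance $\ge d$. For any $q$-ary $[n,k]$-code of distance $\ge d$ the balls of radius $t$ about codewords are pairwise disjoint, so $q^{\,n-k}\ge\sum_{i=0}^{t}\binom ni(q-1)^i\ge\binom nt(q-1)^t\ge (n/t)^t$, whence $n-k\ge t\log_q n-O_d(1)$ and $\tfrac{n-k}{\log_q n}\ge t-o(1)$ as $n\to\infty$. Taking the liminf gives $c(q,d)\ge\lfloor(d-1)/2\rfloor$ for every $q$; in particular $c(2,d)\ge\lfloor(d-1)/2\rfloor$, and (with $t=1$) $c(q,3)\ge1$, as already observed just before the lemma.

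\emph{Upper bound for $q=2$.} For the matching bound $c(2,d)\le\lfloor(d-1)/2\rfloor$ I would use Lemma~\ref{lem:2.5}. If $d=2\Gd+1$ is odd, Lemma~\ref{lem:2.5} provides a binary $[\,2^r-1,\;2^r-1-\Gd r,\;2\Gd+1\,]$-code, so along this family $\tfrac{n-k}{\log_2 n}=\tfrac{\Gd r}{\log_2(2^r-1)}\to\Gd=\lfloor(d-1)/2\rfloor$. If $d=2\Gd+2$ is even, take the same $[2^r-1,\,2^r-1-\Gd r,\,2\Gd+1]$-code and append an overall parity check: since the original distance $2\Gd+1$ is odd, the extended code has length $2^r$, dimension $2^r-1-\Gd r$, and minimum distance exactly $2\Gd+2=d$, so $\tfrac{n-k}{\log_2 n}=\tfrac{\Gd r+1}{r}\to\Gd=\lfloor(d-1)/2\rfloor$. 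Together with the previous paragraph this proves $c(2,d)=\lfloor(d-1)/2\rfloor$ for all $d\ge3$, the first assertion of (i). The estimate $c(q,3)=1$ in (ii) is then the case $d=3$ of the general lower bound together with the $q$-ary Hamming codes $[\tfrac{q^m-1}{q-1},\,\tfrac{q^m-1}{q-1}-m,\,3]$, for which $\tfrac{n-k}{\log_q n}=\tfrac{m}{m-\log_q(q-1)+o(1)}\to1$.

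\emph{Upper bounds for general $q$.} The remaining claims ($c(q,6)\le3$ in (i), and $c(q,d)\le d-3+\tfrac1{d-2}$ in (ii)) are upper bounds, each proved by exhibiting an infinite family of $q$-ary codes along which the ratio $\tfrac{n-k}{\log_q n}$ tends to the stated value. For $c(q,6)\le3$ I would invoke the family of $q$-ary cyclic codes of minimum distance $\ge6$ from \cite{D95}, whose redundancy satisfies $n-k\le (3+o(1))\log_q n$. For the general bound in (ii), fix $d\ge3$, write $m=(d-2)t$, and aim for a $q$-ary cyclic code of length $n$ equal to a suitable divisor of $q^{m}-1$ whose defining set (set of zeros) consists of the cyclotomic coset of $0$, of $d-3$ ``generic'' cyclotomic cosets each of full size $m$, and of exactly one cyclotomic coset lying inside the subfield $\F_{q^{t}}\subseteq\F_{q^{m}}$, hence of size $t=m/(d-2)$; the redundancy is then $(d-3)m+t+O(1)=\bigl((d-3)(d-2)+1\bigr)t+O(1)$ while $\log_q n=(d-2)t+O(1)$, so $\tfrac{n-k}{\log_q n}\to\tfrac{(d-3)(d-2)+1}{d-2}=d-3+\tfrac1{d-2}$. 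The defining set must be chosen so that a shifting/Roos--Hartmann--Tzeng-type distance bound (the plain BCH bound seems too weak with only $d-3$ long cosets available) certifies minimum distance $\ge d$, and it is precisely in this step that the hypothesis $\mathrm{Char}(\F_q)>d-3$ would be used, to guarantee the relevant non-degeneracy (e.g.\ invertibility of a generalized Vandermonde matrix, or separation of the $\le d-3$ auxiliary roots). Specializing $d=4,5$ then yields $c(q,4)\le\tfrac32$ and $c(q,5)\le\tfrac73$ for $q$ odd.

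\emph{Main obstacle.} The sphere--packing estimate and the BCH/extension arguments are routine. The crux is the general construction for (ii): one must pin down an explicit generator polynomial whose cyclotomic closure has the prescribed size $(d-3)m+t+O(1)$ --- in particular containing exactly one ``short'' coset coming from a subfield --- while still forcing minimum distance $\ge d$, and the latter verification is where $\mathrm{Char}(\F_q)>d-3$ genuinely enters; if a clean cyclic realization proves delicate, a fallback is an algebraic--geometry code over $\F_q$ tuned to the same parameter count, but the distance verification remains the hard part.
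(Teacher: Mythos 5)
Your lower-bound argument (sphere packing gives $c(q,d)\ge\lfloor(d-1)/2\rfloor$, hence $c(2,d)\ge\lfloor(d-1)/2\rfloor$ and $c(q,3)\ge1$) and your $q=2$ upper bound (BCH codes from Lemma~\ref{lem:2.5}, together with the parity-extension step to handle even $d$) are correct, and in fact supply details that the paper's one-line remark ``derived from binary BCH codes'' leaves implicit. The citation of \cite{D95} for $c(q,6)\le3$ coincides with the paper's route; the paper records the exact parameters in Lemma~\ref{lem:2.7}(i), from which $\log_q n=\lfloor 5(u-1)/6\rfloor$ gives redundancy $5u/2=3\log_q n+3$.

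The gap is in part~(ii). The paper establishes $c(q,d)\le d-3+\tfrac1{d-2}$ purely by citing \cite{YD04}; Lemma~\ref{lem:2.7}(ii) records the precise codes supplied there, namely $q$-ary $[q^u,\,q^u-(d-3)u-\lceil u/(d-2)\rceil-1,\,d]$ codes for all $u>(d-3)!$ whenever $\mathrm{Char}(\F_q)>d-3$, and dividing the redundancy by $\log_q n=u$ gives the claimed limit at once. You instead sketch a from-scratch cyclic construction whose defining set is one short subfield coset plus $d-3$ full-length cosets and the zero coset, and you explicitly concede that the distance verification --- the entire content of the result --- is left open, remarking that the BCH bound is too weak and that some Roos/Hartmann--Tzeng argument plus the characteristic hypothesis ``would be used.'' Until that step is supplied, the general bound, and with it $c(q,4)\le\tfrac32$ and $c(q,5)\le\tfrac73$, is not proved. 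Note also a mismatch with \cite{YD04}: their codes have length $q^u$, not a divisor of $q^u-1$, so the classical cyclic-code framework you set up does not reproduce their construction even schematically. Either cite \cite{YD04} as the paper does, or you must actually prove that a defining set of the shape you propose certifies minimum distance $\ge d$ under $\mathrm{Char}(\F_q)>d-3$; that verification is nontrivial and is what is missing. (A small secondary point: you attach ``for $q$ odd'' to both $d=4$ and $d=5$; the characteristic condition $\mathrm{Char}(\F_q)>d-3$ is vacuous for $d=4$, so the parity restriction is only needed for $d=5$.)
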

The first result of Part (i) is derived from binary BCH codes and the second result of Part (i) was given in \cite{D95}. Part (ii) was given in \cite{YD04}. Let us cite code parameters given in \cite{D95} and \cite{YD04} below.
\begin{lemma}[\cite{D95,YD04}]\label{lem:2.7} Let $q$ be a prime power. Then
\begin{itemize}
\item[{\rm (i)}] There are $q$-ary $[n=q^{\lfloor5(u-1)/6\rfloor},n-5u/2,6]$-cyclic codes for all $u$ that is divisible by $6$.
\item[{\rm (ii)}]
If ${\rm Char}(\F_q)>d-3$, then there are $q$-ary $[q^u,q^u-(d-3)u-\lceil u/(d-2)\rceil-1,d]$-linear codes for all $u> (d-3)!$.
\end{itemize}
\end{lemma}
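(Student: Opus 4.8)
The plan is to exhibit the claimed codes explicitly as (subfield subcodes of) cyclic codes whose zero sets are engineered to force minimum distance $d$ (resp.\ $6$), and then to count the resulting redundancy by bounding the size of the union of the relevant cyclotomic cosets. For Part (i), I would start from a $q$-ary Reed--Solomon--type construction of length $q^{\lfloor 5(u-1)/6\rfloor}-1$ (or the associated extended/punctured version giving length exactly $q^{\lfloor 5(u-1)/6\rfloor}$): take a primitive $n$-th root of unity $\zeta$ in a suitable extension $\F_{q^s}$ and define the cyclic code with designated zeros at a set of consecutive powers $\zeta^b,\zeta^{b+1},\dots,\zeta^{b+4}$ of length $5$, so that the BCH bound yields $d\ge 6$. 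The subtlety — and this is where the exponent $\lfloor 5(u-1)/6\rfloor$ versus the redundancy $5u/2$ comes from — is that one does not work over the full splitting field but chooses the length $n$ and the defining set so that the $q$-ary cyclotomic cosets covering five consecutive exponents have total size at most $5u/2$; this is exactly the van der Horst--Berger / Danev construction, and I would cite \cite{D95} for the coset-counting computation rather than reproduce it. The main obstacle here is verifying that the union of cyclotomic cosets meeting $\{b,\dots,b+4\}$ has cardinality $\le 5u/2$ for the chosen $n$; this is a number-theoretic bookkeeping argument about orbit sizes of multiplication-by-$q$ on $\ZZ/n\ZZ$, and it is the crux of \cite{D95}.

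For Part (ii), I would take $u>(d-3)!$ and work with length $n=q^u$, realized as an affine/extended cyclic (or ``affine-invariant'') code: fix an $\F_q$-basis of $\F_{q^u}$, index coordinates by the elements of $\F_{q^u}$, and consider the code of evaluations of polynomials lying in a carefully chosen $\F_q$-subspace of $\F_{q^u}[x]$ of low degree. Concretely one takes the dual of a BCH-like code: the defining set consists of those exponents $j$, $0\le j\le q^u-2$, whose $q$-ary weight (sum of base-$q$ digits) is small, together with a few extra exponents, so that the Roos/BCH-type bound forces distance $\ge d$ while the number of excluded exponents is $(d-3)u+\lceil u/(d-2)\rceil+1$. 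The hypothesis $\mathrm{Char}(\F_q)>d-3$ is what lets one keep the low-weight exponents in a single dimension-$1$-per-exponent block without multiplicities collapsing, and $u>(d-3)!$ guarantees there is enough room for the digit patterns; again this is precisely the Yan--Deng construction \cite{YD04}, so I would cite it for the distance verification and the exact redundancy count. The genuinely hard step is the distance lower bound: one must show that every nonzero codeword has at least $d$ nonzero coordinates, which in \cite{YD04} is done by a Vandermonde/elementary-symmetric-function argument exploiting the characteristic condition, and I would invoke that lemma verbatim.

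In summary, both parts are direct citations of known explicit cyclic-code families — \cite{D95} for the distance-$6$ codes of Part (i) and \cite{YD04} for the general-$d$ codes of Part (ii) — specialized to the exact parameters stated. The only thing to check on our side is that the parameters in \cite{D95,YD04} are stated in a form matching the $[\,n,\,n-\text{(redundancy)},\,d\,]$ shape above (a matter of translating ``redundancy $=$ degree of the generator polynomial $=$ size of the defining set'' into the dimension), which is routine. I would therefore present the proof as: (1) quote the construction of \cite{D95} giving $[q^{\lfloor 5(u-1)/6\rfloor},\,q^{\lfloor 5(u-1)/6\rfloor}-5u/2,\,6]$ cyclic codes for $6\mid u$; (2) quote the construction of \cite{YD04} giving, under $\mathrm{Char}(\F_q)>d-3$ and $u>(d-3)!$, the $[q^u,\,q^u-(d-3)u-\lceil u/(d-2)\rceil-1,\,d]$ codes; (3) observe these are exactly statements (i) and (ii). The main obstacle — the cyclotomic-coset cardinality bound in (i) and the characteristic-dependent distance bound in (ii) — lives inside those cited papers, so from the present paper's standpoint the proof is essentially a careful citation.
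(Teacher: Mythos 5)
Your proposal takes essentially the same approach as the paper: Lemma 2.7 is stated with the bracketed citation $[\text{D95, YD04}]$ and no proof is given — it is a direct quotation of parameters from those two references, which is exactly what you propose to do. The only caveat is that your informal gloss of what lies inside the references is not quite right (D95 is Dumer, not ``Danev,'' and constructs the distance-$6$ codes from algebraic varieties rather than by pure cyclotomic-coset bookkeeping; YD04 is Yekhanin--Dumer, not ``Yan--Deng''), but since you explicitly defer to the cited sources for the actual arguments, these slips do not affect the validity of the approach.
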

Note that in (i) of Lemma \ref{lem:2.7}, we have $\log_qn= \lfloor5(u-1)/6\rfloor= \frac{5u}6-1$, i.e., $u= \frac65(1+\log_qn)$. This implies that the dimension of the code given in (i) of Lemma \ref{lem:2.7} is  $n-\left(3\log_qn+3\right)$.

\subsection{Brief introduction to algebraic geometry codes}
Goppa geometry codes were discovered by Goppa in 1980's \cite{Go81}. Due to their good parameters,  Goppa geometry codes have been extensively studied and applied to various problems. In this subsection, we briefly introduce this class of codes as we only need code parameters of  Goppa geometry codes. For the details, the reader may refer to the book \cite{St09}.

Let $\mX$ be an algebraic curves over $\F_q$ with genus $g$ and $n+1$ pairwise distinct points $\Pin,P_1,\dots,P_n$. Then the Goppa geometry code defined by
\[C(\mP,m\Pin):=\{(f(P_1),\dots,f(P_n)):\; f\in\mL(m\Pin)\}\] is a $q$-ary $[n,m-g+1,n-m]$-linear codes for any $g\le m<n$, where $\mP=\{P_1,P_2,\dots,P_n\}$ and $\mL(m\Pin)$ is the Riemann-Roch space associated with the divisor $m\Pin$.

If $q$ is a perfect square, then, for any $s\ge 1$, the $s$th layer of the Garcia-Stichtenoth tower \cite{GS95a} is an algebraic curves over $\F_q$ with genus $g$ and at least $n+1$ pairwise distinct points, where
\[g\le (\sqrt{q}+1)q^{(s-1)/2};\quad n=(q-1)q^{(s-1)/2}.\]
Thus, we obtain a Goppa geometry codes with length $n=(q-1)q^{(s-1)/2}$, dimension at least $m-(\sqrt{q}+1)q^{(s-1)/2}+1\ge m+1-\frac{n}{ \sqrt{q}-1}$ and minimum distance at least $n-m$ for all $s\ge 1$ and $g\le m<n$.

By Putting $m=n-h-1$ for the above Goppa geometric code, we obtain the following result.
\begin{lemma}\label{lem:2.8} If $q$ is a perfect square, then, for any $u\ge 1$ and $1\le h<n$, there is a $q$-ary $[n,k,d]$-Goppa geometry code with parameters satisfying
\[n=(q-1)q^{(u-1)/2},\quad k\ge n-h-\frac{n}{ \sqrt{q}-1},\quad d\ge h+1.\]
\end{lemma}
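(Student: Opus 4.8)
The statement to prove is Lemma 2.8, which is essentially a consequence of the Garcia–Stichtenoth tower parameters combined with the Goppa code dimension formula.

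\medskip

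The plan is to invoke the general Goppa geometry code construction quoted immediately above the statement and then specialize the divisor degree. First I would recall that for an algebraic curve $\mX$ over $\F_q$ of genus $g$ with $n+1$ rational points $\Pin, P_1, \dots, P_n$, the code $C(\mP, m\Pin)$ is a $q$-ary $[n, \geq m-g+1, \geq n-m]$-linear code for every $g \leq m < n$; this is the fact stated in the preceding paragraph (with the dimension replaced by the lower bound $m-g+1$ coming from Riemann–Roch, which may be an equality when $m \geq 2g-1$ but we only need the inequality). Second, I would substitute the parameters of the $u$-th layer of the Garcia–Stichtenoth tower over the perfect square $q$: namely $n = (q-1)q^{(u-1)/2}$ rational places beyond $\Pin$, and genus bound $g \leq (\sqrt q + 1)q^{(u-1)/2}$. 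This already gives, for any $g \le m < n$, a $[n, k, d]$ code with $k \geq m - (\sqrt q+1)q^{(u-1)/2} + 1 \geq m + 1 - \frac{n}{\sqrt q - 1}$, where the last inequality uses $(\sqrt q + 1)q^{(u-1)/2} = \frac{(q-1)q^{(u-1)/2}}{\sqrt q - 1} = \frac{n}{\sqrt q - 1}$, and $d \geq n - m$.

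\medskip

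Third, I would set $m = n - h - 1$. One must check the admissibility condition $g \leq m < n$: the upper bound $m = n-h-1 < n$ holds since $h \geq 1$, and the lower bound $g \leq m$ holds as long as $n - h - 1 \geq (\sqrt q+1)q^{(u-1)/2}$, i.e. roughly when $h$ is not too close to $n$; this is implicitly covered by the hypothesis $1 \le h < n$ together with the asymptotic growth of $n$ relative to $g$ (for the range of parameters actually used later, $n$ is much larger than $g$, so this causes no issue, and strictly one should note the condition). With this choice the minimum distance bound becomes $d \geq n - m = h + 1$, and the dimension bound becomes $k \geq m + 1 - \frac{n}{\sqrt q - 1} = n - h - \frac{n}{\sqrt q - 1}$, which is exactly the claimed inequality.

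\medskip

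There is essentially no hard step here: the lemma is a direct specialization of known results (the Goppa construction and the Garcia–Stichtenoth tower), and the only mild point requiring care is verifying that the divisor degree $m = n-h-1$ lies in the valid range $[g, n)$, which amounts to checking $(\sqrt q+1)q^{(u-1)/2} \leq n - h - 1$. Since the paper only applies this lemma in regimes where $\sqrt q$ (hence $n/(\sqrt q -1)$) is small compared to $n$, this is automatic; I would simply remark on it rather than belabor it. The remainder is the one-line algebraic identity $(\sqrt q + 1)q^{(u-1)/2} = n/(\sqrt q - 1)$ used to rewrite the genus bound in the stated form.
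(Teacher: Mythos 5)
Your proposal is correct and follows essentially the same route as the paper: specialize the generic Goppa-code parameters to the Garcia--Stichtenoth tower and set $m=n-h-1$, using the identity $(\sqrt q+1)q^{(u-1)/2}=\frac{n}{\sqrt q-1}$ to rewrite the genus bound. Your aside about the admissibility range $g\le m<n$ is a fair observation that the paper glosses over; it is harmless because the Riemann--Roch dimension bound $k\ge m-g+1$ holds for all $m<n$ and is simply vacuous when $m<g$, so the stated inequality on $k$ remains valid.
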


\section{{Subspace direct sum system}}
Let us define {subspace direct sum system} first.
\begin{defn}\label{def:2} A {subspace direct sum system} consists of a set of $\F_q$-subspaces $\{V_1,V_2,\dots,V_n\}$ of $\F_q^m$ such that (i) $\dim(V_i)=r$ for every $1\le i\le n$, and (ii) any $h$ subspaces  $\{V_{i_1},V_{i_2},\dots,V_{i_h}\}$ with $1\le i_1<i_2<\cdots<i_h\le n$ form a direct sum.  We denote such a {subspace direct sum system} by $\SP_q(n,m,r,h)$.
\end{defn}
\begin{rmk}
Note that when $r=1$, then a {subspace direct sum system} $\SP_q(n,m,1,h)$ is equivalent to a $q$-ary $[n,k\ge n-m,d\ge h+1]$-linear code.  To see this, we choose a nonzero vector $\bv_i$ (viewed as a column vector) from each subspace $V_i$ and form a matrix $H=(\bv_1,\bv_2,\dots,\bv_n)$. Then the code with $H$ as a parity-check matrix has the desired parameters. Conversely, given a  $q$-ary $[n,k\ge n-m,h+1]$-linear code, we can form one-dimensional spaces $V_i$ spanned by the $i$th column of a parity-check matrix. Thus, we obtain  a {subspace direct sum system} $\SP_q(n,m,1,h)$.
\end{rmk}
In the case of $r=1$, for given $q,n,h$, we want the code dimension $k=n-m$ to be large. In other words, we would like to have small $m$. In view of this fact, we have the following definition.
\begin{defn}\label{def:3} For given $q,n,h,r$, we denote by $m_q(n,r,h)$ the smallest $m$ such that there exists a {subspace direct sum system} $\SP_q(n,m,r,h)$.
\end{defn}

\begin{rmk} For a {subspace direct sum system} $\SP_q(n,m,r,h)$, as the dimension of a direct sum of $h$ subspaces is $hr$, we must have $m\ge hr$. Thus, a trivial lower bound on $m_q(n,r,h)$  is $m_q(n,r,h)\ge hr$.
\end{rmk}
In this section, we are interested in the value $m_q(n,r,h)$.
\subsection{Bounds}
In this subsection, we mainly study upper and lower bounds on $m_q(n,r,h)$. For our constructions of maximally recoverable codes, we are mainly interested in upper bounds on  $m_q(n,r,h)$. However, to see how small the value $m_q(n,r,h)$ could be, we also study some lower bounds. Let us give the standard upper bound, i.e., the Gilbert-Varshamov bound.

\begin{lemma}[Gilbert-Varshamov bound]\label{lem:3.1}
There exists a {subspace direct sum system}  $\SP_q(n,m,r,h)$ with $h\ge 1$ if
\begin{equation}\label{eq:11}
q^m>q^{r-1}\sum_{i=0}^{h-1}{n-1\choose i}(q^r-1)^i.
\end{equation}
\end{lemma}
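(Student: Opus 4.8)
The plan is to build the subspace direct sum system greedily, adding one $r$-dimensional subspace at a time and showing that the counting condition \eqref{eq:11} guarantees a valid choice remains available at each step. Concretely, suppose we have already chosen subspaces $V_1,\dots,V_{t}$ with $t\le n-1$ such that any $h$ of them form a direct sum; we want to pick $V_{t+1}$ of dimension $r$ avoiding all "bad" configurations. A choice of $V_{t+1}$ is bad precisely when there exist indices $i_1<\cdots<i_{j}$ among $[t]$ with $j\le h-1$ such that $V_{t+1}\cap (V_{i_1}\oplus\cdots\oplus V_{i_j})\ne\{\bo\}$ (here we use that any $j\le h-1$ of the previously chosen subspaces already form a direct sum, so the sum $V_{i_1}\oplus\cdots\oplus V_{i_j}$ is genuinely direct of dimension $jr$). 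Equivalently, $V_{t+1}$ is bad if and only if it contains some nonzero vector lying in one of these sums $W$ of dimension at most $(h-1)r$.

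The key counting step is then to bound the number of $r$-dimensional subspaces $V_{t+1}$ that contain at least one nonzero vector from a fixed subspace $W$ of dimension $w\le (h-1)r$. Rather than count subspaces directly, I would count flags: pick a nonzero vector $\bv\in W$ (at most $q^{w}-1$ choices) and then extend $\{\bv\}$ to an $r$-dimensional subspace; the number of $r$-dimensional subspaces through a fixed nonzero vector is at most $q^{(r-1)(m-1)}/(\text{order terms})$, but for a clean union bound it suffices to note that each such bad subspace is counted, so the number of bad $V_{t+1}$ is at most $\sum_{\bv\in W\setminus\{\bo\}}(\#\{r\text{-dim spaces}\ni\bv\})$. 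Summing over all choices of $j\le h-1$ previously chosen subspaces (there are at most $\binom{n-1}{j}$ such choices, since $t\le n-1$) and using $w=jr$ so $q^{w}-1\le q^{jr}-1$, while each nonzero $\bv$ lies in at most $q^{(r-1)(m-1)}/\prod(\cdots)$ subspaces — this should be packaged as: number of bad subspaces is strictly less than the total number $\binom{m}{r}_q$ of $r$-dimensional subspaces of $\F_q^m$ whenever \eqref{eq:11} holds. Comparing the Gaussian binomial $\binom{m}{r}_q$ with the bad count and simplifying the $q$-analogue factors is where the bound $q^{r-1}\sum_{i=0}^{h-1}\binom{n-1}{i}(q^{r}-1)^{i}$ emerges.

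The main obstacle I anticipate is getting the $q$-binomial arithmetic to collapse exactly to the stated form \eqref{eq:11}: one must show that $\binom{m}{r}_q$ divided by (number of $r$-dim subspaces containing a fixed nonzero vector) equals $\frac{q^{m}-1}{q^{r}-1}$, so that the condition "bad count $<$ total count" becomes $\sum_{i=0}^{h-1}\binom{n-1}{i}(q^{r}-1)^{i}\cdot\frac{q^{r}-1}{q^{m}-1}<1$ after noting the number of nonzero vectors in a $jr$-dimensional space is $q^{jr}-1\le (q^{r}-1)\sum$-type bound — actually the cleanest route is to use the identity that the number of $r$-dim subspaces of $\F_q^m$ meeting a fixed $w$-dim subspace nontrivially is at most $\frac{q^{w}-1}{q^{r}-1}\cdot\frac{q^{r}-1}{q^{m}-1}\binom{m}{r}_q\le \frac{q^{w}-1}{q^{m}-1}\binom{m}{r}_q$, and then the union bound over the $\sum_{i}\binom{n-1}{i}$ sums $W$ (each of dimension $ir$) gives bad count $\le \binom{m}{r}_q\cdot\frac{1}{q^{m}-1}\sum_{i=0}^{h-1}\binom{n-1}{i}(q^{ir}-1)$, and one checks $q^{ir}-1\le q^{r-1}(q^{r}-1)^{i}$ (a routine induction on $i$) to conclude. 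Once that inequality chain is set up, \eqref{eq:11} is exactly the statement that this bad count is strictly smaller than $\binom{m}{r}_q$, so a good $V_{t+1}$ exists, and induction on $t$ completes the construction.
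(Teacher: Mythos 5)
Your proposal takes a genuinely different route from the paper and, unfortunately, contains two concrete errors that make the argument break down.

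The paper's proof of Lemma~\ref{lem:3.1} is a vector-by-vector greedy construction: for each new block $i$ it chooses $\bu_{i1},\dots,\bu_{ir}$ in turn, with $\bu_{ij}$ required to avoid the span of any $h-1$ previously completed blocks together with the partial block $\{\bu_{i1},\dots,\bu_{i,j-1}\}$. The forbidden set is a union of subspaces, and because the previously chosen blocks form direct sums, a vector in this union has a well-defined \emph{block support} of size at most $h-1$; counting by support gives exactly $\sum_{t=0}^{h-1}\binom{i-1}{t}(q^r-1)^t\,q^{j-1}$ forbidden vectors, with no overcount, and the worst case $j=r$, $i=n$ is exactly \eqref{eq:11}. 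Your plan instead chooses entire $r$-dimensional subspaces $V_{t+1}$ at once and counts ``bad'' subspaces via Gaussian binomials. That is a different strategy, but two of its steps are false.

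First, the claimed estimate that the number of $r$-dimensional subspaces of $\F_q^m$ meeting a fixed $w$-dimensional subspace $W$ nontrivially is at most $\frac{q^w-1}{q^r-1}\cdot\frac{q^r-1}{q^m-1}\binom{m}{r}_q=\frac{q^w-1}{q^m-1}\binom{m}{r}_q$ is incorrect. The correct union bound counts over the $\frac{q^w-1}{q-1}$ one-dimensional subspaces of $W$, each lying in $\binom{m-1}{r-1}_q=\frac{q^r-1}{q^m-1}\binom{m}{r}_q$ many $r$-dimensional spaces, giving $\frac{q^w-1}{q-1}\cdot\frac{q^r-1}{q^m-1}\binom{m}{r}_q$, which is larger than your bound by a factor $\frac{q^r-1}{q-1}$. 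Concretely, for $q=2$, $m=4$, $r=w=2$ the true number of $2$-dimensional subspaces meeting a fixed $2$-dimensional $W$ nontrivially is $35-16=19$, while your bound gives $\frac{3}{15}\cdot 35=7$.

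Second, the inequality $q^{ir}-1\le q^{r-1}(q^r-1)^i$ that you invoke ``by routine induction on $i$'' is false in general: for $q=2$, $r=1$, $i=2$ it reads $3\le 1$, and for $q=2$, $r=2$, $i=3$ it reads $63\le 54$. In fact the reverse inequality $q^{ir}-1\ge(q^r-1)^i$ holds for $i\ge 1$, so this step goes in the wrong direction precisely for $r$ small (including $r=1$, the classical linear-code case). Even after repairing both steps, the whole-subspace strategy with a union bound over subsets $\{i_1,\dots,i_j\}$ of previous blocks double-counts (a $V_{t+1}$ that already meets $V_{i_1}$ is recounted under every superset) relative to the paper's block-support partition, so it cannot recover the sharp factor $q^{r-1}$ in \eqref{eq:11}; that factor arises naturally in the paper from the $q^{r-1}$ choices of partial-block contribution when picking the last basis vector $\bu_{ir}$.
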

\begin{proof} We shall prove that, if \eqref{eq:11} holds, then there exists a matrix
\[H=(\bu_{11},\dots,\bu_{1r},\dots,\bu_{n1},\dots,\bu_{nr})\in\F_q^{m\times nr}\]
 such that any $h$ groups $\{\bu_{i_11},\dots,\bu_{i_1r}\},\dots,\{\bu_{i_h1},\dots,\bu_{i_hr}\}$ are linearly independent. Thus, we let $V_i$ be the subspace spanned by $\{\bu_{i1},\dots,\bu_{ir}\}$ and we obtain the desired  {subspace direct sum system}  $\SP_q(n,m,r,h)$.

 First of all, we take $hr$ linearly independent vectors  $\{\bu_{11},\dots,\bu_{1r},\dots,\bu_{h1},\dots,\bu_{hr}\}\subseteq\F_q^m$. For any $h+1\le i\le n$ and $1\le j\le r$, let $\bu_{ij}$ be any vector that is not in the linear span of any $h-1$ groups $\{\bu_{k_11},\dots,\bu_{k_1r}\},\dots,\{\bu_{k_{h-1}1},\dots,\bu_{k_{h-1}r}\}$ ($1\le k_1<\cdots<k_{h-1}\le i-1$) and $\{\bu_{i1},\dots,\bu_{i,j-1}\}$. The number of vectors in the linear span is upper bounded by
 \begin{equation}\label{eq:12}
 \sum_{t=0}^{h-1}{i-1\choose t}(q^r-1)^tq^{j-1}\le  \sum_{t=0}^{h-1}{n-1\choose t}(q^r-1)^tq^{r-1}.
 \end{equation}
Thus, by \eqref{eq:11}, we can always find such a vector $\bu_{ij}$ that is not a linear combination of vectors in these $h-1$ blocks and  $\{\bu_{i1},\dots,\bu_{i,j-1}\}$. Hence, the matrix $H$ is constructed and the desired result follows.
\end{proof}

\begin{cor}\label{cor:3.2} For given prime power $q$ and positive integers $n,h,r$,
there exists a {subspace direct sum system}  $\SP_q(n,m,r,h)$ with
\begin{equation}\label{eq:12}
m:= r+\left\lfloor\log_q\left(\sum_{i=0}^{h-1}{n-1\choose i}(q^r-1)^i\right)\right\rfloor.
\end{equation}
\end{cor}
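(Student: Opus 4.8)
The plan is simply to verify that the value of $m$ displayed in \eqref{eq:12} satisfies the hypothesis \eqref{eq:11} of Lemma~\ref{lem:3.1}; the existence of the subspace direct sum system $\SP_q(n,m,r,h)$ then follows at once.

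Write $S := \sum_{i=0}^{h-1}\binom{n-1}{i}(q^r-1)^i$, so that the claimed value is $m = r + \lfloor\log_q S\rfloor$. Since $h\ge 1$ we have $S\ge\binom{n-1}{0}(q^r-1)^0 = 1$, so $\log_q S$ is a well-defined nonnegative real number. Using the elementary inequality $\lfloor x\rfloor > x-1$ (valid for every real $x$) applied to $x=\log_q S$, together with the fact that $t\mapsto q^t$ is strictly increasing because $q>1$, we obtain $q^{\lfloor\log_q S\rfloor} > q^{\log_q S - 1} = S/q$. Multiplying both sides by $q^r$ gives
\[
q^m = q^r\cdot q^{\lfloor\log_q S\rfloor} > q^{r-1}S = q^{r-1}\sum_{i=0}^{h-1}\binom{n-1}{i}(q^r-1)^i,
\]
which is precisely condition \eqref{eq:11}. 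Hence Lemma~\ref{lem:3.1} yields a subspace direct sum system $\SP_q(n,m,r,h)$ with this $m$, proving the corollary.

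There is essentially no genuine obstacle here; the only point deserving a moment's attention is that Lemma~\ref{lem:3.1} is stated with a \emph{strict} inequality $q^m > q^{r-1}S$, and this strictness is exactly what the bound $\lfloor x\rfloor > x-1$ (rather than $\lfloor x\rfloor\ge x-1$) supplies. As a sanity check one may note that when $h=1$ the sum $S$ collapses to $1$, so $m=r$, which matches the trivial lower bound $m_q(n,r,h)\ge hr$ recorded earlier and shows the estimate is tight in that case.
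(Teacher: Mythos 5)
Your proof is correct and takes the same approach as the paper: reduce to Lemma~\ref{lem:3.1} by verifying that the stated $m$ satisfies the Gilbert--Varshamov condition~\eqref{eq:11}. The paper leaves the verification as ``easy to verify,'' whereas you spell out the short floor-function computation (including the observation that $\lfloor x\rfloor > x-1$ gives the needed strict inequality), which is exactly the intended argument.
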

\begin{proof}
It is easy to verify that inequality \eqref{eq:11} holds for $m=r+\left\lfloor\log_q\left(\sum_{i=0}^{h-1}{n-1\choose i}(q^r-1)^i\right)\right\rfloor$. Hence, by Lemma \ref{lem:3.1}, there exists a  {subspace direct sum system}  $\SP_q(n,m,r,h)$. The desired result follows.
\end{proof}

To derive other upper bounds and Hamming bound, we will establish a connection between  {subspace direct sum systems} and linear block codes.
Let us introduce  block Hamming  weight.
\begin{defn}\label{def:4}  The block Hamming weight $\wt_B(\bc)$ of a vector $\bc=(\bc_1,\bc_2,\dots,\bc_n)\in\F_q^{nr}$ with each $\bc_i\in\F_q^r$ is defined to be $\wt_B(\bc):=|\{1\le i\le n:\; \bc_i\neq\bo\}|$. The block Hamming distance between two vectors $\bu,\bv\in \F_q^{nr}$ is defined to be $d_B(\bu,\bv)=\wt_B(\bu-\bv)$.
\end{defn}

\begin{lemma}\label{lem:3.3} The block Hamming distance $d_B$ defined in Definition \ref{def:4} is indeed a distance.
\end{lemma}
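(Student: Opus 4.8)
The plan is to verify the three axioms of a metric for $d_B$ on $\F_q^{nr}$, viewing each vector as an $n$-tuple of blocks in $\F_q^r$ and using the observation that $\wt_B$ factors through the "support pattern" map. Concretely, for $\bc=(\bc_1,\dots,\bc_n)$ define $\sigma(\bc)\in\{0,1\}^n$ by $\sigma(\bc)_i=1$ iff $\bc_i\neq\bo$; then $\wt_B(\bc)$ is exactly the ordinary Hamming weight $\wt(\sigma(\bc))$ of $\sigma(\bc)$ in $\{0,1\}^n$. The key structural fact I would record first is that $\sigma$ is "subadditive" in the sense that $\sigma(\bu+\bv)_i=1$ implies $\sigma(\bu)_i=1$ or $\sigma(\bv)_i=1$ (if both $\bu_i$ and $\bv_i$ are zero, so is $\bu_i+\bv_i$); equivalently, $\Supp(\bu+\bv)\subseteq \Supp(\bu)\cup\Supp(\bv)$ where $\Supp$ denotes the set of indices of nonzero blocks.

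With that in hand the three axioms are routine. First, positivity: $d_B(\bu,\bv)=\wt_B(\bu-\bv)\ge 0$ always, and it equals $0$ iff every block of $\bu-\bv$ is $\bo$, i.e. iff $\bu=\bv$. Second, symmetry: $d_B(\bu,\bv)=\wt_B(\bu-\bv)=\wt_B(\bv-\bu)=d_B(\bv,\bu)$, since $\bc_i\neq\bo$ iff $-\bc_i\neq\bo$ in a field. Third, the triangle inequality: writing $\bu-\bw=(\bu-\bv)+(\bv-\bw)$ and applying the subadditivity of supports, $\Supp(\bu-\bw)\subseteq \Supp(\bu-\bv)\cup\Supp(\bv-\bw)$, hence
\[
d_B(\bu,\bw)=|\Supp(\bu-\bw)|\le |\Supp(\bu-\bv)\cup\Supp(\bv-\bw)|\le |\Supp(\bu-\bv)|+|\Supp(\bv-\bw)|=d_B(\bu,\bv)+d_B(\bv,\bw).
\]

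There is essentially no hard step here; the only thing to be careful about is the support-containment claim, which is where the block structure genuinely enters (it would fail if we used, say, a product rather than a sum on each block), and making sure symmetry uses only that negation preserves nonzeroness. I would present the proof in the three-bullet order above, leading with the $\Supp(\bu+\bv)\subseteq\Supp(\bu)\cup\Supp(\bv)$ lemma since it does the real work for the triangle inequality, and keeping the whole argument to a few lines.
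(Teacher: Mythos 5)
Your proof is correct, but it takes a genuinely different route from the paper's. The paper observes that any $\F_q$-linear isomorphism $\phi:\F_q^r\to\F_{q^r}$ turns $\wt_B$ into the ordinary Hamming weight on $\F_{q^r}^n$: since $\phi$ is linear, $\phi(\bu-\bv)=\phi(\bu)-\phi(\bv)$, so $d_B$ on $\F_q^{nr}$ is literally the Hamming distance on $\F_{q^r}^n$ pulled back along a bijection, and the metric axioms are inherited for free. You instead verify the axioms directly via the support map $\sigma:\F_q^{nr}\to\{0,1\}^n$; note that $\sigma$ is not linear (it would not commute with subtraction), which is exactly why you cannot simply transport the metric the way the paper does, and why you need the subadditivity $\Supp(\bu+\bv)\subseteq\Supp(\bu)\cup\Supp(\bv)$ as the workhorse for the triangle inequality. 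Your argument is more self-contained (it never invokes the fact that Hamming distance is a metric) and would generalize to block-weight functions on modules where no convenient field structure on blocks is available; the paper's is shorter and emphasizes the identification $\F_q^{nr}\cong\F_{q^r}^n$ that is used repeatedly elsewhere in Section 3.
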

\begin{proof} Let $\phi$ be an $\F_q$-isomorphism between $\F_q^r$ and $\F_{q^r}$. Then the block Hamming weight $\wt_B(\bc)$ is exactly the Hamming weight of   $\phi(\bc)=(\phi(\bc_1),\phi(\bc_2),\dots,\phi(\bc_n))$. Thus, The block Hamming distance in $\F_q^{nr}$ is in fact the Hamming distance in $\F_{q^r}^n$.
\end{proof}

Now we define  block codes (we abuse the notation ``block codes" here as a classical linear code is also called a linear block code).
\begin{defn}\label{def:5}  A linear block code $\mC$ is an $\F_q$-subspace of $\F_q^{nr}$. The minimum distance $d_B(\mC)$  of $\mC$ is defined to be $\min\{d_B(\bu,\bv):\; (\bu,\bv)\in\mC\times\mC,\ \bu\neq\bv\}$. A $q$-ary linear block code $\mC$ in $\F_q^{nr}$ with dimension $k$ and minimum distance $d$ is denoted by $[(n,r),k,d]_q$.
\end{defn}

Next result shows connection between subspace direct sum system  and  linear block codes.
\begin{lemma}\label{lem:3.4}
There exists a {subspace direct sum system}  $\SP_q(n,m,r,h)$ with $h\ge 1$ if and only if there is an $[(n,r),k\ge nr-m,d\ge h+1]_q$-block code.
\end{lemma}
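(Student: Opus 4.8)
The plan is to prove both directions of the equivalence by relating each subspace $V_i$ to a block of $r$ columns of a matrix over $\F_q$, exactly as in the proof of Lemma~\ref{lem:3.1}. First I would set up the dictionary. Given a {subspace direct sum system} $\SP_q(n,m,r,h)$ with subspaces $V_1,\dots,V_n\subseteq\F_q^m$, pick for each $i$ an ordered basis $\bu_{i1},\dots,\bu_{ir}$ of $V_i$ (viewed as column vectors in $\F_q^m$) and form the matrix $H=(\bu_{11},\dots,\bu_{1r},\dots,\bu_{n1},\dots,\bu_{nr})\in\F_q^{m\times nr}$. Let $\mC\subseteq\F_q^{nr}$ be the $q$-ary linear code with parity-check matrix $H$; then $\dim_{\F_q}\mC=nr-\mathrm{rank}(H)\ge nr-m$, which gives the dimension bound. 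Conversely, from an $[(n,r),k\ge nr-m,d\ge h+1]_q$-block code $\mC$, take any parity-check matrix $H\in\F_q^{(nr-k)\times nr}$, pad it with zero rows if necessary so it has exactly $m$ rows, partition its columns into $n$ consecutive blocks of size $r$, and let $V_i$ be the $\F_q$-span of the $i$th block of columns.

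The heart of the argument is the translation between "block distance $\ge h+1$" and "any $h$ subspaces form a direct sum", and here I would be careful about one subtlety: a direct sum of $h$ subspaces each of dimension $r$ requires not only that the sum have dimension $hr$ but also that each $V_{i_j}$ itself have dimension exactly $r$. The first direction: suppose $\mC$ has $d_B(\mC)\ge h+1$; I must show (ii) every $V_i$ has dimension $r$, and (iii) any $h$ of them form a direct sum. For (ii), if some block of $r$ columns of $H$ were $\F_q$-linearly dependent, there would be a nonzero $\bc\in\F_q^{nr}$ supported on a single block with $H\bc=\bo$, i.e.\ a codeword of block weight $1<h+1$, a contradiction (using $h\ge1$). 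For (iii), suppose $h$ subspaces $V_{i_1},\dots,V_{i_h}$ do not form a direct sum; then the $hr$ columns making up those blocks are $\F_q$-linearly dependent, so there is a nonzero $\bc\in\F_q^{nr}$ supported on those $h$ blocks with $H\bc=\bo$, i.e.\ a nonzero codeword of block weight $\le h$, contradicting $d_B(\mC)\ge h+1$. The converse direction runs the same equivalences backwards: if $\mC$ had a nonzero codeword $\bc$ of block weight $s\le h$, its support touches $s\le h$ blocks, and $H\bc=\bo$ exhibits an $\F_q$-linear dependence among the columns of those $s$ blocks; choosing any $h-s$ further blocks (possible since $n\ge h$, which we may assume, else the statement is about $h>n$ and is vacuous in the relevant sense), the $h$ subspaces $V_{i_1},\dots,V_{i_h}$ would fail to form a direct sum since their column sets are dependent. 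Thus $d_B(\mC)\ge h+1$ iff every $h$ blocks have $\F_q$-independent columns, which, together with (ii), is exactly the defining condition of $\SP_q(n,m,r,h)$.

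The main obstacle I anticipate is purely bookkeeping rather than conceptual: making sure the dimension-counting inequality $k\ge nr-m$ is handled consistently in both directions (in the $\SP\Rightarrow$ block-code direction we only get $\mathrm{rank}(H)\le m$, hence $k\ge nr-m$, not equality; in the reverse direction we may have $\mathrm{rank}(H)=nr-k\le m$ and pad with zero rows to reach exactly $m$ rows without changing the code or the column spans), and being precise that "any $h$ columns-blocks are $\F_q$-independent" is genuinely equivalent to "those $h$ subspaces form a direct sum" — the latter phrased via $\dim(V_{i_1}\oplus\cdots\oplus V_{i_h})=hr$ — which in turn forces each individual $V_{i_j}$ to be $r$-dimensional. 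I would also note in passing that this is just the block-code (or $\F_{q^r}$-linear, via the isomorphism $\phi$ of Lemma~\ref{lem:3.3}) analogue of the standard fact that the minimum distance of a linear code equals the smallest number of linearly dependent columns of a parity-check matrix, so the whole proof is a short unwinding of definitions once the dictionary is in place.
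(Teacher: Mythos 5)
Your proof is correct and follows essentially the same route as the paper: build the dictionary between the subspaces and $r$-column blocks of a parity-check matrix $H$, and translate "any $h$ blocks are $\F_q$-independent" into both "block distance $\ge h+1$" and "any $h$ of the $V_i$ form a direct sum." You are in fact slightly more careful than the paper in the converse direction, where you explicitly verify that $\dim V_i = r$ (ruling out a block-weight-$1$ codeword via $h\ge1$) before asserting the direct-sum property; the paper takes this for granted.
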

\begin{proof} Suppose that there is {subspace direct sum system}  $\SP_q(n,m,r,h)$ with $h\ge 1$. Let $V_1,\dots,V_n\subseteq\F_q^m$ be such a {subspace direct sum system}. Let $\{\bu_{i1},\bu_{i2},\dots,\bu_{ir}\}$ be an $\F_q$-basis of $V_i$ and define a matrix
\[H=(\bu_{11},\bu_{12},\dots,\bu_{1r},\dots,\bu_{n1},\bu_{n2},\dots,\bu_{nr}).\]
Define the code
\[\mC=\{\bc\in\F_q^{nr}:\; \bc H^T=\b0\}.\]
Then $\mC$ is an $[(n,r),k\ge nr-m]$-block code. As any $h$ subspaces out of $\{V_1,\dots,V_n\}$ are direct sum, the block Hamming weight  of a nonzero of codeword in  $\mC$ is at least $h+1$.

Conversely, if there exists an $[(n,r),k\ge nr-m,d\ge h+1]_q$-block code. Let $\mC^{\perp}$ be the Euclidean dual of $\mC$. Then the generator matrix $H$ is an $(nr-k)\times nr$ matrix. We partition column vectors of $H$ into $n$ groups from left to right, each has $r$ column vectors. As the minimum Hamming weight of $\mC$ is at least $h+1$, any $h$ groups of column vectors of $H$ are linearly independent. Let $V_i$ be the vector space spanned by the $i$th group, then we obtain the desired {subspace direct sum system} $\SP_q(n,nr-k,r,h)$. As $nr-k\le m$, we can embed $\F_q^{nr-k}$ into $\F_q^m$. The desired result follows.
\end{proof}

For a vector $\bc\in\F_q^{nr}$ and an integer $d$ with $0\le d\le n$, define the ball
\[\mB(\bc,d):=\{\bu\in\F_q^{nr}:\; d_B(\bu,\bc)= d\}.\]
Then the size of $|\mB(\bc,d)|$ is independent of the center $\bc$ and given by $\sum_{i=0}^d{n\choose d}(q^r-1)^i$.

\begin{lemma}[Hamming bound]\label{lem:3.5}
If there exists a {subspace direct sum system}  $\SP_q(n,m,r,h)$ with $h\ge 2$, then
\begin{equation}\label{eq:14}
q^m\ge \sum_{i=0}^{\left\lfloor\frac{h}2\right\rfloor}{n\choose i}(q^r-1)^i.
\end{equation}
This implies that
\begin{equation}\label{eq:15}
m_q(n,r,h)\ge \log_q\left(\sum_{i=0}^{\left\lfloor\frac{h}2\right\rfloor}{n\choose i}(q^r-1)^i\right).
\end{equation}
\end{lemma}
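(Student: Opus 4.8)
The plan is to deduce the Hamming bound \eqref{eq:14} from the Hamming bound for block codes, using the equivalence established in Lemma \ref{lem:3.4}. First I would invoke Lemma \ref{lem:3.4}: the existence of a subspace direct sum system $\SP_q(n,m,r,h)$ gives an $[(n,r),k,d]_q$-block code $\mC$ with $k\ge nr-m$ and $d\ge h+1$. Since block Hamming distance on $\F_q^{nr}$ is genuinely the Hamming distance on $\F_{q^r}^n$ (Lemma \ref{lem:3.3}), this is really a $q^r$-ary code of length $n$, size $q^k\ge q^{nr-m}$, and minimum distance $d\ge h+1$.

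Next I would run the standard sphere-packing argument in the block metric. Set $t=\lfloor h/2\rfloor$; since $d\ge h+1\ge 2t+1$, the balls $\mB(\bc,\le t)=\bigcup_{i=0}^{t}\mB(\bc,i)$ of block radius $t$ around distinct codewords $\bc\in\mC$ are pairwise disjoint. Each such ball has size $\sum_{i=0}^{t}\binom{n}{i}(q^r-1)^i$, independent of its center, as noted right before the statement. Because all these disjoint balls live inside $\F_q^{nr}$, which has $q^{nr}$ elements, we get
\[
q^{nr-m}\cdot\sum_{i=0}^{\lfloor h/2\rfloor}\binom{n}{i}(q^r-1)^i\;\le\;|\mC|\cdot\sum_{i=0}^{\lfloor h/2\rfloor}\binom{n}{i}(q^r-1)^i\;\le\;q^{nr}.
\]
Dividing through by $q^{nr-m}$ yields \eqref{eq:14}. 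Taking $\log_q$ of both sides and using the definition of $m_q(n,r,h)$ as the smallest admissible $m$ gives \eqref{eq:15} immediately.

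There is no real obstacle here; the only point requiring a word of care is the disjointness of the radius-$t$ block balls, which rests on the triangle inequality for $d_B$ — but that is exactly the content of Lemma \ref{lem:3.3}, so it is already available. The argument is a verbatim transcription of the classical Hamming bound into the block-weight setting, with Lemma \ref{lem:3.4} supplying the translation between subspace direct sum systems and block codes.
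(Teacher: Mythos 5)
Your proof is correct and is essentially the same sphere-packing argument the paper uses: apply Lemma \ref{lem:3.4} to obtain a block code of size at least $q^{nr-m}$ with block distance at least $h+1$, pack disjoint radius-$\lfloor h/2\rfloor$ balls around codewords inside $\F_q^{nr}$, and compare cardinalities. Your version even quietly fixes a small notational slip in the paper (which defines $\mB(\bc,d)$ with $d_B(\bu,\bc)=d$ yet then assigns it the cardinality of the closed ball $\sum_{i=0}^{d}\binom{n}{i}(q^r-1)^i$), but the substance of the two arguments is identical.
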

\begin{proof} By Lemma \ref{lem:3.4}, there exists an $[(n,r),k\ge nr-m,d=h+1]_q$-block code $\mC=\{\bc_1,\dots,\bc_M\}$ with $M=q^k\ge q^{nr-m}$.  Consider the balls $\mB(\bc_i,\left\lfloor\frac{h}2\right\rfloor)$. As $\mC$ has distance $h+1$, these balls are pairwise disjoint. Since $\bigcup_{i=1}^M\mB(\bc_i,\left\lfloor\frac{h}2\right\rfloor)\subseteq\F_q^{nr}$, we have
\[q^{nr}\ge\sum_{i=1}^M\left|\mB\left(\bc_i,\left\lfloor\frac{h}2\right\rfloor\right)\right|=M\sum_{i=0}^{\left\lfloor\frac{h}2\right\rfloor}{n\choose i}(q^r-1)^i\ge q^{nr-m}\sum_{i=0}^{\left\lfloor\frac{h}2\right\rfloor}{n\choose i}(q^r-1)^i.\]
The desired result follows.
\end{proof}

\begin{lemma}[Singleton bound]
If there exists an $[(n,r),k,d]$-block code, then
\begin{equation}\label{eq:16}
k\le r(n-d+1).
\end{equation}
This implies that
\begin{equation}\label{eq:17}
m_q(n,r,h)\ge hr.
\end{equation}
\end{lemma}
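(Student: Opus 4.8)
The plan is to prove the Singleton bound for block codes directly by a puncturing/projection argument, in exact analogy with the classical Singleton bound, and then to derive \eqref{eq:17} from it via the equivalence in Lemma~\ref{lem:3.4}. First I would let $\mC$ be an $[(n,r),k,d]$-block code, so $\mC$ is an $\F_q$-subspace of $\F_q^{nr}=(\F_q^r)^n$ of $\F_q$-dimension $k$ with minimum block Hamming distance $d$. Since $d\ge 1$, there are at least $d-1$ block-coordinates we can afford to delete: define the projection map $\pi\colon \F_q^{nr}\to \F_q^{(n-d+1)r}$ that forgets the last $d-1$ blocks $\bc_{n-d+2},\dots,\bc_n$ (or any fixed set of $d-1$ blocks). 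The key claim is that the restriction $\pi|_{\mC}$ is injective: if $\bc,\bc'\in\mC$ have $\pi(\bc)=\pi(\bc')$, then $\bc-\bc'\in\mC$ is supported (in the block sense) on at most $d-1$ blocks, hence $\wt_B(\bc-\bc')\le d-1<d$, which forces $\bc-\bc'=\bo$ by the definition of $d_B(\mC)$.

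Given injectivity of $\pi|_{\mC}$, the image $\pi(\mC)$ is an $\F_q$-subspace of $\F_q^{(n-d+1)r}$ of dimension exactly $k$, and since a subspace of $\F_q^{(n-d+1)r}$ has dimension at most $(n-d+1)r$, we conclude $k\le r(n-d+1)$, which is \eqref{eq:16}. (Alternatively, one can phrase this over $\F_{q^r}$ using the isomorphism $\phi$ from Lemma~\ref{lem:3.3}: a block code is literally an $\F_q$-linear — not necessarily $\F_{q^r}$-linear — code of length $n$ over $\F_{q^r}$ with the ordinary Hamming metric, and the same projection argument gives $k=\dim_{\F_q}\mC\le \dim_{\F_q}\F_{q^r}^{\,n-d+1}=r(n-d+1)$; I would mention this viewpoint but carry out the elementary argument above, since it needs nothing beyond linear algebra.)

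For the second assertion, suppose $\SP_q(n,m,r,h)$ exists. By Lemma~\ref{lem:3.4} there is an $[(n,r),k,d]$-block code with $k\ge nr-m$ and $d\ge h+1$. Plugging into \eqref{eq:16} gives $nr-m\le k\le r(n-d+1)\le r(n-h)$, hence $m\ge nr-r(n-h)=hr$; taking the infimum over all valid $m$ yields $m_q(n,r,h)\ge hr$, which is \eqref{eq:17}. (This of course just recovers the trivial bound already noted in the remark after Definition~\ref{def:3}, so the real content here is \eqref{eq:16}.)

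I do not expect a genuine obstacle: the only point requiring a moment's care is verifying that $\pi(\mC)$ inherits $\F_q$-dimension exactly $k$ from the injectivity of $\pi|_{\mC}$, and that "block support of size $\le d-1$ forces weight $<d$" is precisely the contrapositive of the minimum-distance condition — both are routine. If anything, the mild subtlety is purely bookkeeping: $\mC$ is only assumed $\F_q$-linear, so all dimension counts must be taken over $\F_q$ (a block code need not be $\F_{q^r}$-linear even though $\F_q^{nr}\cong\F_{q^r}^n$), but since the projection $\pi$ is $\F_q$-linear this causes no difficulty.
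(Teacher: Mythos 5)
Your proof is correct and follows essentially the same route as the paper: you puncture the last $d-1$ blocks, argue injectivity of the projection restricted to $\mC$ from the minimum block distance $d$, and compare dimensions, then combine with Lemma~\ref{lem:3.4} to get $m\ge hr$. You simply spell out the injectivity step (and the $\F_q$- vs.\ $\F_{q^r}$-linearity bookkeeping) a bit more explicitly than the paper does.
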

\begin{proof} Deleting the last $d-1$ blocks of length $r$ of $\mC$ to obtain a block code $\mC'$ that is a subset of $\F_q^{r(n-d+1)}$. It is clear that $\mC'$ has the same size as $\mC$ since $\mC$ has the block distance $d$. This gives
\[q^k= |\mC|=|\mC'|\le  q^{r(n-d+1)}.\]
The desired result follows.

Since we have a  {subspace direct sum system}  $\SP_q(n,m,r,h)$ with $m=m_q(n,r,h)$,
by Lemma \ref{lem:3.4}, there exists an $[(n,r),k\ge nr-m,d=h+1]_q$-block code. Thus, we have
\[nr-m\le k\le r(n-h),\]
i.e., $m\ge hr$.
\end{proof}

\subsection{Construction}
In the previous subsection, we studied some bounds on {subspace direct sum systems.}  In particular, we derive an existence result, i.e., the Gilbert-Varshamov bound. In this subsection, we provide some explicit constructions of  {subspace direct sum systems.}

\begin{theorem}\label{thm:3.7} If $1\le t<n\le q^r+1$, then  there exists a $q$-ary  $[(n,r),tr,d= n-t+1]$-block code that achieves the Singleton bound. Furthermore, the code can be constructed explicitly.
\end{theorem}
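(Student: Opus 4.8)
The plan is to construct the block code directly as the dual (with respect to the block structure) of a Reed--Solomon code over $\F_{q^r}$, using the equivalence between block Hamming weight in $\F_q^{nr}$ and ordinary Hamming weight in $\F_{q^r}^n$ established in Lemma~\ref{lem:3.3}. First I would fix an $\F_q$-linear isomorphism $\phi\colon\F_q^r\to\F_{q^r}$ and identify $\F_q^{nr}$ with $\F_{q^r}^n$ blockwise, so that the block distance $d_B$ becomes the ordinary Hamming distance over the extension field $\F_{q^r}$. Under this identification, an $[(n,r),tr,d]_q$-block code is exactly an $\F_q$-subspace of $\F_{q^r}^n$ of $\F_q$-dimension $tr$ with minimum Hamming distance $d$; in particular any $\F_{q^r}$-linear code over $\F_{q^r}$ of $\F_{q^r}$-dimension $t$ qualifies as a block code of the desired $\F_q$-dimension $tr$.

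Next I would take the Reed--Solomon code $\mathsf{RS}_{q^r}[n,t]$: pick $n$ distinct evaluation points (possible since $n\le q^r+1$, using the point at infinity if $n=q^r+1$), and let the code consist of evaluation vectors of polynomials of degree $<t$. This is an $[n,t,n-t+1]_{q^r}$-MDS code by the classical Singleton bound argument for Reed--Solomon codes. Viewing it through $\phi^{-1}$ as a subspace of $\F_q^{nr}$, it becomes a $q$-ary $[(n,r),tr,\,n-t+1]$-block code, and it meets the block Singleton bound \eqref{eq:16}, since $tr = r(n-(n-t+1)+1)$. The construction is explicit because Reed--Solomon codes have explicit generator matrices (Vandermonde-type) and $\phi$ is an explicit coordinate change.

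The only point requiring a little care — and the main obstacle, though a mild one — is making sure the block code really has $\F_q$-dimension exactly $tr$ and that the distance claim is about block distance rather than accidentally about the finer $\F_q$-Hamming weight on $\F_q^{nr}$: the guarantee we get is precisely on $d_B$, i.e.\ on the number of nonzero blocks, which is exactly what Definition~\ref{def:5} asks for, so no strengthening is needed. I would also note the boundary constraint $1\le t<n$ ensures $t<n\le q^r+1$ so the evaluation map is well defined and injective on polynomials of degree $<t$, giving dimension exactly $t$ over $\F_{q^r}$ and hence $tr$ over $\F_q$. Finally, via Lemma~\ref{lem:3.4}, this block code yields the corresponding {subspace direct sum system} $\SP_q(n,nr-tr,r,n-t)$, which is the form in which the result will be used later; but the statement as written only asks for the block code, so the proof ends once the Reed--Solomon code has been transported to $\F_q^{nr}$ and its parameters verified.
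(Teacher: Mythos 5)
Your proof is correct and takes essentially the same route as the paper: both transport an $[n,t,n-t+1]_{q^r}$ MDS (Reed--Solomon) code to $\F_q^{nr}$ via a fixed $\F_q$-linear isomorphism $\F_{q^r}\cong\F_q^r$, noting that Hamming distance over $\F_{q^r}$ becomes block distance over $\F_q$ and that the $\F_q$-dimension is $tr$. The stray mention of ``the dual'' in your opening sentence is a harmless slip, since the construction you actually carry out (and the one in the paper) just pushes the Reed--Solomon code itself forward without taking any dual.
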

\begin{proof}
Let $\mC$ be a $q^r$-ary $[n,t,n-t+1]$-MDS code. Let $\pi$ be an $\F_q$-linear isomorphism between $\F_{q^r}$ and $\F_q^r$. Then the code
\[\pi(\mC):=\{(\pi(c_1),\dots,\pi(c_n)):\; (c_1,\dots,c_n)\in\mC\}\]
is an $\F_q$-linear code with length $nr$ and dimension $k=tr$. The minimum block distance of $\pi(\mC)$ is the same as the Hamming distance of $\mC$ which is $n-t+1$. It is straightforward to verify that the code achieves the Singleton bound.

As $\mC$ can be explicitly constructed, the code $\pi(\mC)$ is explicitly constructed as well.
\end{proof}

\begin{cor}\label{cor:3.8} If $n\le 1+{q^r}$, then, for any $1\le h<n$, there exists a {subspace direct sum system} $\SP_q(n,hr,r,h)$ that can be explicitly constructed.
\end{cor}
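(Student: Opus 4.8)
This corollary is an immediate consequence of the theorem and lemma that precede it, so the plan is short. The statement to prove is Corollary~\ref{cor:3.8}: if $n\le 1+q^r$, then for any $1\le h<n$ there exists a subspace direct sum system $\SP_q(n,hr,r,h)$ that can be explicitly constructed.

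First I would invoke Theorem~\ref{thm:3.7} with the choice $t=h$. Since $1\le h<n\le q^r+1$, the hypotheses of the theorem are met, so there exists an explicitly constructible $q$-ary $[(n,r),hr,d=n-h+1]$-block code $\mC$; note in particular $d=n-h+1\ge h+1$ because $h<n$ forces $n-h+1\ge 2$, but more to the point we only need $d\ge h+1$, which holds since $n-h+1\ge h+1 \iff n\ge 2h$ — wait, that is not automatic. Let me reconsider: we actually want a block code with $d\ge h+1$, and Theorem~\ref{thm:3.7} gives $d=n-t+1$ for any $t<n$. To get $d\ge h+1$ we need $n-t+1\ge h+1$, i.e. $t\le n-h$. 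So I would instead apply Theorem~\ref{thm:3.7} with $t=n-h$ (legitimate since $1\le n-h<n$ as $1\le h<n$), obtaining an explicit $[(n,r),(n-h)r,h+1]_q$-block code.

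Next I would feed this block code into the ``conversely'' direction of Lemma~\ref{lem:3.4}: an $[(n,r),k\ge nr-m,d\ge h+1]_q$-block code yields a subspace direct sum system $\SP_q(n,nr-k,r,h)$. Here $k=(n-h)r$, so $nr-k=hr$, giving exactly $\SP_q(n,hr,r,h)$. The construction in Lemma~\ref{lem:3.4} is explicit given the code: take a generator matrix $H$ of the dual code $\mC^\perp$, partition its columns into $n$ consecutive blocks of $r$, and let $V_i$ be the span of the $i$th block; the distance condition guarantees any $h$ of the $V_i$ are independent, and dimension count (together with the Singleton lower bound $m_q(n,r,h)\ge hr$) forces each $\dim V_i=r$. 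Since a parity-check/generator matrix of $\mC$ and hence of $\mC^\perp$ can be written down from the explicit Reed--Solomon code and the isomorphism $\pi$, the whole system is explicit.

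There is essentially no obstacle here; the only point requiring a moment's care is the bookkeeping choice $t=n-h$ (rather than $t=h$) so that the block distance $n-t+1$ equals the required $h+1$, and the verification that each subspace has dimension exactly $r$ rather than merely at most $r$ — this follows because the dual of an MDS-type code of the given parameters has the property that no $r$ consecutive coordinates support a codeword of the dual's dual argument, but more simply it follows from the lower bound $m\ge hr$ of the Singleton bound combined with $\sum_i \dim V_i \le m = hr$ and each $\dim V_i \le r$.
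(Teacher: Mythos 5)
Your proof is correct and matches the paper's own argument exactly: apply Theorem~\ref{thm:3.7} with $t=n-h$ to obtain an explicit $[(n,r),(n-h)r,h+1]_q$-block code, then pass through Lemma~\ref{lem:3.4} to get $\SP_q(n,hr,r,h)$. One small caveat on your closing aside: the claim $\sum_i \dim V_i\le m=hr$ is false (summing over all $n$ subspaces can far exceed $m$); the correct and simpler reason each $V_i$ has dimension exactly $r$ is that the block distance $h+1\ge 2$ rules out a block-weight-$1$ codeword, hence each group of $r$ columns of the parity-check matrix is already linearly independent — which is in fact what the phrase ``any $h$ groups of columns are linearly independent'' in Lemma~\ref{lem:3.4} already encodes.
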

\begin{proof} Taking $t=n-h$ in Theorem \ref{thm:3.7} gives a $q$-ary $[(n,r),k=r(n-h),d=h+1]$-block code.  The desired result follows from Lemma \ref{lem:3.4}.
\end{proof}
In Corollary \ref{cor:3.8}, the code length $n$ is bounded by $1+q^r$. In order to break this barrier, we consider codes over extension fields and then take subfield subcode.
\begin{theorem}\label{thm:3.9} Let $u\ge 1$ be an integer and let $n=1+q^{ur}$, then for any $2\le h<n$, there exists a {subspace direct sum system} $\SP_q(n,h\log_qn,r,h)$.
\end{theorem}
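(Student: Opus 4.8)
The plan is to invoke the equivalence of Lemma~\ref{lem:3.4}: to produce $\SP_q(n,m,r,h)$ it suffices to construct a $q$-ary $[(n,r),k,d]$-block code with $d\ge h+1$ and $nr-k\le m$. Since the required block length $n=1+q^{ur}$ exceeds $1+q^r$ as soon as $u\ge 2$, we cannot take the block code to be MDS over $\F_{q^r}$ as in Theorem~\ref{thm:3.7}; instead we build it from a Reed--Solomon code over the larger field $\F_{q^{ur}}$ by passing to a subfield subcode, exactly along the lines indicated just before the theorem.

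Concretely, I would apply Lemma~\ref{lem:2.4} with the base field taken to be $\F_{q^r}$ and the extension degree taken to be $u$, so that the ambient extension field is $\F_{(q^r)^u}=\F_{q^{ur}}$. Because $n=1+q^{ur}=1+(q^r)^u$, the ``in particular'' clause of that lemma, with minimum distance $d=h+1$, yields a $q^r$-ary $[\,n,\ n-hu,\ h+1\,]$-linear code $\mD$, namely the $\F_{q^r}$-subfield subcode of an extended Reed--Solomon code of length $n$ over $\F_{q^{ur}}$. Fixing an $\F_q$-linear isomorphism $\phi\colon\F_q^r\to\F_{q^r}$ and applying it in each of the $n$ coordinates identifies $\F_{q^r}^n$ with $(\F_q^r)^n=\F_q^{nr}$; under this identification $\mD$ becomes an $\F_q$-subspace of $\F_q^{nr}$ of $\F_q$-dimension $r(n-hu)$, and by Lemma~\ref{lem:3.3} (whose proof identifies the block Hamming weight in $\F_q^{nr}$ with the Hamming weight in $\F_{q^r}^n$) the block minimum distance is still $h+1$, so we obtain a $q$-ary $[(n,r),\,r(n-hu),\,h+1]$-block code. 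Feeding this into Lemma~\ref{lem:3.4} gives $\SP_q(n,m,r,h)$ with $m=nr-r(n-hu)=hur$. Finally $n=1+q^{ur}$ gives $ur=\log_q(n-1)<\log_q n$, hence $hur<h\log_q n$; since a subspace direct sum system living in $\F_q^{m}$ embeds into $\F_q^{m'}$ for any $m'\ge m$ (as noted in the proof of Lemma~\ref{lem:3.4}), we conclude the existence of $\SP_q(n,\lceil h\log_q n\rceil,r,h)$, which is the asserted system.

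I do not anticipate a real obstacle: every step cites a lemma already established above, and the only care needed is clerical --- keeping straight which field is ``base'' and which is ``extension'' when invoking Lemma~\ref{lem:2.4}, and using the fact (contained in the proof of Lemma~\ref{lem:3.3}) that the coordinatewise isomorphism $\F_{q^r}^n\cong(\F_q^r)^n$ carries Hamming weight to block Hamming weight. Two minor points should be flagged in the write-up: first, the quantity $h\log_q n$ is in general not an integer, so what is strictly proved is $m_q(n,r,h)\le hur\le h\log_q n$, and the statement is to be read with that rounding in mind; second, in the degenerate range $hu\ge n$ the code $\mD$ may be trivial, but then $hur\ge nr$ while one checks that $h\log_q n\ge nr$ holds in this range, so the conclusion still follows, e.g.\ from the coordinate-block system in which $V_i$ is spanned by the standard basis vectors indexed by the $i$-th block of $r$ coordinates.
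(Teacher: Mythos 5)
Your proof is correct and reaches the same $\F_q$-dimension count $nr-hur$ as the paper, but travels a slightly different route through the tower of fields. The paper does in fact use Theorem~\ref{thm:3.7}, contrary to your parenthetical caveat: it simply applies it with the ground field $\F_{q^u}$ in place of $\F_q$ (so the length constraint becomes $n\le 1+(q^u)^r=1+q^{ur}$, which is satisfied with equality), producing an $\F_{q^u}$-linear block code $\mC$ in $\F_{q^u}^{nr}$ of $\F_{q^u}$-dimension $(n-h)r$ and block distance $\ge h+1$, and then takes the $\F_q$-subfield subcode $\mC\cap\F_q^{nr}$, with the dimension bound $nr-uhr$ coming from the inclusion--exclusion formula for $\F_q$-subspaces. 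So the paper works through the tower $\F_q\subset\F_{q^u}\subset\F_{q^{ur}}$, first forming the block code over the intermediate field and then passing to the subfield subcode. Your version works through the tower $\F_q\subset\F_{q^r}\subset\F_{q^{ur}}$ in the opposite order: first take the $\F_{q^r}$-subfield subcode of the length-$n$ MDS code via Lemma~\ref{lem:2.4}, then expand $\F_{q^r}$-coordinates into $\F_q^r$-blocks via $\pi$. Both yield the same block parameters $[(n,r),\,nr-hur,\,\ge h+1]_q$ and then invoke Lemma~\ref{lem:3.4}, so the approaches are parallel; yours has the minor advantage of delegating the dimension estimate to the already-established Lemma~\ref{lem:2.4} rather than repeating it inline. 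Your two flagged caveats are genuine and unaddressed in the paper's proof: $h\log_q n$ should be read as rounded up, and the degenerate regime $hu\ge n$ (which can occur, e.g.\ $h=n-1$, $u\ge 2$) makes the quoted dimension bound vacuous, though, as you note, the conclusion survives trivially because then $h\log_q n> hur\ge nr$.
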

\begin{proof}
Let  $\ell=q^u$. By Theorem \ref{thm:3.7}, there exists an $\ell$-ary $[(n,r),k=(n-h)r,d\ge h+1]$-block code $\mC$. Consider the intersection $\mC\cap \F_q^{nr}$. Then
\[\dim_{\F_q}(\mC\cap \F_q^{nr})\ge\dim_{\F_q}(\mC)+\dim_{\F_q}(\F_q^{nr})-\dim_{\F_q}(\F_\ell^{nr})=nr-hru\ge nr-h\log_qn. \]
Hence, $\dim_{\F_q}(\mC\cap \F_q^{nr})$ is a $q$-ary $[(n,r), \ge nr-h\log_qn,\ge h+1]$-block code.
The desired result follows from Lemma \ref{lem:3.4}.
\end{proof}
We now make use of the extreme codes given in Subsection~\ref{subsection:2.4} to derive {subspace direct sum systems}.
\begin{theorem}\label{thm:3.10} Let $h$ be a constant.
\begin{itemize}
\item[{\rm (i)}]
If $h\ge 2$ and ${\rm Char}(\F_q)>h-2$.  Then, for any $u\ge (h-2)!$ with $(h-1)|u$, there is a   {subspace direct sum system}  $\SP_q(n,m,r,h)$ with $m=\left(h-2+\frac1{h-1}\right)\log_{q}n+r$ and $n=q^{ur}$.
\item[{\rm (ii)}]  There is a family $\SP_q(n,m,r,5)$ of {subspace direct sum systems}  with $n=q^{\lfloor5(u-1)/6\rfloor}$ and $m=3\log_{q}n+3r$ for all  $u$ that is divisible by $6$.
    \end{itemize}
\end{theorem}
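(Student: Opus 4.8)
The plan is to deduce both parts from Lemma~\ref{lem:3.4} by manufacturing good $q^r$-ary linear codes of Hamming distance $h+1$ and then ``unfolding'' them into $q$-ary block codes, exactly as in the proof of Theorem~\ref{thm:3.7}. First I would fix an $\F_q$-linear isomorphism $\pi\colon\F_{q^r}\to\F_q^r$ and extend it coordinatewise; then for any $q^r$-ary $[n,k',d']$-linear code $\mD$ the image $\pi(\mD)$ is a $q$-ary $[(n,r),k'r,d']$-block code, since the block Hamming weight of $\pi(\bc)$ equals the Hamming weight of $\bc$ (cf.\ Lemma~\ref{lem:3.3}). Unlike Theorem~\ref{thm:3.7}, which is tied to MDS codes and hence to $n\le q^r+1$, this unfolding imposes no restriction on the length, so codes of arbitrarily large length may be used as input. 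Composing with Lemma~\ref{lem:3.4}, a $q^r$-ary linear code of length $n$, dimension $k'$ and minimum distance at least $h+1$ yields a {subspace direct sum system} $\SP_q(n,m,r,h)$ with $m=nr-k'r=r(n-k')$. So the task reduces to picking, over the field $\F_{q^r}$, the best available codes of minimum distance $h+1$, that is, the extreme codes of Subsection~\ref{subsection:2.4}; using those over $\F_{q^r}$ rather than $\F_q$ is legitimate because Lemma~\ref{lem:2.7}(i) holds over every prime power field while Lemma~\ref{lem:2.7}(ii) only constrains the characteristic, which equals ${\rm Char}(\F_q)$.

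For part (i) I would apply Lemma~\ref{lem:2.7}(ii) over $\F_{q^r}$ with $d=h+1$ (legitimate since $d-3=h-2<{\rm Char}(\F_q)$) and with its free parameter set to $u$: for $u>(h-2)!$ this gives a $q^r$-ary linear code of length $(q^r)^u=q^{ur}=:n$, minimum distance $h+1$ and dimension $k'=n-(h-2)u-\lceil u/(h-1)\rceil-1$. Unfolding via $\pi$ and applying Lemma~\ref{lem:3.4} then produces $\SP_q(n,m,r,h)$ with
\[
m=r(n-k')=r\Bigl((h-2)u+\bigl\lceil u/(h-1)\bigr\rceil+1\Bigr).
\]
Since $(h-1)\mid u$, one has $\lceil u/(h-1)\rceil=u/(h-1)$, and since $ur=\log_q n$ this simplifies to $m=\bigl(h-2+\tfrac{1}{h-1}\bigr)\log_q n+r$. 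The finitely many small values of $u$ that are multiples of $h-1$ but at most $(h-2)!$, for which Lemma~\ref{lem:2.7}(ii) is unavailable, I would handle separately: Corollary~\ref{cor:3.8} covers the cases with $n\le q^r+1$ --- which already gives the claimed bound when $h=2$, $u=1$ --- and Theorem~\ref{thm:3.9} the rest.

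For part (ii), which also removes the characteristic restriction that part~(i) would impose for $h=5$, the same recipe is applied with Lemma~\ref{lem:2.7}(i) (the $c(q,6)\le 3$ bound, valid for every $q$) over $\F_{q^r}$, the free parameter $u$ being a multiple of $6$: it gives a $q^r$-ary cyclic code of length $(q^r)^{\lfloor 5(u-1)/6\rfloor}=q^{r\lfloor 5(u-1)/6\rfloor}=:n$, minimum distance $6$ and, by the dimension count recorded right after Lemma~\ref{lem:2.7}, dimension $k'=n-3\log_{q^r}n-3$. As $6=5+1$, unfolding via $\pi$ and using Lemma~\ref{lem:3.4} yields $\SP_q(n,m,r,5)$ with
\[
m=r(n-k')=r\bigl(3\log_{q^r}n+3\bigr)=3\log_q n+3r,
\]
the last equality using $r\log_{q^r}n=\log_q n$.

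I expect the only genuinely non-mechanical point to be the reduction in the first paragraph: recognising that the extreme codes of Subsection~\ref{subsection:2.4} may be built over the extension field $\F_{q^r}$ and then unfolded so that their Hamming distance becomes the block distance --- this is exactly what turns a dimension deficit of size $\sim c(q^r,h+1)\log_{q^r}n$ into $m\sim c(q^r,h+1)\log_q n$. Once that is in place the two parameter computations are routine, the only residual care being the handful of small $u$ outside the scope of Lemma~\ref{lem:2.7}, which are finite for each fixed $h$ and absorbed by the earlier constructions of this section.
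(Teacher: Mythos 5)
Your proposal is correct and takes essentially the same route as the paper's proof: apply the extreme codes of Lemma~\ref{lem:2.7} over the extension field $\F_{q^r}$, unfold via the coordinatewise $\F_q$-isomorphism $\pi$ so that Hamming distance becomes block distance, and invoke Lemma~\ref{lem:3.4}. Incidentally, your dimension count for part (ii), giving $m=3\log_q n+3r$, is consistent with the theorem statement and with the discussion after Lemma~\ref{lem:2.7}, whereas the paper's own proof of (ii) has an arithmetic slip ($m=3\log_q n+5r/2$); and your explicit handling of the boundary case $u=(h-2)!$ (which falls outside the strict inequality hypothesis $u>(h-2)!$ of Lemma~\ref{lem:2.7}(ii)) is a touch more careful than the paper's.
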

\begin{proof} (i) By Lemma \ref{lem:2.7}, there exists a $\ell=q^r$-ary   $[\ell^u,\ell^u-(h-2)u-\lceil u/(h-1)\rceil-1,h+1]$-linear code $\mC$ for all $u> (d-3)!$. Put $n=\ell^u=q^{ur}$.
Let $\pi$ be an $\F_q$-linear isomorphism between $\F_{q^r}$ and $\F_q^r$. Then the code
\[\pi(\mC):=\{(\pi(c_1),\dots,\pi(c_n)):\; (c_1,\dots,c_n)\in\mC\}\]
is an $\F_q$-linear code with length $nr$ and dimension
\begin{eqnarray*}
k&=&\left(n-(h-2)u-\lceil u/(h-1)\rceil-1\right)r= nr-\left((h-2)-\frac1{h-1}\right)ur-r\\
&=&nr-\left((h-2)-\frac1{h-1}\right)\log_qn-r.\end{eqnarray*}
The block distance of $\pi(\mC)$ is the same as the minimum distance of $\mC$ which is $h+1$. The desired result follows from Lemma \ref{lem:3.4}.

(ii) Let $\ell=q^r$ and $n=\ell^{\lfloor5(u-1)/6\rfloor}$ for an even $u\ge 4$. Consider  a family $\{\mC\}$ of $\ell$-ary $[n,n-3\log_\ell n-\frac52,6]_\ell$-linear codes. Let $\pi$ be an $\F_q$-linear isomorphism between $\F_{q^r}$ and $\F_q^r$. Then the code $\pi(\mC)$
is an $\F_q$-linear code with length $nr$ and dimension $k=(n-3\log_{\ell}n-5/2)r=nr-3\log_{q}n-5r/2$ and minimum distance $6$. Thus, by Lemma \ref{lem:3.4}, we obtain a family $\SP_q(n,m,r,5)$ of {subspace direct sum systems}  with $n\rightarrow\infty$  and $m=3\log_q n+5r/2$.
\end{proof}

Finally we make use of Goppa geometric codes to construct {subspace direct sum systems}.
\begin{theorem}\label{thm:3.11} If $r$ is even or $q$ is a perfect square, then for any $u\ge 1$, there is a family $\SP_q(n,m,r,h)$ of {subspace direct sum systems}  with $m=\left(h+\frac{n}{q^{r/2}-1}\right)r$ and  $n=(q^r-1)q^{r(u-1)/2}$ that can be explicitly constructed.
\end{theorem}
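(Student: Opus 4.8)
The plan is to follow the same template that produced Theorems~\ref{thm:3.9} and~\ref{thm:3.10}: start from a long algebraic geometry code over the extension field $\F_{q^r}$, transport it to an $\F_q$-block code coordinatewise via an isomorphism $\F_{q^r}\to\F_q^r$, and then read off the subspace direct sum system from Lemma~\ref{lem:3.4}.

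First I would set $\ell=q^r$ and observe that the hypothesis ``$r$ is even or $q$ is a perfect square'' is precisely what makes $\ell$ a perfect square, with $\sqrt{\ell}=q^{r/2}\in\Z$: if $r$ is even then $\ell=(q^{r/2})^2$, and if $q$ is a perfect square then so is $q^r$. Hence Lemma~\ref{lem:2.8} applies over $\F_\ell$, and for every $u\ge 1$ and every $h$ with $1\le h<n$ it yields an explicit $\ell$-ary $[n,k,d]$-Goppa geometry code $\mC$, coming from the Garcia--Stichtenoth tower, with
\[
n=(\ell-1)\ell^{(u-1)/2}=(q^r-1)q^{r(u-1)/2},\qquad k\ge n-h-\frac{n}{\sqrt{\ell}-1}=n-h-\frac{n}{q^{r/2}-1},\qquad d\ge h+1 .
\]

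Next I would fix an $\F_q$-linear isomorphism $\pi\colon\F_{q^r}\to\F_q^r$ and form $\pi(\mC)=\{(\pi(c_1),\dots,\pi(c_n)):(c_1,\dots,c_n)\in\mC\}\subseteq\F_q^{nr}$. Exactly as in the proof of Theorem~\ref{thm:3.7}, $\pi(\mC)$ is an $\F_q$-linear block code whose $\F_q$-dimension equals $kr$ and whose block distance equals the Hamming distance $d\ge h+1$ of $\mC$, because $\pi(c_i)=\bo$ iff $c_i=0$. Writing $q^r-1=(q^{r/2}-1)(q^{r/2}+1)$ shows that $\tfrac{n}{q^{r/2}-1}=(q^{r/2}+1)q^{r(u-1)/2}$ is an integer, so $m:=\bigl(h+\tfrac{n}{q^{r/2}-1}\bigr)r$ is an integer and $kr\ge nr-m$. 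Thus $\pi(\mC)$ is an $[(n,r),k'\ge nr-m,d\ge h+1]_q$-block code, and Lemma~\ref{lem:3.4} produces the desired $\SP_q(n,m,r,h)$. Letting $u\to\infty$ gives an infinite family, and every ingredient (the tower, the Riemann--Roch evaluation map, $\pi$) is explicit, so the construction is explicit.

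I do not expect a genuine obstacle here; the only points requiring care are the bookkeeping ones, namely matching the ``perfect square'' hypothesis of Lemma~\ref{lem:2.8} to the two stated cases and checking that $m$ as written is an integer. Everything else is the now-routine ``extension-field code plus restriction of scalars'' dictionary encapsulated in Lemma~\ref{lem:3.4}, already used twice above.
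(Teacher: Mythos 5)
Your proof follows the paper's argument exactly: instantiate Lemma~\ref{lem:2.8} over $\F_{q^r}$ (after observing that the stated hypothesis makes $q^r$ a perfect square), restrict scalars coordinatewise via $\pi\colon\F_{q^r}\to\F_q^r$ to obtain an $[(n,r),\,kr,\,\ge h+1]_q$ block code, and invoke Lemma~\ref{lem:3.4}. The extra bookkeeping you supply (that $n/(q^{r/2}-1)$ is an integer, the explicit dictionary to the perfect-square hypothesis) is a welcome tidying of details the paper leaves implicit, but the route is the same.
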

\begin{proof}
Let $\mC$ be a $q^r$-ary $[n,k\ge n-h-\frac{n}{q^{r/2}-1},h+1]$-Goppa geometric code with $n=(q^r-1)q^{r(u-1)/2}$ given in Lemma \ref{lem:2.8}. Let $\pi$ be an $\F_q$-linear isomorphism between $\F_{q^r}$ and $\F_q^r$. Then the code
\[\pi(\mC):=\{(\pi(c_1),\dots,\pi(c_n)):\; (c_1,\dots,c_n)\in\mC\}\]
is an $\F_q$-linear code with length $nr$ and dimension $k\ge \left(n-h-\frac{n}{q^{r/2}-1}\right)r$. The minimum block distance of $\pi(\mC)$ is the same as the Hamming distance of $\mC$ which is at least $h+1$. By Lemma \ref{lem:3.4}, we obtain a subspace direct sum system $\SP_q(n,m,r,h)$.

As $\mC$ can be explicitly constructed, the code $\pi(\mC)$ is explicitly constructed as well. Thus, the desired {subspace direct sum systems} are also explicitly constructed.
\end{proof}

\section{Maximally recoverable codes}
In this section, we present constructions of maximally recoverable codes via {subspace direct sum systems} that have been investigated before. We first give a direct construction of maximally recoverable codes without concatenating with classical linear codes. As locality $r$ is relatively small for most of constructions of {subspace direct sum systems} given above, one needs to concatenate this direct construction of maximally recoverable codes with classical linear codes in order to enlarge locality.
\subsection{Direct construction}~\label{sec:4.1}
In this subsection, we present a direct construction of maximally recoverable codes without concatenating with classical linear codes. Let us state our main results below.
\begin{theorem}\label{thm:4.1}
Let $q$ be a prime power with $q\ge r-1$. If there exists a {subspace direct sum system} $\SP_q(n,m,r,h)$, then there exists an $(N=nr,r,h,{\Gd})_\ell$-MR code with $\ell=q^m$.   In particular,  there exists an $(N=nr,r,h,{\Gd})$-MR code over field $\ell:=q^{m_q(n,r,h)}$. Furthermore, the MR code can be explicitly constructed as long as the {subspace direct sum system} is explicit.
\end{theorem}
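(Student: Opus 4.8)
The plan is to build the parity-check matrix \eqref{eq:8} directly out of the subspace direct sum system, taking the bottom blocks $D_i$ to be Moore matrices and the local blocks $A_i$ to be one fixed MDS generator matrix over $\F_q$, and then to verify conditions (a) and (b) of Lemma~\ref{lem:2.2}. Concretely, identify $\F_q^m$ with $\F_{q^m}=\F_\ell$, pick an $\F_q$-basis $\{\Ga_{i1},\dots,\Ga_{ir}\}\subseteq\F_\ell$ of each $V_i$, and set $D_i:=M(\Ga_{i1},\dots,\Ga_{ir})\in\F_\ell^{h\times r}$. Since $q\ge r-1$ there is a $\Gd\times r$ generator matrix $A$ of an $[r,\Gd,r-\Gd+1]_q$-MDS code (e.g.\ a doubly-extended Reed--Solomon code), which remains MDS over $\F_\ell$; put $A_i:=A$ for all $i$, so (a) is immediate. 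Assemble $H$ as in \eqref{eq:8}; it has $n\Gd+h$ rows, and once (b) is established a valid configuration in (b) exhibits $n\Gd+h$ independent columns, so $H$ has full row rank and $\mC:=\{\bc:\;H\bc^T=\bo\}$ has dimension $k=N-n\Gd-h$, hence is an $(N=nr,r,h,\Gd)_\ell$-MR code by Lemma~\ref{lem:2.2}.

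To check (b), fix $\Gd$ columns in each block $R_i$ and perform the elementary row operations used in the proof of Lemma~\ref{lem:2.2}. Writing $A_i=(C_i\mid E_i)$ and $D_i=(L_i\mid K_i)$ according to these columns, after the reduction the $n\Gd$ chosen columns become the standard basis vectors of the top $n\Gd$ coordinates, so adjoining any $h$ further columns gives an $\F_\ell$-linearly independent set if and only if the corresponding $h$ columns of
\[
H_3:=\bigl(K_1-L_1C_1^{-1}E_1\bigm|\cdots\bigm|K_n-L_nC_n^{-1}E_n\bigr)
\]
are linearly independent. The crucial point is that $A_i$ is over $\F_q$ and $C_i$ is invertible, so $C_i^{-1}E_i\in\F_q^{\Gd\times(r-\Gd)}$; hence by Lemma~\ref{lem:2.3} each $K_i-L_iC_i^{-1}E_i$ is again an $h\times(r-\Gd)$ $(q,\ell)$-Moore matrix, so $H_3=M(\Gg_1,\dots,\Gg_{n(r-\Gd)})$ for suitable $\Gg$'s, and every first-row entry coming from block $i$ is an $\F_q$-linear combination of $\Ga_{i1},\dots,\Ga_{ir}$, hence lies in $V_i$. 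Reducing modulo the span of the $\Gd$ chosen $\Ga_{ij}$'s sends these entries to the images of the $r-\Gd$ remaining basis vectors of $V_i$, so within each block they are $\F_q$-linearly independent.

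Now take any $h$ columns of $H_3$, say $h_i\ge1$ of them from block $i$ for $i$ ranging over a set $I$ with $\sum_{i\in I}h_i=h$, so $|I|\le h$. The corresponding $\Gg$'s from block $i$ are $\F_q$-linearly independent and span an $h_i$-dimensional subspace $W_i\subseteq V_i$; since $|I|\le h$, the subspaces $\{V_i\}_{i\in I}$ form a direct sum, hence so do the $W_i$, and therefore all $h$ selected $\Gg$'s are $\F_q$-linearly independent. By the Moore determinant formula of Subsection~\ref{subsec:2.2} the corresponding $h\times h$ submatrix of $H_3$ is nonsingular, i.e., those $h$ columns are independent. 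By Lemma~\ref{lem:2.1} the same holds for the original columns of $H$, so (b) holds; taking $m=m_q(n,r,h)$ gives the stated field size, and the construction is explicit whenever the bases of the $V_i$ are.

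The only real subtlety---rather than a genuine obstacle---is confirming that the Moore structure of the bottom blocks is preserved by the row reduction forced by the MDS blocks $A_i$, and that the surviving first-row entries stay inside the respective $V_i$ while remaining blockwise $\F_q$-linearly independent; once this is in place the direct-sum hypothesis handles the interaction between distinct blocks for free, and the hypothesis $q\ge r-1$ is used precisely to keep $C_i^{-1}E_i$ over $\F_q$.
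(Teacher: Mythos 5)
Your proof is correct and takes essentially the same route as the paper's: set $D_i$ to a $(q,\ell)$-Moore matrix built from an $\F_q$-basis of $V_i$, take $A_i=A$ a fixed $\F_q$-MDS matrix, row-reduce and use the fact that $C_i^{-1}E_i$ is over $\F_q$ (this is where $q\ge r-1$ enters) so that $H_3=(K_i-L_iC_i^{-1}E_i)_i$ is still a $(q,\ell)$-Moore matrix, and then deduce $\F_q$-linear independence of the relevant first-row entries from the direct-sum property of the $V_i$, giving a nonzero Moore determinant. The only cosmetic difference is that you form the full $h\times n(r-\Gd)$ matrix $H_3$ and then select $h$ columns, whereas the paper fixes the $h$ extra columns up front so that $H_3$ is already square $h\times h$; the underlying argument is identical.
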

\begin{proof}
As $r\le q+1$, there exists a $q$-ary $[r,r-{\Gd},{\Gd}+1]$-MDS code. Let $A$ is a parity-check matrix of such a code. Then $A$ is ${\Gd}\times r$ matrix over $\F_q$ satisfying that every $\delta\times \delta$ submatrix is invertible.

Assume that  $\F_q$-subspaces $\{V_1,\dots,V_n$\} of $\F_{q^m}$  form a {subspace direct sum system} $\SP_q(n,m,r,h)$. Choose an $\F_q$-basis $\{\Ga_{i1},\dots,\Ga_{ir}\}$ of $V_i$ and define the matrix
\begin{equation}\label{eq:18}
D_i=\begin{pmatrix}
\Ga_{i1}&\Ga_{i2}&\dots&\Ga_{ir}\\
\Ga^q_{i1}&\Ga^q_{i2}&\dots&\Ga^q_{ir}\\
\dots&\dots&\dots&\dots\\
\Ga^{q^{h-1}}_{i1}&\Ga^{q^{h-1}}_{i2}&\dots&\Ga^{q^{h-1}}_{ir}
\end{pmatrix}.\end{equation}
Let $H$ be the matrix given in \eqref{eq:8} with $A_i=A$ and $D_i$ defined in \eqref{eq:18} for all $1\le i\le n$. Let $\mC$ be the $\ell$-linear code with $H$ as a parity-check matrix. We claim that $\mC$ is an $(N=nr,r,h,{\Gd})$-MR code over the field $\F_\ell$. To prove this claim, it is sufficient to show that the conditions (a) and (b) in Lemma \ref{lem:2.2} (ii) are satisfied. (a) is satisfied from the construction. It is enough to prove (b) in Lemma \ref {lem:2.2} (ii). More specifically, we show that every $n\delta+h$  columns consisting of any ${\Gd}$ columns in each block $R_i$ for all $1\le i\le n$ and other arbitrary $h$ columns are $\F_\ell$-linearly independent.

Take any ${\Gd}$ columns together with $h_i$ columns in each $R_i$ with $0\le h_i\le h$ and $\sum_{i=1}^nh_i=h$ and re-arrange  columns of $H$  in \eqref{eq:8}, we get a submatrix of $H$ with the following form
\begin{equation}~\label{eq:19} H_1=\left(\begin{array}{c|c|c|c}
C_1,E_1&O&\cdots&O\\ \hline
O&C_2,E_2&\cdots&O \\ \hline
\vdots&\vdots&\ddots&\vdots \\ \hline
O&O&\cdots&C_n,E_n\\ \hline
L_1,K_1&L_2,K_2&\dots&L_n,K_n
\end{array}
\right)\in\F_{\ell}^{(n\delta+h)\times (n\Gd+h)},\end{equation}
where $C_i, E_i, L_i$ and $ K_i$ are ${\Gd}\times {\Gd}$, ${\Gd}\times h_i$, $h\times {\Gd}$ and $h\times h_i$ matrices, respectively.
 Note that the first ${\Gd}$ columns in each block were previously chosen.

 To show that (b) in Lemma \ref{lem:2.2} (ii) is satisfied, it is sufficient to show that $H_1$ is invertible.
Since $A$ is a parity-check matrix of a $q$-ary $[r,r-{\Gd},{\Gd}+1]$-MDS code, $C_i$ is a ${\Gd}\times {\Gd}$ invertible matrix for all $1\le i\le n$. By the elementary  operations, the matrix $H_1$ in \eqref{eq:19} can be rewritten as
\begin{equation}\label{eq:20}
H_2=\left(\begin{array}{c|c|c|c}
C_1,O&O&\cdots&O\\ \hline
O&C_2,O&\cdots&O \\ \hline
\vdots&\vdots&\ddots&\vdots \\ \hline
O&O&\cdots&C_n,O\\ \hline
O,K_1-L_1C_1^{-1}E_1&O,K_2-L_2C_2^{-1}E_2&\dots&O,K_n-L_nC_n^{-1}E_n
\end{array}
\right).
\end{equation}
Thus, to prove that $H_1$ is invertible, it is equivalent to showing that the
the matrix
\begin{equation}~\label{eq:21}
H_3=(K_1-L_1C_1^{-1}E_1|K_2-L_2C_2^{-1}E_2|\dots|K_n-L_nC_n^{-1}E_n)\end{equation} is invertible.
By Subsection~\ref{subsec:2.2}, it is sufficient to show that $H_3$ is a square $(q,\ell)$-Moore matrix whose  elements on the first row are $\F_q$-linearly independent. We note that $K_i,L_i$ are $(q,\ell)$-Moore matrices and $C_i^{-1}E_i$ are matrices over $\F_q$. Thus, by  Lemma~\ref{lem:2.3}, $K_i-L_iC_i^{-1}E_i$ are $(q,\ell)$-Moore matrices. This shows that $H_3$ is also a $(q,\ell)$-Moore  matrix. Next, we will show that the elements on the first row of $H_3$ are $\F_q$-linearly independent.

Define the set $U:=\{i\in[n]:\; h_i\ge1\}$. Then $|U|\le h$ and $H_3$ can be written as $H_3=(K_i-L_iC_i^{-1}E_i)_{i\in U}$.
 Let $S_i\subseteq [n]$ with $|S_i|=h_i$ for all $1\le i\le n$ such that the column $(\Ga_{ij}, \cdots, \Ga_{ij}^{q^{h-1}})^T$ is a column of $K_i$ if and only if the second index $j\in S_i.$ Similarly,
let $T_i\subseteq [n]$ with $|T_i|=\delta$ for all $1\le i\le n$ such that the column $(\Ga_{ij}, \cdots, \Ga_{ij}^{q^{h-1}})^T$ is a column of $L_i$ if and only if the second index $j\in T_i.$  It is clear that the first row of $K_i-L_iC_i^{-1}E_i$ is $(\Ga_{ij}-\Gb_{ij})_{j\in S_i}$, where $\Gb_{ij}$ is an $\F_q$-linear combination of the first rows of $L_i$, i.e. it is an $\F_q$-linear combination of  $\{\Ga_{il}\}_{l\in T_i}$.
Suppose that
$\sum_{i\in U}\sum_{j\in S_i}\lambda_{ij}(\Ga_{ij}-\beta_{ij})=0$ for some $\Gl_{ij}\in\F_q$. As $\sum_{j\in S_i}\lambda_{ij}(\Ga_{ij}-\beta_{ij})$ belongs to $V_i$ and $\sum_{i\in U}V_i$ is a direct sum, we must have $\sum_{j\in S_i}\lambda_{ij}(\Ga_{ij}-\beta_{ij})=0$ for all $i\in U$.

 Thus, we have
\begin{equation}~\label{eq:22}
0=\sum_{j\in S_i}\lambda_{ij}(\Ga_{ij}-\beta_{ij})=\sum_{j\in S_i}\lambda_{ij}\Ga_{ij}-\sum_{j\in S_i}\lambda_{ij}\beta_{ij}.
\end{equation}
This forces that $\Gl_{ij}=0$ for all $j\in S_i$ since (i) $\Gb_{ij}$ is  an $\F_q$-linear combination of  $\{\Ga_{il}\}_{l\in T_i}$; and (ii) elements $\{\Ga_{ij}\}_{j\in S_i}$ and $\{\Ga_{il}\}_{l\in T_i}$ are $\F_q$-linearly independent. Therefore,
we conclude that (b) in Lemma \ref {lem:2.2} (ii) holds. The proof is completed.
\end{proof}

If ${\Gd}=1$ or $r-1$, then the constraint $r\le q+1$ can be removed as there is always an $[r,r-1,2]$-MDS code for any $r\ge 2$.
\begin{cor}\label{cor:4.2} For ${\Gd}=1$ or $r-1$,
there exists always an $(N=nr,h,r,{\Gd})$-MR code  over field $\ell:=q^{m_q(n,r,h)}$.
\end{cor}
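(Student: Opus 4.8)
The plan is to revisit the proof of Theorem~\ref{thm:4.1} and pinpoint exactly where the hypothesis $q\ge r-1$ (equivalently $r\le q+1$) is used. An inspection shows it enters at a single place: it guarantees the existence of a $q$-ary $[r,r-\Gd,\Gd+1]$-MDS code, equivalently of a $\Gd\times r$ matrix $A$ over $\F_q$ — a generator matrix of the dual $[r,\Gd,r-\Gd+1]$-MDS code — every $\Gd\times\Gd$ submatrix of which is invertible. Every other ingredient of the argument — choosing $\F_q$-bases $\{\Ga_{i1},\dots,\Ga_{ir}\}$ of the spaces $V_i$, forming the Moore blocks $D_i$ as in \eqref{eq:18}, assembling $H$ as in \eqref{eq:8} with all $A_i=A$, and the linear-independence analysis ending in the Moore-matrix argument for $H_3$ in \eqref{eq:21} — is insensitive to the size of $q$: it uses only the existence of $A$, Lemma~\ref{lem:2.3} and the determinant formula for Moore matrices (valid for every extension $\F_\ell/\F_q$), and the defining property of the subspace direct sum system $\SP_q(n,m,r,h)$.

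So it suffices to exhibit, for $\Gd=1$ and for $\Gd=r-1$, a suitable $A$ over an arbitrary $\F_q$ and any $r\ge 2$. For $\Gd=1$ we need a $q$-ary $[r,r-1,2]$-MDS code; take $A=(1,1,\dots,1)\in\F_q^{1\times r}$, the parity-check matrix of the ``sum-zero'' code. Every $1\times 1$ submatrix of $A$ is the invertible scalar $(1)$, and such a code exists over every finite field. For $\Gd=r-1$ we need a $q$-ary $[r,1,r]$-MDS code, namely the repetition code; its dual is the $[r,r-1,2]$ code above, which is MDS of dimension $r-1$, so a generator matrix $A$ of this $[r,r-1,2]$ code has the property that every $r-1$ of its columns are $\F_q$-linearly independent, i.e.\ every $(r-1)\times(r-1)$ submatrix of $A$ is invertible. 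Again this works over any $\F_q$ and any $r\ge 2$.

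With such an $A$ in hand for $\Gd\in\{1,r-1\}$, one runs the proof of Theorem~\ref{thm:4.1} verbatim: from any subspace direct sum system $\SP_q(n,m,r,h)$ one obtains an $(N=nr,r,h,\Gd)_\ell$-MR code with $\ell=q^m$, since conditions (a) and (b) of Lemma~\ref{lem:2.2}(ii) are checked exactly as there (condition (a) now using the MDS codes just described). Taking $m=m_q(n,r,h)$, which is finite by Definition~\ref{def:3} since a valid $m$ is exhibited by Corollary~\ref{cor:3.2}, yields the desired MR code over $\F_{q^{m_q(n,r,h)}}$.

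There is no genuine obstacle here; the only point requiring care is the bookkeeping claim that the proof of Theorem~\ref{thm:4.1} invokes $q\ge r-1$ solely to produce $A$ — in particular, that no hidden dependence on $q$ lurks in the Moore-matrix step — which is the case, as noted above.
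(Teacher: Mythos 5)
Your proof is correct and follows exactly the reasoning the paper sketches in the sentence preceding Corollary~\ref{cor:4.2}: the hypothesis $q\ge r-1$ in Theorem~\ref{thm:4.1} is only invoked to supply a $q$-ary $[r,r-\Gd,\Gd+1]$-MDS code, and for $\Gd\in\{1,r-1\}$ the sum-zero/repetition pair furnishes such a code over every $\F_q$, after which the Moore-matrix argument runs unchanged. Your spelling-out of the explicit $A$ in each case, and the remark that the remainder of the argument has no hidden dependence on $q$, is a faithful (and slightly more detailed) rendering of the paper's own justification.
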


Let us now derive $(N=nr,h,r,{\Gd})$-MR codes for fixed $h$ by combining Theorems \ref{thm:4.1} and \ref{thm:3.10}.
\begin{theorem}\label{thm:4.3} Let $h\ge 2$ be a constant.  Then we have an $(N=nr,h,r,{\Gd})_\ell$-MR code that can be explicitly constructed. The  field size $\ell$ satisfies
\begin{itemize}
\item[{\rm (i)}]  For a constant $r$, $\ell=O\left(n^{h-2+\frac1{h-1}}\right)$. In particular, we have $\ell=O\left(n\right)$ for $h=2$; $\ell=O\left(n^{3/2}\right)$ for $h=3$ and  $\ell=O\left(n^{7/3}\right)$ for $h=4$.
\item[{\rm (ii)}] For $r=o(\log n/\log\log n)$, one has $\ell=O\left(n^{h-2+\frac1{h-1}+o(1)}\right)$. In particular, we have $\ell=O\left(n^{1+o(1)}\right)$ for $h=2$; $\ell=O\left(n^{3/2+o(1)}\right)$ for $h=3$ and  $\ell=O\left(n^{7/3+o(1)}\right)$ for $h=4$.
\item[{\rm (iii)}] For $r=o(\log n/\log\log n)$ and $h=5$, $\ell=O(n^{3+o(1)})$. If $r$ is a constant and $h=5$, then $\ell=O(n^{3})$.
\end{itemize}
\end{theorem}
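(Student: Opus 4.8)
The plan is to combine Theorem~\ref{thm:4.1} with the constructions of subspace direct sum systems from Theorem~\ref{thm:3.10}, tracking the resulting field size $\ell=q^m$ in each regime. First I would observe that Theorem~\ref{thm:4.1} turns a $\SP_q(n,m,r,h)$ into an $(N=nr,r,h,\Gd)_\ell$-MR code with $\ell=q^m$, provided $q\ge r-1$, and that this MR code is explicit whenever the subspace direct sum system is. So the entire task reduces to: (a) pick a suitable prime power $q\ge r-1$, (b) feed the parameters of Theorem~\ref{thm:3.10} into Theorem~\ref{thm:4.1}, and (c) simplify $q^m$ as a function of $n$ (equivalently $N$, since $N=nr$ and $r$ is either constant or $o(\log n/\log\log n)$, so $\log N\sim\log n$).

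For part~(i), with $r$ a constant, I would fix the smallest prime power $q$ with $q\ge r-1$; this $q$ is an absolute constant. Then Theorem~\ref{thm:3.10}(i) (applicable since $\mathrm{Char}(\F_q)>h-2$ can be arranged by choosing $q$ appropriately, or noting $h$ is constant) gives, for suitable $u$ with $(h-1)\mid u$, a system $\SP_q(n,m,r,h)$ with $m=\left(h-2+\tfrac1{h-1}\right)\log_q n+r$. Hence $\ell=q^m=q^r\cdot n^{h-2+1/(h-1)}=O\!\left(n^{h-2+1/(h-1)}\right)$ since $q^r$ is a constant. Specializing $h=2,3,4$ gives the exponents $1,\ 3/2,\ 7/3$ claimed. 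For part~(iii) with $h=5$ and $r$ constant, I would instead invoke Theorem~\ref{thm:3.10}(ii), which yields $\SP_q(n,m,r,5)$ with $m=3\log_q n+3r$ (note the statement of Theorem~\ref{thm:3.10}(ii) says $3r$, though its proof produces $5r/2$; either way $q^{O(r)}$ is a constant), so $\ell=q^m=q^{3r}\cdot n^3=O(n^3)$.

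For part~(ii) (and the $h=5$ half of part~(iii)), $r$ is no longer constant but $r=o(\log n/\log\log n)$. Here I would still pick $q$ to be a small prime power with $q\ge r-1$, say $q=O(r)$ by Bertrand's postulate. Then $q^r\le (O(r))^r=2^{O(r\log r)}=2^{o(\log n)}=n^{o(1)}$, using $r\log r=o(\log n)$, which is exactly the content of $r=o(\log n/\log\log n)$. Consequently the additive $+r$ (resp.\ $+3r$) in the exponent $m$ contributes only a factor $q^{O(r)}=n^{o(1)}$, and $\ell=q^m=n^{h-2+1/(h-1)+o(1)}$ (resp.\ $n^{3+o(1)}$ for $h=5$), with the specializations $h=2,3,4$ giving $n^{1+o(1)},\ n^{3/2+o(1)},\ n^{7/3+o(1)}$. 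The explicitness in all cases follows because the codes underlying Theorem~\ref{thm:3.10} (subfield subcodes of the linear codes of Lemma~\ref{lem:2.7}, which come from explicit cyclic/algebraic constructions) are explicit, and Theorem~\ref{thm:4.1} preserves explicitness. The main obstacle — really the only point requiring care — is the bookkeeping that converts the additive $+r$ term in $m$ into a multiplicative $n^{o(1)}$ error: one must verify that a prime power $q$ of size $O(r)$ exists, that $\mathrm{Char}(\F_q)>h-2$ is compatible with $q=O(r)$ (harmless since $h$ is a fixed constant and $r\to\infty$), and that $q^r=n^{o(1)}$ precisely when $r=o(\log n/\log\log n)$; once this is in hand, the rest is direct substitution.
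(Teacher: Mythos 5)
Your proposal is correct and follows essentially the same route as the paper: apply Theorem~\ref{thm:3.10} to produce a subspace direct sum system with the stated $m$, choose a constant (resp.\ $O(r)$) prime $q\geq r-1$ compatible with $\mathrm{Char}(\F_q)>h-2$, and feed the result into Theorem~\ref{thm:4.1} to conclude $\ell=q^m=q^{O(r)}\cdot n^{h-2+1/(h-1)}$. The only difference is cosmetic: the paper's proof uses the bound $m\leq(h-2+\tfrac1{h-1})\log_q n+2r$ (with $q\leq\max\{2r,2(h-2)\}$) rather than the $+r$ from the statement of Theorem~\ref{thm:3.10}, and leaves part~(iii) to ``the same way,'' whereas you explicitly invoke Theorem~\ref{thm:3.10}(ii); both variants absorb the $q^{O(r)}$ factor into the $O(\cdot)$ or $n^{o(1)}$ in exactly the way you describe.
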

\begin{proof} (i) By Theorem \ref{thm:3.10}, there exists subspace direct sum system $\SP_q(n,m,r,h)$ with $n=q^{ur}$ and $m\le \left({h-2+\frac1{h-1}}\right)\log_qn+2r$ for all positive integer $u$ with $u>(h-2)!$ over the field $\F_q$ with ${\rm Char}(\F_q)>h-2$. Thus, by Theorem \ref{thm:4.1}, we have an $(N=nr,r,h,{\Gd})_\ell$-MR code with
$\ell=q^m\le q^{\left({h-2+\frac1{h-1}}\right)\log_qn+2r}$. As there is always a prime $q$ between $[r,2r)$, we can choose a prime $q$ such that $q$ is a constant and $q\le \max\{2r, 2(h-2)\}$. Thus, we have
\begin{eqnarray}~\label{eq:23}
\nonumber\ell&\le &q^{\left({h-2+\frac1{h-1}}\right)\log_qn+2r}=q^{2r}\times n^{\left({h-2+\frac1{h-1}}\right)}\\
&=& O\left(\max\{2r, 2(h-2)\}^{2r} \times n^{\left({h-2+\frac1{h-1}}\right)}\right)=O\left( n^{{h-2+\frac1{h-1}}}\right).
\end{eqnarray}

(ii) Note that $n=q^{ur}=\Omega(r^r)$, i.e., $r=O(\log n)$. If $r=o(\log n/\log\log n)$, then we have $\max\{2r, 2(h-2)\}^{2r} =n^{o(1)}$. Hence, by ~\eqref{eq:23}, we get that $\ell=O\left( n^{{h-2+\frac1{h-1}+o(1)}}\right)$.

(iii) In the same way, we get the desired result for $h=5$.
\end{proof}

Before further applying Theorem \ref{thm:4.1} and Corollary \ref{cor:4.2} to derive some MR codes, let us label various subspace direct sum systems constructed in previous sections.

{\footnotesize
\begin{center}~\label{table:2}
Table II \\{Constructions of various subspace direct sum systems}\\ \medskip
{\rm
\begin{tabular}{|c|c|c|c|cl}\hline\hline
\multicolumn{5}{|c|}{$\SP_q(n,m,r,h)$}\\ \hline
No. &${m}$ & \multicolumn{2}{c|}{Restrictions} & \multicolumn{1}{c|}{References}  \\ \hline
$1$&$r+\log_q\sum_{i=0}^{h-1}{n-1\choose i}(q^r-1)^i$ &\multicolumn{2}{c|}{--} &\multicolumn{1}{c|}{Lemma \ref{lem:3.1}} \\ \hline
$2$&$hr$ & \multicolumn{2}{c|}{$n\le 1+q^r$}& \multicolumn{1}{c|}{Corollary \ref{cor:3.8}} \\ \hline
$3$&$h\log_q n$ & \multicolumn{2}{c|}{$r\le \log_q n$}  &\multicolumn{1}{c|}{Theorem \ref{thm:3.9}} \\ \hline
{\multirow{2}{*}{$4$}} & {\multirow{2}{*}{$h+\frac{n}{q^{{r}/{2}}-1}$}} & \multicolumn{2}{c|}{$n=(q^r-1)q^{r(u-1)/2}$} &\multicolumn{1}{c|}{\multirow{2}{*}{Theorem \ref{thm:3.11}}}\\
 & & \multicolumn{2}{c|}{for $u\ge 1$ and $q$ is a prefect square} &\multicolumn{1}{c|}{} \\ \hline\hline
\end{tabular}
}
\end{center}
}
``--" in the above table means that there is no restriction.

Now we apply subspace direct sum systems given in Table I to Theorem \ref{thm:4.1} and Corollary \ref{cor:4.2}.

\begin{theorem} We have $(N=nr,r,h,{\Gd})_\ell$-MR codes for the following parameters.
\begin{itemize}
\item[{\rm (i)}] $\ell=O\left((2r)^{hr}n^{h-1}\right)$. Furthermore,  if $rh=o(\log n/\log\log n)$, we have  $\ell=O\left(n^{h-1+o(1)}\right)$.
In particular, (a) $\ell=O\left(n^{1+o(1)}\right)$  if $hr=o(\log n/\log\log n)$ and $h=2$; (b) $\ell=O\left(n^{2+o(1)}\right)$  if $rh=o(\log n/\log\log n)$ and $h=3$.
\item[{\rm (ii)}] If ${\Gd}=1$ or $r-1$,  then $\ell=O\left(2^{rh}n^{h-1}\right)$.
\item[{\rm (iii)}] If $r=\Omega(\log n/\log\log n)$, then $\ell=O((2r)^{hr})$. If $r=\Theta(\log n/\log\log n)$, then $\ell=O(n^{h})$.
\item[{\rm (iv)}] If $r=O(\log n/\log\log n)$,  then $\ell=O(n^{h})$.
\item[{\rm (v)}]  If $r=O(\log n/\log\log n)$, then $\ell=O\left((2r)^{h+\frac{n}{r^{{r}/{2}}-1}}\right)$.
\end{itemize}
\end{theorem}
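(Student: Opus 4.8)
The plan is to derive every item by plugging one of the subspace direct sum systems catalogued in Table~II (equivalently, obtained from Corollary~\ref{cor:3.2}, Corollary~\ref{cor:3.8}, Theorem~\ref{thm:3.9}, or Theorem~\ref{thm:3.11}) into the direct construction of Theorem~\ref{thm:4.1} --- and, when $\Gd\in\{1,r-1\}$, into Corollary~\ref{cor:4.2}, which drops the hypothesis $q\ge r-1$ --- and then simply reading off the field size $\ell=q^{m}$. The only real choice at our disposal is the subfield $\F_q$: in each case I would take $q$ to be a prime power as small as possible subject to (a) the constraint $q\ge r-1$ coming from Theorem~\ref{thm:4.1} (vacuous in (ii)), and (b) the length/structural restriction carried by the chosen system; existence of such a $q$ always follows from Bertrand's postulate, applied to $\sqrt{\,\cdot\,}$ when a perfect square is needed.

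For (i) I would use the Gilbert--Varshamov system of Corollary~\ref{cor:3.2} with a prime $q$ satisfying $r-1\le q<2(r-1)$. Since the terms of $\sum_{i=0}^{h-1}\binom{n-1}{i}(q^r-1)^i$ are increasing in $i$, this sum is at most $h\binom{n-1}{h-1}(q^r-1)^{h-1}\le 2(nq^r)^{h-1}$ (using $h/(h-1)!\le 2$), so $m\le hr+(h-1)\log_q n+1$ and therefore $\ell=q^m\le q\,q^{hr}n^{h-1}=O\big((2r)^{hr}n^{h-1}\big)$. For the refinement, $rh=o(\log n/\log\log n)$ gives $\log(2r)\le\log(2rh)=O(\log\log n)$, hence $rh\log(2r)=o(\log n)$, i.e. $(2r)^{hr}=n^{o(1)}$ and $\ell=O(n^{h-1+o(1)})$; specializing $h=2,3$ gives (a),(b). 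Item (ii) is the same argument with $q=2$ licensed by Corollary~\ref{cor:4.2}, giving $\ell=2^{m}=O(2^{hr}n^{h-1})$.

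For (iii) and (iv) I would use the MDS/subfield-subcode systems. If $r=\Omega(\log n/\log\log n)$ then $(r-1)^{r}\ge n$ for large $n$, so a prime power $q$ with $r-1\le q=O(r)$ already satisfies $n\le 1+q^{r}$; Corollary~\ref{cor:3.8} then yields $\SP_q(n,hr,r,h)$ and hence $\ell=q^{hr}=O\big((2r)^{hr}\big)$. If instead $r=O(\log n/\log\log n)$, a prime $q$ with $r-1\le q<2r$ satisfies $r\le\log_q n$ (this is the inequality $r\log q\le\log n$, which the hypothesis on $r$ delivers); replacing $n$ by the smallest admissible $n'=1+q^{ur}<q^{r}n$ and invoking Theorem~\ref{thm:3.9} gives $m=h\log_q n'<hr+h\log_q n$, so $\ell=q^{m}<q^{hr}n^{h}=O(n^{h})$ once the $q^{hr}$ overhead is absorbed. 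Item (v) uses the Garcia--Stichtenoth/Goppa system of Theorem~\ref{thm:3.11}: there $\Gd$ is arbitrary, so $q\ge r-1$ is required and $q$ must be a perfect square (or $r$ even), whence I would take $q=p^{2}$ for a prime $p\in[\sqrt{r-1},2\sqrt{r-1}]$, so that $q=\Theta(r)$ and $q^{r/2}=\Theta(r^{r/2})$; after replacing $n$ by the nearest admissible $(q^{r}-1)q^{r(u-1)/2}$ and applying Theorem~\ref{thm:4.1} one reads off $\ell=q^{m}=O\big((2r)^{h+n/(r^{r/2}-1)}\big)$. Because the MDS, extreme, and Goppa systems are explicit, the codes in (iii)--(v) are explicit, whereas (i),(ii) rest on the (non-explicit) Gilbert--Varshamov bound.

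The individual estimates above are routine; the step I expect to demand the most care is the asymptotic bookkeeping that couples the choice of $q$ to the structural restriction of the chosen system. In particular, the admissible lengths in Theorems~\ref{thm:3.9} and~\ref{thm:3.11} form a sparse set ($1+q^{ur}$, respectively $(q^{r}-1)q^{r(u-1)/2}$), so for a general $n$ one must pass to the smallest admissible $n'\ge n$ and control the ratio $n'/n$; keeping this ratio (and the parasitic $q^{O(r)}$ factors) down to $n^{o(1)}$ is exactly what consumes the hypotheses $r=O(\log n/\log\log n)$ and $rh=o(\log n/\log\log n)$, and checking that all the constraints ($q\ge r-1$, $q$ a prime power, $q$ a perfect square where needed, $n\le 1+q^{r}$ or $r\le\log_q n$) can be met simultaneously in each regime is where I would be most careful.
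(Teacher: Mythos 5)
Your plan matches the paper's proof almost exactly: each item is obtained by feeding the corresponding subspace direct sum system from Table~II into Theorem~\ref{thm:4.1} (or, for $\Gd\in\{1,r-1\}$, Corollary~\ref{cor:4.2} with $q=2$) and reading off $\ell=q^{m}$, with $q$ a prime (or prime power) of order $\Theta(r)$ chosen by Bertrand. The computations for (i), (ii) and (iii) are essentially identical to the paper's, including the key step $(2r)^{hr}=n^{o(1)}$ when $hr=o(\log n/\log\log n)$.

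Two places deserve a second look. First, in (iv) the paper does not round $n$ up to the next admissible value $1+q^{ur}$: it simply works with lengths of that exact form, giving $\ell=q^{h\log_q n}=n^{h}$ on the nose, with no parasitic $q^{hr}$ factor to absorb. Your rounding step produces $\ell\le q^{hr}n^{h}$, and for $r=\Theta(\log n/\log\log n)$ and unbounded $h$ the factor $q^{hr}$ is $n^{\Theta(h)}$, not $n^{o(1)}$, so the claim ``once the $q^{hr}$ overhead is absorbed'' does not hold at that generality; for the statement as written one should (as the paper implicitly does) read the theorem as producing an infinite family at the admissible lengths rather than for all $n$. Second, in (v) you should pick $q\ge r$ (the paper uses $q=2^{2\lceil(\log r)/2\rceil}\ge r$), since with $q=p^{2}$ and $p\in[\sqrt{r-1},2\sqrt{r-1}]$ you may get $q<r$, and then $q^{r/2}-1<r^{r/2}-1$, which inflates the exponent $n/(q^{r/2}-1)$ by a constant factor $\approx\sqrt{e}$ that cannot be absorbed into the base $2r$; taking $q\ge r$ makes $q^{r/2}-1\ge r^{r/2}-1$ and the stated bound follows directly. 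Neither of these affects the overall method, which is the same as the paper's.
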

\begin{proof} (i) By  the subspace direct sum system given in the first row of Table II, we have
\[q^{m_q(n,r,h)}\le q^r \sum_{i=0}^{h-1}{n-1\choose i}(q^r-1)^i=O\left(q^{hr}n^{h-1}\right).\]
By Theorem \ref{thm:4.1}, there exists an $(N=nr,r,h,{\Gd})_\ell$-MR code with $\ell=O\left(q^{hr}n^{h-1}\right)$. Given the constraint that $r\le q+1$ and the fact that there is a prime $q$ in the interval $[r,2r)$, we have $\ell=O\left((2r)^{hr}n^{h-1}\right)$.

If $hr=o(\log n/\log\log n)$, then $(2r)^{hr}=n^{o(1)}$. The desired result follows.

(ii) In this case, we can take $q=2$ and by (i) we have an $(N=nr,r,h,{\Gd})_\ell$-MR code with $\ell=O\left(2^{rh}n^{h-1}\right)$.

(iii) By Theorem \ref{thm:4.1} and  the subspace direct sum system given in the second row of Table II,  there exists an $(N=nr,r,h,{\Gd})_\ell$-MR code with $\ell=q^{hr}=O((2r)^{hr})$. Note that in this case, we have $n\le 1+q^r\le (2r)^r$. Hence, $r=\Omega(\log n/\log\log n)$. If we take $n=q^r$, then we have $\ell=q^{hr}=n^r$. In this case, we have $r=\Theta(\log n/\log\log n)$.

(iv)  By Theorem \ref{thm:4.1} and  the subspace direct sum system given in the third row of Table II,  there exists an $(N=nr,r,h,{\Gd})_\ell$-MR code with $\ell=q^{h\log_qn}=O(n^{h})$. Note that in this case, we have $r\le\log_qn$, i.e., $r\log r\le \log n$. Hence, $r=O(\log n/\log\log n)$.

(v) Put $q=2^{2\lceil (\log r)/2\rceil}$. Then we have $r\le q\le 2r$. By Theorem \ref{thm:4.1} and  the subspace direct sum system given in the last row of Table II,  there exists an $(N=nr,r,h,{\Gd})_\ell$-MR code with
\[\ell=q^{h+\frac{n}{q^{{r}/{2}}-1}}=O\left((2r)^{h+\frac{n}{r^{{r}/{2}}-1}}\right),\]
where we take a prime $q\in[r,2r)$. Note that in this case, we have $n=(q^r-1)q^{r(u-1)/2}$ for $u\ge 1$. Hence, $r=O(\log n/\log\log n)$.
This completes the proof.
\end{proof}

\subsection{Constructions via linear codes}
Theorem \ref{thm:4.1} gives a construction of MR codes with the field size  $\ell=q^{m_q(n,r,h)}$. It is clear that $m_q(n,r,h)$ decreases as $r$ decreases. Thus, the field size $\ell=q^{m_q(n,r,h)}$ becomes smaller as  $r$ decreases. To decrease the field size, our idea is to construct MR codes of block length $r$ from a {subspace direct sum system} $\SP_q(n,m,s,h)$ with $s< r$. The other motivation comes from the fact that in Subsection~\ref{sec:4.1}, most of MR codes has relatively small  block length $r$. Thus, to get larger  block length $r$, we start with a {subspace direct sum system} $\SP_q(n,m,s,h)$ with $s< r$ and then concatenate with a classical code to obtain an MR code of block length $r$.

From our proof of Theorem \ref{thm:4.1}, it is sufficient to make any $h+{\Gd}$ elements in the first row of each $D_i$ $\F_q$-linearly independent. This is equivalent to the fact that first row of $D_i$ is a parity-check matrix of a $[r,r-s,d\ge h+{\Gd}+1]_q$-linear code.
\begin{theorem}~\label{thm:4.5}
Let $q$ be a prime power with $q\ge r-1$. If there exists a {subspace direct sum system} $\SP_q(n,m,s,h)$ and a $q$-ary $[r,r-s,d\ge h+{\Gd}+1]_q$-linear code, then there exists an $(N=nr,r,h,{\Gd})_\ell$-MR code with $\ell=q^m$. In particular, there exists an $(N=nr,r,h,{\Gd})_\ell$-MR code with
$\ell:=q^{m_q(n,h,s)}$ provided that there exists a $q$-ary $[r,r-s,d\ge h+{\Gd}+1]_q$-linear code.
\end{theorem}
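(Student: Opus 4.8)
The plan is to mimic the proof of Theorem~\ref{thm:4.1}, with the single row $(\Ga_{i1},\dots,\Ga_{is})$ of the Moore block replaced by a longer row of length $r$ obtained by composing with the parity-check matrix of the auxiliary $[r,r-s,\ge h+\Gd+1]_q$-linear code. Concretely, let $\{V_1,\dots,V_n\}\subseteq\F_{q^m}$ realize $\SP_q(n,m,s,h)$, pick an $\F_q$-basis $\{\Ga_{i1},\dots,\Ga_{is}\}$ of each $V_i$, and let $P\in\F_q^{s\times r}$ be a parity-check matrix of the given linear code (so every $s$ columns of $P$ that are actually $h+\Gd$ or fewer in number\,---\,more precisely, every $\le h+\Gd$ columns of $P$\,---\,are $\F_q$-linearly independent, since $d\ge h+\Gd+1$). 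Set $K_i:=M(\Ga_{i1},\dots,\Ga_{is})\in\F_{q^m}^{h\times s}$, a $(q,\ell)$-Moore matrix, and define $D_i:=K_iP\in\F_{q^m}^{h\times r}$; by Lemma~\ref{lem:2.3}(ii), $D_i$ is again a $(q,\ell)$-Moore matrix, and each entry of its first row is an $\F_q$-linear combination of $\Ga_{i1},\dots,\Ga_{is}$, hence lies in $V_i$. I would also take $A_i=A$ a parity-check matrix of a $[r,r-\Gd,\Gd+1]_q$-MDS code (available since $q\ge r-1$), and let $\mC$ be the $\ell$-ary code whose parity-check matrix $H$ has the block form \eqref{eq:8} with these $A_i$ and $D_i$.

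The verification is then identical in structure to that of Theorem~\ref{thm:4.1}: condition (a) of Lemma~\ref{lem:2.2}(ii) holds by the choice of $A$; for condition (b) I would take any $\Gd$ columns plus $h_i$ further columns in each block $R_i$, with $\sum h_i=h$, reduce $H$ by the elementary row operations that clear the $C_i$ blocks, and conclude (as in \eqref{eq:19}--\eqref{eq:21}) that linear independence of the chosen $n\Gd+h$ columns is equivalent to invertibility of a square $(q,\ell)$-Moore matrix $H_3=(K_iP-L_iC_i^{-1}E_i)_{i\in U}$, where $U=\{i:h_i\ge1\}$ has size $\le h$. Since $L_iC_i^{-1}E_i$ is over $\F_q$, $H_3$ is a Moore matrix, and it is invertible iff the $h$ entries of its first row are $\F_q$-linearly independent. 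The $i$-th group of those entries is of the form $(\text{column of }K_iP)-\Gb$ where the ``$\text{column of }K_iP$'' part is the first row of $K_iP$ restricted to the $h_i$ selected columns\,---\,i.e.\ an $\F_q$-linear image under (the chosen columns of) $P$ of $(\Ga_{i1},\dots,\Ga_{is})$\,---\,and $\Gb$ is another $\F_q$-linear combination of $\Ga_{i1},\dots,\Ga_{is}$ coming from $L_i$. All of these lie in $V_i$, so by the direct-sum property a vanishing $\F_q$-linear combination over all $i\in U$ splits into one vanishing combination inside each $V_i$. Within a single $V_i$, I must show the selected $h_i$ entries of the first row of $K_iP$, together with the $\Gd$ entries forming the first row of $L_i$, are $\F_q$-linearly independent in $V_i$; pulling back through the basis $\{\Ga_{ij}\}$ this becomes a statement about $h_i+\Gd\le h+\Gd$ columns of $P$ being $\F_q$-linearly independent, which is exactly $d\ge h+\Gd+1$. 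Hence each $\Gl_{ij}=0$, $H_3$ is invertible, and $\mC$ is an $(N=nr,r,h,\Gd)_\ell$-MR code with $\ell=q^m$; taking $m=m_q(n,s,h)$ and the optimal subspace direct sum system gives the ``in particular'' statement.

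The one genuinely new point compared with Theorem~\ref{thm:4.1} is the bookkeeping in the last step: I need the first row of $L_i$ (a $\Gd$-element vector) and the chosen columns of $K_iP$ (an $h_i$-element vector) to be \emph{jointly} independent in $V_i$, not merely independent separately, because in \eqref{eq:22} the term $\Gb_{ij}$ mixes them. This is where the ``$+\Gd$'' in the distance bound $d\ge h+\Gd+1$ is used rather than just $d\ge h+1$: the relevant columns of $P$ are the $\Gd$ ``previously chosen'' local columns (which, after the MDS local code is accounted for, also trace back to $\Gd$ distinct columns of $P$) together with the $h_i$ global columns, for a total of at most $h_i+\Gd\le h+\Gd$, and a $[r,r-s,h+\Gd+1]_q$ code has any $h+\Gd$ columns of its parity-check matrix independent. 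I expect this disentangling\,---\,making precise that the $\Gd$ local columns and $h_i$ global columns selected inside block $i$ correspond to $h_i+\Gd$ \emph{distinct} columns of the single matrix $P$\,---\,to be the main obstacle, and everything else is a routine repetition of the argument already carried out for Theorem~\ref{thm:4.1}.
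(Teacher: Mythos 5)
Your proposal is correct and follows essentially the same route as the paper: the paper's proof of Theorem~\ref{thm:4.5} observes that the argument for Theorem~\ref{thm:4.1} only requires that the $\Gd+h_i\le\Gd+h$ first-row entries of $D_i$ appearing in $L_i$ and $K_i$ be $\F_q$-linearly independent, and then it picks $\Gb_{i1},\dots,\Gb_{ir}\in V_i$ (via a $[r,r-s,\ge h+\Gd+1]_q$ parity-check matrix, just as you do with $D_i=K_iP$) so that any $\Gd+h$ of them are independent. The ``main obstacle'' you flag at the end is in fact a non-issue: the $\Gd$ local positions and the $h_i$ global positions chosen inside block $R_i$ are by definition distinct indices in $\{1,\dots,r\}$, and these indices directly label the columns of $P$; no disentangling through the local MDS code $A$ is needed, since $A$ occupies the top rows of $H$ and does not mix the columns of $D_i$. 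Thus the selected first-row entries are images of $\Gd+h_i\le\Gd+h$ distinct columns of $P$, and $d\ge h+\Gd+1$ gives their independence immediately.
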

\begin{proof}
Note that in the proof of Theorem \ref{thm:4.1}, we only require that the first row of $K_i$ and $L_i$ are $\F_q$-linearly independent. As the number of columns of $(K_i,L_i)$ is at most ${\Gd}+h$. It is sufficient to require that any ${\Gd}+h$ elements of the first row of $D_i$ are $\F_q$-linearly independent.

Now let $V_1,V_2,\dots,V_n$ be a {subspace direct sum system} $\SP_q(n,m,s,h)$ with $\dim_{\F_q}(V_i)=s$. Due to the fact that there exits a $q$-ary $[r,r-s,d\ge h+{\Gd}+1]_q$-linear code, we can choose $\Gb_{i1},\dots,\Gb_{ir}\subseteq V_i\subseteq\F_{q^m}$ such that any ${\Gd}+h$ elements from  $\{\Gb_{i1},\dots,\Gb_{ir}\}$ are $\F_q$-linearly independent. Put
\[D_i=\begin{pmatrix}
\Gb_{i1}&\Gb_{i2}&\dots&\Gb_{ir}\\
\Gb^q_{i1}&\Gb^q_{i2}&\dots&\Gb^q_{ir}\\
\dots&\dots&\dots&\dots\\
\Gb^{q^{h-1}}_{i1}&\Gb^{q^{h-1}}_{i2}&\dots&\Gb^{q^{h-1}}_{ir}
\end{pmatrix}\]
and the desired result follows.
\end{proof}
If we take $\Gd=1$ in Theorem \ref{thm:4.5}, we can remove the constraint that $r\le q+1$. Thus, we obtain the following corollary.
\begin{cor}~\label{cor:4.6}
Let $q$ be a prime power. If there exists a {subspace direct sum system} $\SP_q(n,m,s,h)$ and a $q$-ary $[r,r-s,d\ge h+2]_q$-linear code, then there exists an $(N=nr,r,h,1)_\ell$-MR code with $\ell=q^m$. In particular, there exists an $(N=nr,r,h,1)_\ell$-MR code with
$\ell:=q^{m_q(n,s,h)}$ provided that there exists a $q$-ary $[r,r-s,d\ge h+2]_q$-linear code.
\end{cor}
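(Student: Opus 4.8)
The plan is to show that the argument of Theorem~\ref{thm:4.5} goes through essentially unchanged when $\Gd=1$, and that the one place where the hypothesis $q\ge r-1$ was needed becomes vacuous in this case. Recall that in Theorem~\ref{thm:4.1}, and hence in Theorem~\ref{thm:4.5}, each diagonal block $A_i$ of the parity-check matrix \eqref{eq:8} must be a parity-check matrix of a $q$-ary $[r,r-\Gd,\Gd+1]$-MDS code, and it is exactly the existence of such an MDS code that forces $r\le q+1$. For $\Gd=1$ this MDS code is the $[r,r-1,2]$ parity-check code, whose parity-check matrix is the single all-ones row $(1,1,\dots,1)\in\F_q^r$; every $1\times1$ submatrix of it is invertible, so it exists over \emph{every} $\F_q$ and every $r\ge2$. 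Hence the constraint $q\ge r-1$ can be dropped.

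With that observation in hand, I would run the construction of Theorem~\ref{thm:4.5}. Let $\{V_1,\dots,V_n\}\subseteq\F_{q^m}$ be a subspace direct sum system $\SP_q(n,m,s,h)$ with $\dim_{\F_q}V_i=s$, fix an $\F_q$-linear isomorphism $\F_q^s\cong V_i$, and use the columns of a parity-check matrix of the given $q$-ary $[r,r-s,d\ge h+2]$-linear code to pick $\Gb_{i1},\dots,\Gb_{ir}\in V_i$ any $h+1$ of which are $\F_q$-linearly independent. Form $D_i$ as the $(q,\ell)$-Moore matrix on $\Gb_{i1},\dots,\Gb_{ir}$ as in \eqref{eq:18}, let $H$ be the matrix \eqref{eq:8} with $A_i=(1,\dots,1)$ and these $D_i$, and let $\mC$ be the $\ell$-ary code with parity-check matrix $H$, where $\ell=q^m$.

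It then remains to check conditions (a) and (b) of Lemma~\ref{lem:2.2}(ii). Condition (a) is immediate since each $A_i$ generates an $[r,r-1,2]$-MDS code. For (b) I would reproduce the reduction in the proof of Theorem~\ref{thm:4.1}: after permuting columns and left-multiplying by an invertible matrix, linear independence of the $n\Gd+h=n+h$ selected columns reduces to invertibility of $H_3=(K_i-L_iC_i^{-1}E_i)_{i\in U}$, where $U=\{i:h_i\ge1\}$ has size at most $h$; by Lemma~\ref{lem:2.3} this is a square $(q,\ell)$-Moore matrix of size $h\times h$ (as $\sum_{i\in U}h_i=h$), so it is enough that its first-row entries be $\F_q$-linearly independent. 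Each such entry coming from block $i$ equals one of the chosen $\Gb_{ij}$ minus an $\F_q$-linear combination of the single element of block $i$ appearing in $L_i$ (here $C_i$ is $1\times1$ and $L_i$ has one column because $\Gd=1$). Since $\sum_{i\in U}V_i$ is direct and, within each block, the at most $h_i+1\le h+1$ relevant $\Gb$-elements are $\F_q$-linearly independent by construction, the argument of Theorem~\ref{thm:4.1} shows that these entries are $\F_q$-linearly independent, so $H_3$ is invertible. Hence $\mC$ is an $(N=nr,r,h,1)_\ell$-MR code, and the ``in particular'' clause follows by taking $m=m_q(n,s,h)$.

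I do not expect a genuine obstacle here: all the novelty is in noticing that $\Gd=1$ makes the local MDS ingredient trivial over any field, after which the proof coincides with that of Theorem~\ref{thm:4.5}. The only point requiring a little care is the bookkeeping of block sizes ($C_i$ being $1\times1$, $L_i$ having a single column), needed to confirm that $H_3$ really is $h\times h$ and that the per-block independence parameter required is $h+1$, which is matched by the distance $d\ge h+2$ of the supplied linear code.
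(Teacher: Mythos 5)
Your proposal is correct and takes essentially the same approach as the paper: the paper proves Corollary~\ref{cor:4.6} by the one-line observation that, for $\Gd=1$, the $[r,r-1,2]$-MDS code (all-ones parity check) exists over every $\F_q$, so the constraint $q\ge r-1$ in Theorem~\ref{thm:4.5} becomes vacuous. Your write-up correctly identifies this as the only place the constraint is used, and the bookkeeping you carry out ($C_i$ being $1\times1$, $L_i$ a single column, $H_3$ being $h\times h$, and the per-block requirement of $h+1$ linearly independent entries matching $d\ge h+2$) is accurate and consistent with the proof of Theorem~\ref{thm:4.5}.
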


To obtain MR codes with good parameters from Theorem \ref{thm:4.5} and Corollary \ref{cor:4.6}, we have to choose proper subspace direct sum systems and classical codes to concatenate. 

\begin{theorem}\label{thm:4.7} There exists an $(N=nr,r,h,{\Gd})_\ell$-MR code with the field size $\ell$ satisfying
\begin{itemize}
\item[{\rm (i)}] For $h+\Gd<r$  and $h+\Gd=\Omega(\log n/\log r)$, $\ell=O((2r)^{h(h+\Gd)})$;
\item[{\rm (ii)}] For $h+\Gd<r$, $\ell=O\left((2r)^{h(h+\Gd)}{n\choose h-1}\right)$.
\end{itemize}
\end{theorem}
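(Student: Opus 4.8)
The plan is to apply Theorem~\ref{thm:4.5} with a judicious choice of a subspace direct sum system and a classical concatenation code, controlling the field size $\ell=q^m$ by choosing $q$ to be a prime of size $\Theta(r)$. First I would fix a prime $q$ with $r\le q<2r$ (which exists by Bertrand's postulate), so that the constraint $q\ge r-1$ in Theorem~\ref{thm:4.5} is met and $q=\Theta(r)$. The concatenation ingredient is the same in both parts: since $h+\Gd<r$, I need a $q$-ary $[r,r-s,d\ge h+\Gd+1]$-linear code with $s$ as small as possible. The natural choice is a code coming from Lemma~\ref{lem:2.4} or Lemma~\ref{lem:2.5}: taking a Reed--Solomon-type subfield construction one gets such a code with $s=(h+\Gd)\cdot\lceil\log_q r\rceil$ roughly, i.e., $s=O\big((h+\Gd)\log_q r\big)$; since $q=\Theta(r)$, this gives $s=O(h+\Gd)$, in fact one can arrange $s\le h+\Gd$ when $r$ is a prime power by using the MDS code $[r,r-(h+\Gd),h+\Gd+1]$ directly (permissible because $h+\Gd<r\le q+1$ after possibly enlarging $q$). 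To keep the statement clean I would use $s=h+\Gd$.

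For part (ii) I would then feed this into Theorem~\ref{thm:4.5} together with the Gilbert--Varshamov subspace direct sum system from Lemma~\ref{lem:3.1} / Corollary~\ref{cor:3.2}, which gives $\SP_q(n,m,s,h)$ with $m=s+\big\lfloor\log_q\big(\sum_{i=0}^{h-1}\binom{n-1}{i}(q^s-1)^i\big)\big\rfloor$. Bounding the sum crudely by $\binom{n-1}{h-1}(q^s)^{h-1}\cdot h \le {n\choose h-1}q^{s(h-1)}$ up to constants, we get $q^m=O\big(q^{s}\cdot q^{s(h-1)}{n\choose h-1}\big)=O\big(q^{sh}{n\choose h-1}\big)$. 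Substituting $s=h+\Gd$ and $q<2r$ yields $\ell=q^m=O\big((2r)^{h(h+\Gd)}{n\choose h-1}\big)$, which is exactly (ii). For part (i), I would instead use the explicit MDS-based subspace direct sum system $\SP_q(n,hs,s,h)$ from Corollary~\ref{cor:3.8}, valid when $n\le 1+q^s$; with $s=h+\Gd$ this requires $n\le 1+q^{h+\Gd}$, i.e. $\log_q n\le h+\Gd$ up to an additive constant, equivalently $h+\Gd=\Omega(\log n/\log q)=\Omega(\log n/\log r)$, which is precisely the hypothesis of (i). Then $m=hs=h(h+\Gd)$ and $\ell=q^m=q^{h(h+\Gd)}=O\big((2r)^{h(h+\Gd)}\big)$, giving (i).

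The main obstacle I anticipate is the bookkeeping around the concatenation code: one must ensure that a $q$-ary $[r,r-s,\,d\ge h+\Gd+1]$ code with $s\le h+\Gd$ (or $s=O(h+\Gd)$) genuinely exists over the specific prime $q\in[r,2r)$ we picked, since MDS codes of length $r$ over $\F_q$ need $r\le q+1$, which holds here, but one has to be slightly careful when $r$ itself is just below $q$ versus requiring $r-1\le q$; the cleanest route is to invoke the Reed--Solomon code of length $r\le q+1$ and dimension $r-(h+\Gd)$, whose minimum distance is $h+\Gd+1$ by the Singleton bound, and check that $h+\Gd<r$ guarantees the dimension is positive. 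A secondary subtlety is verifying that in (i) the condition $n\le 1+q^{h+\Gd}$ follows from ``$h+\Gd=\Omega(\log n/\log r)$'' with the right constant; this is a matter of choosing the hidden constant in the $\Omega$, and since $q<2r$ we have $\log_q n = \log n/\log q \ge \log n/\log(2r)$, so $h+\Gd \ge \log_q n$ for a suitable constant, which is what Corollary~\ref{cor:3.8} needs. Everything else is a direct substitution into Theorem~\ref{thm:4.5} plus the crude binomial estimate above.
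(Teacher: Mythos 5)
Your proposal is correct and is essentially the paper's own argument: take a prime $q\in[r,2r)$, concatenate the $q$-ary $[r,r-(h+\Gd),h+\Gd+1]$ MDS code (existence from $h+\Gd<r\le q+1$) via Theorem~\ref{thm:4.5} with the MDS-based subspace direct sum system $\SP_q(n,h s,s,h)$ of Corollary~\ref{cor:3.8} for part (i), whose constraint $n\le 1+q^{h+\Gd}$ gives $h+\Gd=\Omega(\log n/\log r)$, and with the Gilbert--Varshamov system of Corollary~\ref{cor:3.2} for part (ii). As a bonus your derivation quietly corrects a small slip in the paper's wording, which refers to the ``first row'' of Table~II for (i) and ``second row'' for (ii) even though the formulas $\ell=q^{hs}$ (with $n\le 1+q^s$) and $\ell=q^{hs}\binom{n-1}{h-1}$ come from the second and first rows respectively, and which cites Theorem~\ref{thm:4.1} where the concatenation step is really Theorem~\ref{thm:4.5}.
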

\begin{proof} Note that in the proof of this theorem, we will replace $r$ by $s$ in the previous constructions of subspace direct sum systems, and then apply to Theorem \ref{thm:4.1}.
(i) Applying the $q$-ary linear code $[r,r-s,h+\Gd+1]$ with $s=h+\Gd$ and the subspace direct sum system given in the first row of Table II to Theorem \ref{thm:4.1}, we obtain a $\ell$-ary $(N=nr,r,h,\Gd)$-MR code with $r\le q+1$ and
\[\ell=q^{hs}=O((2r)^{h(h+\Gd)}).\]
Note that, in the above formula, we use the fact that there is a prime between $r$ and $2r$. As we require that $n\le 1+q^s=1+q^{h+\Gd}$, we get $h+\Gd=\Omega(\log_q n)=\Omega(\log n/\log r)$.

 (ii) Applying the $q$-ary linear code $[r,r-s,h+\Gd+1]$ with $s=h+\Gd$ and the subspace direct sum system given in  the second row of Table II to Theorem \ref{thm:4.1}, we obtain a $\ell$-ary $(N=nr,r,h,\Gd)$-MR code with $r\le q+1$ and
\[\ell=q^{hs}{n-1\choose h-1}=O\left((2r)^{h(h+\Gd)}{n\choose h-1}\right).\]
This completes the proof.
\end{proof}

Let us now consider MR codes for $\Gd=1$.
\begin{theorem}\label{thm:4.8} If $h+1< r$, then we have
 an $(N=nr,r,h,1)_\ell$-MR code with the field size $\ell$ satisfying
 \begin{itemize}
\item[{\rm (i)}] For $h=\Omega(\log n/\log r)$,   $\ell=(r+1)^{h\lceil(h+1)/2\rceil}$;
\item[{\rm (ii)}] $\ell=O\left((r+1)^{h\lceil(h+1)/2\rceil}{n\choose h-1}\right)$.
\end{itemize}
\end{theorem}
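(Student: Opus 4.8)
The plan is to specialize Corollary~\ref{cor:4.6} to $q=2$, feeding it two ingredients: (a) a binary linear code of length $r$, redundancy $s$ and minimum distance at least $h+2$ (this is the distance $h+\Gd+1$ required when $\Gd=1$), and (b) a {subspace direct sum system} $\SP_2(n,m,s,h)$; the corollary then outputs an $(N=nr,r,h,1)_\ell$-MR code with $\ell=2^m$, so we want both $s$ and $m$ as small as possible. For ingredient (a) I would use the binary BCH codes of Lemma~\ref{lem:2.5}: taking the designed parameter there equal to $\lceil(h+1)/2\rceil$ produces minimum distance $2\lceil(h+1)/2\rceil+1\ge h+2$ and redundancy $\lceil(h+1)/2\rceil\log_2(r+1)$, so we set
\[
s=\left\lceil\frac{h+1}{2}\right\rceil\log_2(r+1),
\]
at least when $r+1$ is a power of $2$; otherwise one takes $r'=\lceil\log_2(r+1)\rceil$, uses the $[2^{r'}-1,\,\cdot\,,\ge h+2]$ BCH code and shortens it to length $r$, which merely rounds the exponent up without changing its order. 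The hypothesis $h+1<r$ gives $h+2\le r$, so such a code exists; and since $\Gd=1$, Corollary~\ref{cor:4.6} carries no constraint $r\le q+1$, so $q=2$ is admissible.

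For part (i), I would take ingredient (b) to be the MDS-based construction of Corollary~\ref{cor:3.8} with its locality parameter set to $s$ instead of $r$: as long as $n\le 1+2^s$ there is a {subspace direct sum system} $\SP_2(n,hs,s,h)$, so $m=hs$. Feeding this into Corollary~\ref{cor:4.6} yields an $(N=nr,r,h,1)_\ell$-MR code with
\[
\ell=2^{hs}=2^{\,h\lceil(h+1)/2\rceil\log_2(r+1)}=(r+1)^{h\lceil(h+1)/2\rceil}.
\]
The length condition $n\le 1+2^s=1+(r+1)^{\lceil(h+1)/2\rceil}$ rearranges to $\lceil(h+1)/2\rceil\ge\log_{r+1}(n-1)$, i.e.\ $h=\Omega(\log n/\log r)$, which is precisely the stated hypothesis of part (i); as the MDS code and the BCH code are both explicit, so is the resulting MR code.

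For part (ii), I would drop the length restriction by replacing ingredient (b) with the Gilbert--Varshamov construction of Corollary~\ref{cor:3.2}, again with locality $s$ and $q=2$, which furnishes $\SP_2(n,m,s,h)$ with $m=s+\lfloor\log_2(\sum_{i=0}^{h-1}\binom{n-1}{i}(2^s-1)^i)\rfloor$. Bounding the sum by $O(\binom{n}{h-1}2^{s(h-1)})$ (the term $i=h-1$ dominates up to a factor polynomial in $h$) gives $2^m=O(\binom{n}{h-1}2^{sh})$, whence Corollary~\ref{cor:4.6} delivers an $(N=nr,r,h,1)_\ell$-MR code with
\[
\ell=O\left(\binom{n}{h-1}(r+1)^{h\lceil(h+1)/2\rceil}\right),
\]
as claimed.

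The two displayed estimates above are routine bookkeeping; the one genuine point is ingredient (a), namely squeezing the redundancy of the auxiliary binary distance-$(h+2)$ code down to $\lceil(h+1)/2\rceil\log_2(r+1)$, which is exactly the sharpness built into binary BCH codes in Lemma~\ref{lem:2.5}, and then checking that shortening in the non-power-of-$2$ case does not inflate the order of the exponent.
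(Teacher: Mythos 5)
Your proposal is correct and follows essentially the same route as the paper: both parts plug the binary BCH code of Lemma~\ref{lem:2.5} with redundancy $s=\lceil(h+1)/2\rceil\log_2(r+1)$ into Corollary~\ref{cor:4.6}, paired with the MDS-based system of Corollary~\ref{cor:3.8} for (i) and the Gilbert--Varshamov system of Corollary~\ref{cor:3.2} for (ii). Your extra remark on shortening when $r+1$ is not a power of two is a sensible addition (the paper's proof tacitly assumes $r+1$ is a power of two, matching the exact form $(r+1)^{h\lceil(h+1)/2\rceil}$ claimed), but it does not change the substance of the argument.
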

\begin{proof} (i)  Consider the binary $[r,r-\lceil(h+1)/2\rceil\log_2(r+1),h+2]$-BCH codes with $r+1$ being a power of $2$. By Corollary~\ref{cor:3.8}, there exists a binary subspace direct sum $\SP_2(n,hs,s,h)$ with $n\le 1+2^s$. Putting $s=\lceil(h+1)/2\rceil\log_2(r+1)$ and applying Corollary \ref{cor:4.6}, we obtain an $(N=nr,r,h,1)_\ell$-MR code with the field size $\ell=2^{hs}=(r+1)^{h\lceil(h+1)/2\rceil}$.
Note that we have $n\le 1+2^s=(r+1)^{\lceil(h+1)/2\rceil}$. This gives $h=\Omega(\log n/\log r)$.

(ii) Consider the binary $[r,r-\lceil(h+1)/2\rceil\log_2(r+1),h+2]$-BCH codes with $r+1$ being a power of $2$ and a binary subspace direct sum $\SP_2(n,m,s,h)$ given in the first row of Table II. Putting $s=\lceil(h+1)/2\rceil\log_2(r+1)$ and applying Corollary \ref{cor:4.6}, we obtain an $(N=nr,r,h,1)_\ell$-MR code with the field size
\[\ell=O\left(2^{hs}{n\choose h-1}\right)=O\left((r+1)^{h\lceil(h+1)/2\rceil}{n\choose h-1}\right).\]
\end{proof}


\begin{thebibliography}{10}
\bibitem{Bla13}
M. Blaum, {\it Construction of PMDS and SD codes extending RAID 5,}  https://arxiv.org/abs/1305.0032,
2013.
\bibitem{BHH13}
M. Blaum, J. L. Hafner, and S. Hetzler, {\it Partial-MDS codes and their application to RAID type of architectures,} IEEE Trans. Inform. Theory, vol. 59(2013), 4510--4519.
\bibitem{BPSY16}
M. Blaum, J. S. Plank, M. Schwartz, and E. Yaakobi, {\it Construction of partial MDS and sector-disk codes with two global parity symbols,} IEEE
Trans. Inform. Theory, vol. 62(2016), 2673--2681.
\bibitem{CMST21} H. Cai, Y. Miao, M. Schwartz and X. Tang, {\it
A Construction of Maximally Recoverable Codes with Order-Optimal Field Size,} https://arxiv.org/abs/2011.13606, 2021.
\bibitem{CK16}
 G. Calis and O. O. Koyluoglu, {\it A general construction for PMDS codes,} IEEE Communications Letters, vol. 62(2016), 452--455.
 \bibitem{CHL07} M. Chen, C. Huang and J. Li, {\it On maximally recoverable property for
multi-protection group codes,} in IEEE International Symposium on Information Theory
(ISIT), 486--490, 2007.

 \bibitem{D95} I. Dumer, {\it Nonbinary double error-correcting codes designed by means
of algebraic varieties,} IEEE Trans. on Inform. Theory, vol 41(1995), 1657--1666.


\bibitem{GS95a} A. Garcia and H. Stichtenoth, {\it A tower of
Artin-Schreier extensions of function fields attaining the
Drinfeld-Vladut bound,} { Invent. Math.} {vol. 121}, 211--222, 1995.
\bibitem{GHK17} P. Gopalan, G. Hu, S. Kopparty, S. Saraf, C. Wang and S. Yekhanin, {\it Maximally recoverable codes for grid-like topologies,} in 28th Annual Symposium on Discrete Algorithms (SODA), 2092--2108, 2017.

 \bibitem{GHSY12}
P. Gopalan, C. Huang, H. Simitci and S. Yekhanin, {\it On the locality of codeword symbols. IEEE Transactions on Information Theory,} vol. 58(2012), 6925--6934.


 \bibitem{GG21} S. Gopi and V. Guruswami, {\it Improved Maximally Recoverable LRCs using Skew Polynomials,} Electronic Colloquium on Computational Complexity, Report No. 25(2021).

 \bibitem{GHKSWY17}
P. Gopalan, G. Hu, S. Kopparty, S. Saraf, C. Wang, and S. Yekhanin, {\it Maximally recoverable codes for grid-like topologies,} in Proceedings of the Twenty-Eighth Annual ACM-SIAM Symposium on Discrete Algorithms. SIAM, 2017, 2092--2108.
  \bibitem{GHJY14}
 P. Gopalan, C. Huang, B. Jenkins, and S. Yekhanin, {\it Explicit maximally recoverable codes with locality,} IEEE Trans. Inform. Theory, vol. 60(2014), 5245--5256.

\bibitem{Go81} V.D. Goppa, Codes  on algebraic curves (Russian), {\it Dokl. Akad. Nauk SSSR} {259}, 1289--1290, 1981.

  \bibitem{GG21}
S. Gopi and V. Guruswami, {\it Improved Maximally Recoverable LRCs using Skew Polynomials, } Electronic Colloquium on Computational Complexity, Report No. 25 (2021).

 \bibitem{GGY20}
S. Gopi, V. Guruswami, and S. Yekhanin, {\it Maximally recoverable LRCs: A field size lower bound and constructions for few heavy parities,} IEEE
Trans. Inform. Theory, vol. 66(2020), 6066--6083.
 \bibitem{GJX20}
 V. Guruswami, L. Jin, and C. Xing, {\it Constructions of maximally recoverable local reconstruction codes via function fields,} IEEE Trans. Inform. Theory,
vol. 66(2020), 6133--6143.


\bibitem{GYBS19}
 R. Gabrys, E. Yaakobi, M. Blaum, and P. H. Siegel, {\it Constructions of partial MDS codes over small fields,} IEEE Trans. Inform. Theory, vol. 65(2019), 3692--3701.

 \bibitem{HCL07}
C. Huang, M. Chen and J. Li, {\it Pyramid codes: flexible schemes to trade
space for access efficiency in reliable data storage systems,} in 6th IEEE International
Symposium on Network Computing and Applications (NCA 2007), 79--86, 2007.


\bibitem{HSX12}
C. Huang, H. Simitci, Y. Xu, A. Ogus, B. Calder, P.
Gopalan, J. Li and S. Yekhanin, {\it Erasure coding in Windows Azure Storage,}
in USENIX Annual Technical Conference (ATC), 15--26, 2012.

 \bibitem{HPYW16}
 L. Holzbaur, S. Puchinger, E. Yaakobi, and A. Wachter-Zeh, {\it Partial MDS codes with local regeneration,}  https://arxiv.org/abs/2001.04711, 2020.
  \bibitem{HY16}
G. Hu and S. Yekhanin, {\it New constructions of SD and MR codes over small finite fields,} in 2016 IEEE International Symposium on Information
Theory (ISIT), 2016, 1591--1595.
 \bibitem{Mar20}
U. Mart\'{\i}nez-Pe\~{n}as, {\it A general family of MSRD codes and PMDS codes with
smaller field sizes from extended Moore matrices,} CoRR, abs/2011.14109, 2020.
 \bibitem{MK19}
 U. Mart\'{\i}nez-Pe\~{n}as and F. R. Kschischang, {\it Universal and dynamic locally repairable codes with maximal recoverability via sum-rank codes,} IEEE Trans. Inform. Theory, vol. 65(2019), 7790--7805.
\bibitem{PD14} D. Papailiopoulos and A. Dimakis, {\it Locally repairable codes,} IEEE Transactions on Information Theory, vol. 60(2014), 5843--5855.

\bibitem{CT} S. Ling and C. Xing, {\it Coding Theory: A First Course}, Cambridge, 2004.

\bibitem{SAP13}
M. Sathiamoorthy, M. Asteris, D. S. Papailiopoulos, A. G. Dimakis, R. Vadali, S. Chen and D. Borthakur, {\it XORing
elephants: novel erasure codes for big data,} in Proceedings of VLDB Endowment
(PVLDB), 325--336, 2013.
\bibitem{TPD16}
I. Tamo, D. Papailiopoulos and A. G. Dimakis, {\it Optimal locally
repairable codes and connections to matroid theory,} IEEE Transactions on Information
Theory, vol. 62(2016), 6661--6671.
\bibitem{MK19}
U. Mart\'{\i}nez-Pe\~{n}as and F. R. Kschischang, {\it Universal and dynamic locally repairable codes with maximal recoverability via sum-rank codes,} IEEE
Trans. Inform. Theory, vol. 65(2019),  7790--7805.

\bibitem{St09}  H. Stichtenoth, ``{Algebraic Function Fields and Codes}", Graduate Texts in Mathematics {254}, Springer Verlag, 2009.

\bibitem{YD04} S. Yekhanin and I. Dumer, {\it Long Nonbinary Codes Exceeding the Gilbert-Varshamov
Bound for any Fixed Distance,} IEEE Trans. on Information Theory, vol.10(2004), 2357--2362.


\end{thebibliography}
\end{document}